\newcommand{\ra}[1]{\renewcommand{\arraystretch}{#1}}
\newcommand{\etal}{\textit{et al.\ }}
\DeclareMathOperator*{\argmax}{arg\,max}
\def \be{\begin{equs}}
\def \ee{\end{equs}}
\def \P{\mathbb{P}}
\def \E{\mathbb{E}}
\def \tmix{\tau_{\mathrm{mix}}}
\newtheorem{remark}[theorem]{Remark}
\newtheorem{defn}[theorem]{Definition}
\newtheorem{thm}{Theorem}
\author{Patrick R.\ Conrad\footnotemark[1] \and Andrew D.\ Davis\footnotemark[1] \and Youssef M.\ Marzouk\footnotemark[1] \and Natesh S.\ Pillai\footnotemark[2]  \and Aaron Smith\footnotemark[3]}
\definecolor{purple}{RGB}{129,15,124}
\newcommand{\change}[1]{#1} 
\title{Parallel local approximation MCMC for expensive models}
\begin{document}

\maketitle
\newcommand{\slugmaster}{\slugger{juq}{2018}{}}

\footnotetext[1]{Massachusetts Institute of Technology, Cambridge, MA 02139, USA, \texttt{\{prconrad,\ davisad,\ ymarz\}@mit.edu}. }
\footnotetext[2]{Harvard University, Cambridge, MA
  02138, USA, \texttt{pillai@stat.harvard.edu}.}
\footnotetext[3]{University of Ottawa, Ottawa, ON
K1N 7N5, Canada, \texttt{asmi28@uottawa.ca}.}

\renewcommand{\thefootnote}{\arabic{footnote}}

\begin{abstract}
Performing Bayesian inference via Markov chain Monte Carlo (MCMC) can be exceedingly expensive when posterior evaluations invoke the evaluation of a computationally expensive model, such as a system of partial differential equations. In recent work \cite{Conrad2014}, we described a framework for constructing and refining \textit{local approximations} of such models during an MCMC simulation. These posterior--adapted approximations harness regularity of the model to reduce the computational cost of inference while preserving asymptotic exactness of the Markov chain. Here we describe two extensions of that work. 
\change{First,} we prove that samplers running in parallel can collaboratively construct a shared posterior approximation while ensuring ergodicity of each associated chain, providing a novel 
opportunity for exploiting parallel computation in MCMC.
\change{Second,} focusing on the Metropolis--adjusted Langevin algorithm, we describe how a proposal distribution can successfully employ gradients and other relevant information extracted from the approximation.
We investigate the practical performance of our approach using two challenging inference problems, the first in subsurface hydrology and the second in glaciology. Using local approximations constructed via parallel chains, we successfully reduce the run time needed to characterize the posterior distributions in these problems from days to hours and from months to days, respectively, dramatically improving the tractability of Bayesian inference.  \end{abstract}

\begin{keywords}
Markov chain Monte Carlo,
parallel computing,
Metropolis-adjusted Langevin algorithm,
Bayesian inference,
approximation theory,
local regression,
surrogate modeling
\end{keywords}

\begin{AMS}65C40, 62F15, 60J22\end{AMS}

\pagestyle{myheadings}
\thispagestyle{plain}

\maketitle

\pagestyle{plain} 

\section{Introduction}

Markov chain Monte Carlo (MCMC) is a powerful tool for performing Bayesian inference, but can be computationally prohibitive in many settings, especially when posterior density evaluations involve a computationally expensive step. For instance, applications in the physical sciences often require partial differential equation \textit{forward models}, evaluated using numerical solvers with nontrivial run times. When these solvers must be invoked with each posterior evaluation, direct sampling with MCMC can become intractable.

\change{To reduce this computational burden, 
a standard approach is} to construct an approximation or ``surrogate'' of the forward model or likelihood function, and then to sample from (or otherwise characterize) the posterior distribution induced by this approximation \cite{Sacks1989, Kennedy2001, Rasmussen2003, Santner2003, Marzouk2007, Marzouk2009a, Bliznyuk2012, Joseph2012, Li2012,Chen2015,CuiWM2016}.  Although such approaches can be quite effective at reducing computational cost, they may be difficult to use in practice---in part because they separate the construction of the surrogate from the subsequent inference procedure. 
%
Approximation of the forward model biases posterior expectations \cite{cotter2010} in a way that cannot easily be quantified. 
It is then difficult to decide how much computational effort should be devoted to surrogate construction, and how to balance the resulting biases with the statistical errors of posterior sampling. 
%
%
Alternatives such as delayed-acceptance MCMC \cite{Christen2005,cui2011wrr} yield asymptotically exact sampling, but surrender potential speedups by requiring at least one evaluation of the forward model for each accepted sample.
In recent work \cite{Conrad2014}, we demonstrated that surrogate construction and posterior exploration can instead be \textit{joined}, yielding a framework for incrementally and infinitely refining a surrogate during MCMC sampling. This framework allows the approximation to be tailored to the problem---e.g., made most accurate in regions of high posterior probability---while guaranteeing that the associated Markov chain asymptotically samples from the \textit{exact} posterior distribution of interest. Empirical studies on problems of moderate dimension showed that the number of expensive posterior evaluations per MCMC step can be reduced by orders of magnitude, with no discernable loss of accuracy in posterior expectations. 

This work describes two key extensions of the framework in \cite{Conrad2014}. First, we show that our approximation scheme enables a novel type of MCMC parallelism: concurrent chains can collaboratively develop a \textit{shared} approximation. Effectively exploiting parallel computation in MCMC is often challenging because the core algorithm is inherently sequential, but our strategy directly deploys parallel resources to address the key performance bottleneck: the cost of repeatedly running the forward model.

Second, while our previous work showed how to build a convergent approximation of the target probability density, it did not support the idea of using this approximation to construct a proposal distribution. MCMC performance is highly dependent on the choice of proposal, but sophisticated proposals, such as the Metropolis-adjusted Langevin algorithm (MALA) and its manifold variants \cite{Girolami2011}, can be expensive to apply because they require gradients (and possibly higher derivatives) of the forward model. This derivative information is often expensive or impossible to compute directly, but is trivial to extract from an approximation. Intuitively, it should then be possible to use our approximation framework to greatly reduce the costs of such proposals. Here we do exactly that, extending our previous theoretical results to show that the Monte Carlo estimates obtained by our algorithm converge to the correct value, as long as the convergence of our approximation to the target distribution yields convergence of the associated approximate Markov transition kernel in a suitably strong norm. As an example, we show how to use simplified manifold MALA within our local approximation scheme, and prove that the resulting stochastic process is convergent in a representative case.

Finally, we construct two 
inference problems that are representative of interesting scientific queries, that involve computationally expensive forward models (such that na\"{i}ve use of the model in sampling would take days or months), and that have nontrivial posterior structure which must be characterized using MCMC. 
The first is a problem in groundwater hydrology, where a subsurface conductivity field is inferred from observations of tracer transport; it is a more complex and realistic version of the linear elliptic PDE inverse problem \cite{Dashti2011}, combining an elliptic equation for the hydraulic head with another PDE governing tracer dispersion \cite{Dupuit1863, Matott2012}. 
The second problem is drawn from glaciology: here we infer the basal friction parameters of a shallow-shelf ice stream model \cite{MacAyeal1989, MacAyeal1993, MacAyeal1997} from observations of surface ice velocity. 
%
Our numerical experiments evaluate MCMC efficiency, accuracy, and wallclock time, and benchmark the parallel performance of our algorithms. Results demonstrate strong performance of our approach; for example, inference in the ice stream model becomes tractable, with the time needed to characterize the posterior reduced from roughly two months to just over a day. 

The remainder of this paper is organized as follows. Section~\ref{sec:review} reviews the basic algorithmic framework of local approximation (LA) MCMC. Section~\ref{sec:parallel} presents and analyzes the shared construction of approximations for parallel MCMC. Section~\ref{sec:proposal} describes the use of local approximations in the proposal scheme, and Section~\ref{sec:numerical} describes our numerical experiments. Proofs of the main convergence results, along with certain algorithmic details, are deferred to the appendices.


\section{Review of local approximation MCMC}
\label{sec:review}

We are interested in Bayesian inference problems with posterior densities of the form
\begin{equation*}
p(\theta|\mathbf{d} ) \propto \ell(\theta | \mathbf{d}, \mathbf{f})p(\theta),
\end{equation*}
for parameters $\theta \in \Theta \subseteq \mathbb{R}^d$, data $\mathbf{d} \in \mathbb{R}^n$, a forward model $\mathbf{f}: \Theta \rightarrow \mathbb{R}^n$, and probability densities specifying the prior $p(\theta)$ and likelihood function $\ell$. The forward model may enter the likelihood function in various ways. For instance, if $\mathbf{d} = \mathbf{f}(\theta) + \eta$, where $\eta$ represents some measurement or model error with probability density $p_{\eta}$, then $\ell(\theta |  \mathbf{d}, \mathbf{f}) = p_{\eta} ( \mathbf{d} - \mathbf{f}(\theta) )$.

Assume that the forward model is both \textit{computationally expensive} and a \textit{black box}, so that we cannot inspect or modify it. In this setting, standard approaches to MCMC are likely to be limited by the computational expense of evaluating the forward model at every step of the chain. Our approach addresses this cost by storing the results of each model evaluation in a set $\mathcal{S}_t \coloneqq \{(\theta_i, \mathbf{f}(\theta_i))\}_{i=1}^{n_t}$ and reusing them. The stochastic process $\{ \theta_{t} \}_{t \geq 0}$ proposed in \cite{Conrad2014} evolves by drawing new points from some proposal kernel $q$ and accepting or rejecting the proposed move according to an approximation of the forward model, $\tilde{\mathbf{f}}_t$, constructed from the set $\mathcal{S}_{t}$. During the simulation of this process, the algorithm carefully chooses new points at which to run the forward model, enlarging $\mathcal{S}_t$ and thus improving $\tilde{\mathbf{f}}_t$; \change{we refer to enlargement of $\mathcal{S}_t$ as ``refinement.''}
Intuitively, it would seem that if $\tilde{\mathbf{f}}_t$ converges to $\mathbf{f}$ in an appropriate sense, then the sequence $\{ \theta_{t} \}_{t \geq 0}$ might asymptotically behave like the usual Metropolis-Hastings chain with proposal $q$ and forward model $\mathbf{f}$. Indeed, the algorithm we constructed in \cite{Conrad2014} has these properties. 

We obtain a converging sequence of approximations $\tilde{\mathbf{f}}_t$ by constructing the approximation locally---that is, constructing $\tilde{\mathbf{f}}_t(\theta)$ using \change{only the elements of $\mathcal{S}_t$ whose input values $\theta_i$ lie within a distance $R$ of $\theta$. The radius $R$ is selected so that this subset contains a fixed number of points $N$. The value of $N$ depends on the functional form of the approximation; for instance, if  $\tilde{\mathbf{f}}_t$ is a local quadratic approximation, we need at least $(d+1)(d+2)/2 \eqqcolon N_{\text{def}}$ points to fully determine its coefficients.\footnote{\change{In practice, we often select $N = \sqrt{d} N_{\text{def}}$ to improve the conditioning of the associated least squares system. More details are given in \cite{Conrad2014}.}}}
Local approximations are relatively straightforward to analyze in that they typically converge whenever the sample set $\mathcal{S}_t$  becomes denser, \change{thus} allowing $R \to 0$. 
(Regularity conditions on $\mathbf{f}$ sufficient for convergence in the case of local polynomial approximations, for example, are given in \cite{Conn2009}.) These general conditions for convergence allow us to promote efficiency by aggressively tailoring $\mathcal{S}_t$ during sampling, while still maintaining asymptotic exactness of the overall MCMC. The resulting algorithm is straightforward to use, since its adaptivity allows users to treat it much like standard adaptive MCMC algorithms: the behavior of the chain can be monitored for convergence, which in our case reflects both the exploration of the posterior and the convergence of the approximation. Our work thus differs from previous efforts using global approximations to accelerate inference \cite{Marzouk2007,Bliznyuk2012,Joseph2012,Santner2003}, where the entire set $\mathcal{S}_t = \mathcal{S}_0$ is constructed as a preprocessing step and is used to build a single high-order approximation. In these methods it is difficult to choose how many samples $\mathcal{S}_0$ should contain or how to monitor the accuracy of the overall sampling.

An illustration of the algorithm is given in Figure \ref{fig:refinementCartoon}. At early times, the samples are sparse, leading to local models constructed over large regions, depicted by large balls, rendering them relatively inaccurate. As MCMC progresses, refinements increase the density of the sample set in regions of high posterior probability, shrinking the local neighborhoods and increasing the quality of approximations. Model runs do not lie on any structured grid and are generally contained within regions of the parameter space that are relevant to the inference problem, thus enhancing efficiency whenever the posterior is concentrated.

\begin{figure}[htb]
\centering
\subfloat[Early times.]{
\def\svgwidth{0.4\textwidth}
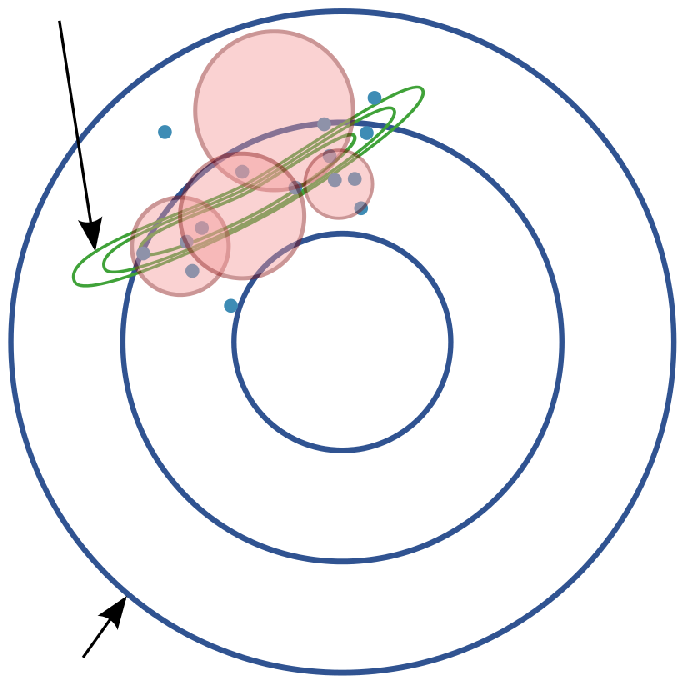}
\subfloat[Late times.]{
\def\svgwidth{0.4\textwidth}
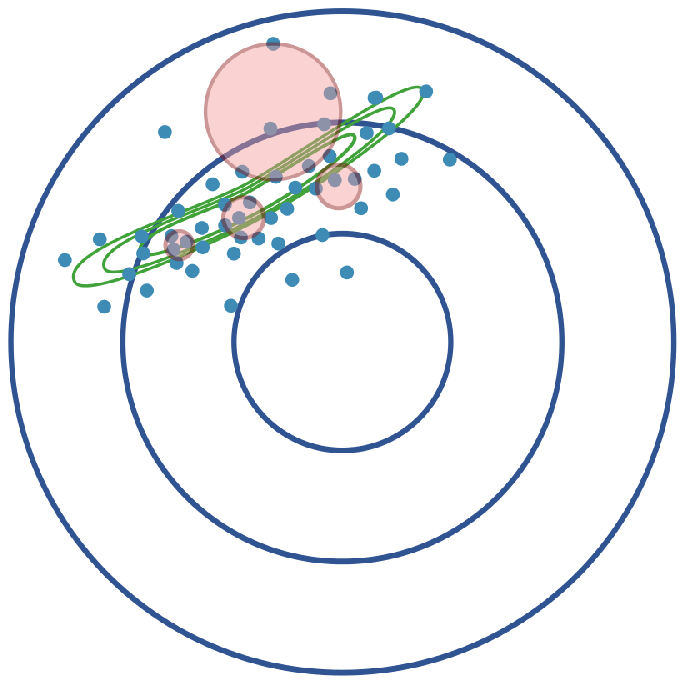}
\caption{Schematic of the behavior of local approximation MCMC. The balls are centered at locations where local approximations might be \change{evaluated, and their radii are chosen to contain the $N$ nearest points, used to build the approximation. The accuracy of a local approximation generally increases as this ball size shrinks. At early times, the sample set $\mathcal{S}_t$ is sparse and thus local approximations are built over relatively large balls, such that their accuracy is limited. At later times, refinements enrich the sample set in regions of high posterior probability, allowing the balls to shrink and the approximations to become more accurate.}}
\label{fig:refinementCartoon}
\end{figure}

We now review a sketch of our approximate MCMC algorithm, given in Algorithm \ref{alg:algSketch}. \change{Please see  Appendix~\ref{apx:code} and Algorithm~\ref{alg:algOverviewGeneral} for a more complete description of the algorithm; additional details can be found in our previous work \cite{Conrad2014}.} 
The stochastic process $\{ \theta_{t} \}_{t \geq 0}$ is produced by the method \textsc{RunChain}, which applies the transition kernel $K_t$ repeatedly. The transition kernel is provided with the current state of the chain $\theta_t$; the current set of samples $\mathcal{S}_t$; the inference problem, as defined by $\ell$, $\mathbf{d}$, $p$, and $\mathbf{f}$; and a symmetric translation--invariant \change{proposal distribution $q$.} 
The kernel uses the current point of the chain, $\theta^-$, to draw a proposal, $\theta^+$. It forms local approximations near these points, $\tilde{\mathbf{f}}^+$ and $\tilde{\mathbf{f}}^-$, respectively, based on nearby samples contained in $\mathcal{S}_t$. Next, it computes the acceptance probability $\alpha$ in the usual way, substituting the approximations for the true forward model. Then, the algorithm optionally refines the sample set by choosing a new point $\theta^\ast$ and running the forward model at that location.

\algrenewcomment[1]{\hfill\makebox[0.33\linewidth][l]{\(\triangleright\) #1}}

\begin{algorithm}
\caption{Sketch of approximate Metropolis-Hastings algorithm}
\label{alg:algSketch}
\begin{algorithmic}[1]
\Procedure{\textsc{RunChain}}{$\theta_{1}, \mathcal{S}_{1}, \ell, \mathbf{d}, p, \mathbf{f}, q, T$}
\For{$t = 1 \ldots T$}
	\State  $(\theta_{t+1}, \mathcal{S}_{t+1}) \gets K_t(\theta_t, \mathcal{S}_t, \ell, \mathbf{d}, p, \mathbf{f}, q)$
\EndFor
\EndProcedure
\Statex
\Procedure{$K_t$}{$\theta^-, \mathcal{S},  \ell, \mathbf{d}, p, \mathbf{f}, q$}
	\State Draw proposal $\theta^+ \sim q(\theta^-, \cdot)$ 

\State Compute approximate models $\tilde{\mathbf{f}}^+$ and $\tilde{\mathbf{f}}^-$, \label{alg:algSketch:construct} valid near $\theta^+$ and $\theta^-$, respectively.
\State Compute acceptance probability \label{alg:algSketch:alpha}
 $\alpha \gets \min \left(1,\frac{\ell(  \theta | \mathbf{d} ,\tilde{\mathbf{f}}^+)p(\theta^+)}{\ell(\theta  | \mathbf{d}  ,\tilde
{\mathbf{f}}^-)p(\theta^-)} \right)$ 
\If{approximation needs refinement near $\theta^-$ or $\theta^+$}
\State Select new point $\theta^\ast$ and grow $\mathcal{S} \gets \mathcal{S} \cup (\theta^\ast, \mathbf{f}(\theta^\ast))$. Repeat from Line \ref{alg:algSketch:construct}.
\Else
\State Draw $u \sim \text{Uniform}(0,1)$. If $u < \alpha$, \textbf{return} $(\theta^+, \mathcal{S})$; else \textbf{return} $(\theta^-, \mathcal{S})$.
\EndIf
\EndProcedure

\end{algorithmic}
\end{algorithm}

Choosing when and where to refine $\mathcal{S}_t$ is critical to the performance of the overall algorithm. We combine two criteria to decide when to refine the approximation. First, the approximation is refined near $\theta^-$ or $\theta^+$ with equal probabilities $\beta_t$, such that the expected number of refinements diverges as $t \to \infty$. This criterion is sufficient for convergence of the algorithm, as detailed in \cite{Conrad2014}. The sequence $(\beta_t)$ may be difficult to tune in practice, however. Thus we complement the random refinement criterion with a cross--validation strategy that triggers refinement whenever the estimated error in the acceptance probability $\alpha$ \change{(due to the approximation of $\mathbf{f}$)} appears too large. This latter threshold for refinement is tightened with increasing $t$, pushing the approximation to improve as the chain lengthens. Although the cross--validation criterion is not sufficient for convergence of the algorithm, it appears efficient in practice, and we use it in conjunction with the random refinement strategy.
When refinement is needed near either of $\theta^-$ or $\theta^+$, we do not simply run the model at that point, since doing so would introduce clusters into $\mathcal{S}_t$, degrading the quality of local approximations. Instead, we use a local space-filling design strategy to choose a distinct but nearby point $\theta^\ast$ at which to run the model.

\section{Sharing local approximations for parallel MCMC}
\label{sec:parallel}


The na\"{i}ve approach to parallelizing MCMC is simply to run several independent chains in parallel. Although running parallel chains facilitates useful convergence diagnostics \cite{Cowles1996, Brooks1998}, practical scaling in highly parallel environments is limited because of the serial nature of MCMC and the replication of transient behavior across multiple chains \cite{Rosenthal2000}.

More sophisticated strategies for parallel MCMC exchange information between the chains, for example by proposing moves to states discovered by other chains \cite{Craiu2009}. Population MCMC algorithms explore a family of tempered distributions with parallel chains, so that swapping states between the chains can provide long-range moves \cite{Cappe2004}. These techniques attempt to improve the mixing time of the Markov chain, and when successful, may provide superior performance to the na\"{i}ve parallelization \cite{green2015bayesian}. Other constructions, e.g.,  \cite{Calderhead2014}, propose multiple points in parallel and try to make use of all these points in determining subsequent steps of a single chain.

Any of these parallel approaches requires repeated evaluations of the forward model, however, which can dominate the overall cost of the algorithm. If multiple copies of Algorithm \ref{alg:algSketch} are run in parallel, a natural idea is to allow them to collaborate by sharing a common set of evaluations $\mathcal{S}_t$. That is, whenever one chains performs refinement, the result is shared asynchronously with all the chains; hence each chain receives additional model evaluations ``for free.'' 
Since the limiting computational cost in our context lies in constructing $\mathcal{S}_t$, parallelizing this process should directly impact the real-world performance of the sampler \change{during the stationary and even the transient phases of the chains.}
\change{With regard to the latter point, we note that parallelizing $\mathcal{S}_t$ can reduce the number of model evaluations that are triggered by each individual chain during its initial transient phase.} 

Although it should be straightforward to combine \change{the parallel construction of $\mathcal{S}_t$ with the other parallelization strategies described above}, we leave that as future work.

\subsection{Convergence of the parallel algorithm}

\change{Recall that our local approximation MCMC algorithm is detailed in Appendix~\ref{apx:code}; see, in particular, Algorithm \ref{alg:algOverviewGeneral}. Below we will show that the sufficient conditions for convergence of a single-chain version of Algorithm \ref{alg:algOverviewGeneral}, as described in \cite{Conrad2014} and reproduced below in Definition~\ref{DefSomeAssumptions}, are also sufficient conditions for the convergence of the parallel version.} The arguments given in \cite{Conrad2014} are straightforward to extend because \change{we have chosen} conditions where enlarging the sample set $\mathcal{S}_t$ is always helpful; thus the additional refinements contributed by parallel chains cannot hinder convergence. Rather than repeating the entire discussion of convergence from that paper, here we merely extend the simplest and weakest convergence result---for a single chain on a compact state space \cite[Theorem 3.4]{Conrad2014}---to the case of parallel chains. We refer the reader to \cite[Theorem 3.3]{Conrad2014} for related conditions and a treatment of non-compact state spaces that can similarly be extended to the parallel case.

We require some notation before stating the result. \change{Let $\mathcal{L}(X)$ denote the distribution} of a random variable $X$. For fixed $\epsilon > 0$, we say that $\mathcal{S} \subset \Theta$ is an $\epsilon$-cover of $\Theta$ if $\sup_{\theta \in \Theta} \min_{s \in \mathcal{S}} \| \theta - s \|_{2} < \epsilon$. We note that if Steps 13--21 and Step 23 are removed from Algorithm \ref{alg:algOverviewGeneral}, and all references to $\mathcal{S}_t$ are replaced by a reference to a single set $\mathcal{S}$, then the sequence $\{ \theta_{t} \}_{t \geq 0}$ constructed by running the modified algorithm is a Markov chain. We use the $\mathcal{S}$ subscript to denote all approximate objects associated with this Markov chain (e.g., $K_{\mathcal{S}}$ is the associated transition kernel, $r_{\mathcal{S}}$ is the proposal function from Step 9 of Algorithm \ref{alg:algOverviewGeneral} and $q_{\mathcal{S}}$ is the associated proposal density, $p_{\mathcal{S}}:= \ell(\theta \vert \mathbf{d}, \tilde{\mathbf{f}}) p(\theta)$ is the approximation to $\ell(\theta | \mathbf{d}, \mathbf{f}) p(\theta)$ used in Step 11 of Algorithm \ref{alg:algOverviewGeneral}, and $\alpha_{\mathcal{S}}$ is the associated acceptance probability).  Similarly, $K_{\infty}$, $r_{\infty}$,  $q_{\infty}$, $p_{\infty}$, and $\alpha_{\infty}$ are the values of these objects for the Markov chain with the same proposal kernel as in Algorithm \ref{alg:algOverviewGeneral} and with the \emph{correct} posterior distribution as its target distribution. Finally, define $\pi(\theta) := p(\theta) \ell(\theta | \textbf{d}, \textbf{f})/Z$, where $Z$ is a normalization constant. 
Our simple result makes the following assumptions:

\smallskip

\begin{defn} [Sufficient conditions for convergence] \label{DefSomeAssumptions}

\begin{enumerate}
\item The state space $\Theta$ is compact.
\item The proposal $q(\theta, \cdot \,  \vert \, \mathbf{f} \, ) = q(\theta, \cdot)$ does not depend on $\mathbf{f}$, 
and both the proposal distribution $q(\theta, \cdot)$ and target distribution $p(\cdot \vert \, \mathbf{d} \, )$ have $C^{\infty}$ densities that are bounded away from zero uniformly in $\theta$.
\item The sequence of parameters $\{ \beta_{t}\}_{t \in \mathbb{N}}$ used in Algorithm \ref{alg:algOverviewGeneral} are of the form $\beta_{t} \equiv \beta > 0$; any sequence $\{ \gamma_{t} \}_{t \geq 0}$ is allowed.
\item The approximation of $\log p(\theta | \mathbf{d})$ is made via quadratic interpolation on the $N = (d+1)(d+2)/2$ nearest points.
\item The sub-algorithm \textsc{RefineNear} is replaced with:
\be 
\textsc{RefineNear}(\theta, \mathcal{S}) = \textbf{return}( \mathcal{S} \cup \{ (\theta, f(\theta)) \} ).
\ee 
\item We fix a constant $0 < \lambda < \infty$. In Step 15, immediately before the word \textbf{then}, we add `\textbf{or}, for $\mathcal{B}(\theta^{+}, R)$ as defined in the subalgorithm $\textsc{LocApprox}(\theta^{+}, \mathcal{S}, \emptyset)$ used in Step 8, the collection of points $\mathcal{B}(\theta^{+}, R) \cap \mathcal{S}$ is not $\lambda$-poised.' We add the same check, with $\theta^{-}$ replacing $\theta^{+}$ and `Step 10' replacing `Step 8,' in Step 17. The concept of poisedness is defined in \cite{Conn2009}.
\end{enumerate}
\end{defn}

\medskip

The following result extends Theorem 3.4 of \cite{Conrad2014} to parallel chains.

\smallskip

\begin{thm} [Convergence with parallel chains]
\ Let $\{X_{t}^{(i)}\}_{t \geq 0, \, 1 \leq i \leq n}$ be the $n$ stochastic processes obtained in a parallel run of Algorithm \ref{alg:algOverviewGeneral} with $n$ chains, and assume that the algorithm parameters satisfy Assumption \ref{DefSomeAssumptions}. Then, for all $1 \leq i \leq n$, 
\be 
\lim_{t \rightarrow \infty} \| \mathcal{L}(X_{t}^{(i)}) -\pi \|_{TV} = 0.
\ee 
\end{thm}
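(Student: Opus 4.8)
The plan is to reduce the parallel statement to the single-chain result \cite[Theorem 3.4]{Conrad2014} by exploiting the monotonicity built into Definition~\ref{DefSomeAssumptions}. Because \textsc{RefineNear} under Assumption~5 merely enlarges the shared set $\mathcal{S}_t$, and because a larger $\lambda$-poised sample set can only improve the local quadratic model, every property that drives convergence of a single chain is preserved---indeed strengthened---when extra points are contributed by the other $n-1$ chains. I would therefore fix an arbitrary index $i$, regard the shared set $\mathcal{S}_t$ as the adaptation parameter seen by chain $i$, and verify that the three ingredients of the single-chain proof survive: density of $\mathcal{S}_t$, convergence of the approximate kernels to $K_\infty$, and the diminishing-adaptation/containment pair needed by the adaptive-MCMC argument.

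First I would show that $\mathcal{S}_t$ almost surely becomes an $\epsilon$-cover of $\Theta$ for every $\epsilon>0$. Cover the compact $\Theta$ by finitely many balls of radius $\epsilon/2$. Since the densities are bounded away from zero (Assumption~2) and $\Theta$ is compact, the chain is uniformly ergodic and hence recurrent, so it visits each such ball infinitely often; at each visit a refinement seeding a point there occurs with probability at least $\beta>0$, and a conditional Borel--Cantelli argument shows refinement happens infinitely often in each ball almost surely. Intersecting over $\epsilon=1/k$ gives density. In the parallel case $\mathcal{S}_t$ contains the single-chain set of chain $i$ as a subset, so the $\epsilon$-cover property is inherited a fortiori.

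Second, on the event that $\mathcal{S}_t$ is a fine cover, the forced $\lambda$-poisedness check (Assumption~6) guarantees that the local quadratic interpolant of $\log p$ converges uniformly to $\log p$ as $R\to 0$, using the $C^\infty$ regularity of Assumption~2 and the poised polynomial error bounds of \cite{Conn2009}. Uniform convergence of the approximate log-density, together with the densities being bounded below, forces $\alpha_{\mathcal{S}_t}\to\alpha_\infty$ and hence $K_{\mathcal{S}_t}\to K_\infty$ in total variation, uniformly in the current state. The limiting $K_\infty$ is a Metropolis--Hastings kernel targeting $\pi$, and on a compact space with densities bounded away from zero it satisfies a Doeblin minorization; the same minorization holds uniformly over all $K_{\mathcal{S}}$, which supplies the containment (simultaneous uniform ergodicity) condition.

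Finally I would invoke the adaptive-MCMC convergence argument underlying \cite[Theorem 3.4]{Conrad2014}---in its moving-target form, since each $K_{\mathcal{S}_t}$ has its own approximate stationary law converging to $\pi$---checking diminishing adaptation for chain $i$: as $\mathcal{S}_{t+1}$ differs from $\mathcal{S}_t$ by at most the $\mathcal{O}(n)$ points refined in one step, and adding points to an already-dense, poised neighborhood perturbs the local model, and therefore $K_{\mathcal{S}_t}(x,\cdot)$, by an amount that vanishes as the set densifies, one obtains $\sup_x\|K_{\mathcal{S}_{t+1}}(x,\cdot)-K_{\mathcal{S}_t}(x,\cdot)\|_{TV}\to 0$ in probability. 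Containment and diminishing adaptation together give $\|\mathcal{L}(X_t^{(i)})-\pi\|_{TV}\to 0$, and since $i$ was arbitrary the conclusion holds for every chain. The main obstacle is exactly this last step: in the parallel setting the adaptation parameter $\mathcal{S}_t$ is driven by all $n$ chains simultaneously, so one must confirm that the several-fold faster growth of the shared set does not defeat diminishing adaptation. The resolution is that $n$ is fixed while the per-refinement kernel perturbation tends to zero, so a bounded number of simultaneous refinements still produces a vanishing change; the monotone ``enlarging never hurts'' structure is precisely what makes this estimate hold uniformly over the other chains' behavior.
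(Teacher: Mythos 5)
Your proof reaches the correct conclusion and rests on the same two pillars that the paper itself identifies as the reason the parallel case works: (1) the $\epsilon$-cover property of $\mathcal{S}_t$ is monotone under adding points, no matter which chain contributes them, and (2) once $\mathcal{S}_t$ is a sufficiently fine, $\lambda$-poised cover, $K_{\mathcal{S}_t}$ is uniformly close to $K_\infty$. The formal route, however, is genuinely different. The paper's proof is a pure reduction: it observes that Lemma B.3 and the proof of Theorem 3.4 of \cite{Conrad2014} apply verbatim to each $\{X_t^{(i)}\}_{t\geq 0}$, with the single modification that the auxiliary process is enlarged from $\{\mathcal{S}_t\}_{t \geq 0}$ to $\left\{\left(\mathcal{S}_t, X_t^{(1)},\ldots,X_t^{(i-1)},X_t^{(i+1)},\ldots,X_t^{(n)}\right)\right\}_{t \geq 0}$; the single-chain argument never inspects where the points in $\mathcal{S}_t$ came from, so nothing else changes. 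You instead reconstruct the convergence argument from scratch inside the adaptive-MCMC framework (diminishing adaptation plus containment, in a moving-target form, since each $K_{\mathcal{S}}$ has its own approximate invariant law rather than $\pi$). The paper's approach buys brevity and makes the ``indifference to other chains'' insight explicit; yours buys a self-contained argument that does not lean on an external proof, at the cost of having to verify the adaptive-MCMC conditions yourself --- and that is where the remaining work lies.

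Three points need patching. First, your opening claim that ``a larger $\lambda$-poised sample set can only improve the local quadratic model'' is false in general: adding points changes which $N$ points are nearest and can make the local model worse, a possibility of mal-adaptation that the paper's own remark explicitly warns about. What is monotone is the $\epsilon$-cover property, and that (plus the enforced poisedness check of Assumption 6) is what your argument actually uses, so the flaw is cosmetic, but the claim should be removed. Second, the process $\{X_t^{(i)}\}_{t \geq 0}$ is not a Markov chain, so ``uniformly ergodic and hence recurrent'' is not meaningful as stated; the correct argument, which you essentially give, is that the proposal density is bounded below, so at every step a refinement point lands in any fixed ball with probability at least a constant times $\beta$, and conditional Borel--Cantelli applies --- no recurrence of the chain is needed. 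Relatedly, the claim that a Doeblin minorization ``holds uniformly over all $K_{\mathcal{S}}$'' requires justification: for a sparse or ill-poised $\mathcal{S}$ the interpolated log-density can make acceptance probabilities arbitrarily small. Containment only needs to hold along the realized sequence, i.e., after the almost-surely finite cover time $\tau_\epsilon$, where closeness to $K_\infty$ (or the uniform bound on interpolants of the bounded function $\log p$ over $\lambda$-poised sets) supplies the minorization. Third, the number of points added per step is not deterministically $\mathcal{O}(n)$ --- each chain's refinement loop can fire several times within one transition --- but this is immaterial, since your diminishing-adaptation estimate should go through the triangle inequality with $K_\infty$ (both $\mathcal{S}_t$ and $\mathcal{S}_{t+1}$ are $\epsilon$-covers) rather than by counting points.
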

\begin{proof}\renewcommand{\endproof}{}
The proof of Lemma B.3 of \cite{Conrad2014} holds exactly as stated, with the proof as given. The remainder of the proof of Theorem 3.4  from \cite{Conrad2014} holds for $\{ X_{t}^{(i)} \}_{t \geq 0}$, for each fixed $1 \leq i \leq n$, with the following modifications:
\begin{itemize}
\item[(i)] The chain $\{ X_{t} \}_{t \geq 0}$ should be replaced by $\{ X_{t}^{(i)} \}_{t \geq 0}$ wherever it appears; and
\item[(ii)] The auxillary process associated with $\{ X_{t}^{(i)} \}_{t \geq 0}$ is $\left \{ \left ( \mathcal{S}_{t}, X_{t}^{(1)},\ldots,X_{t}^{(i-1)}, X_{t}^{(i+1)}, \ldots, X_{t}^{(n)}  \right ) \right \}_{t \geq 0}$, rather than $\{ \mathcal{S}_{t} \}_{t \geq 0}$. {\color{header1}\rule{1.5ex}{1.5ex}}
\end{itemize} \end{proof}
\medskip
We emphasize that this proof of the convergence of $\{ X_{t}^{(i)} \}_{t \geq 0}$ is completely indifferent to the points that are added to $\mathcal{S}_{t}$ by the other chains $\{ X_{t}^{(j)} \}_{t \geq 0}$, $j\neq i$. 
\medskip

\begin{remark}[Do parallel chains always work?]
\ Although our sufficient conditions for convergence carry over to the parallel case, it is natural to ask whether there are any problems that are not covered by our current theory---i.e., where, having departed from the sufficient conditions of Definition~\ref{DefSomeAssumptions}, the single-chain algorithm still converges but the parallel algorithm does not. We conjecture that the answer is no, but are unable to prove it. 

To explain the difficulty in proving this conjecture, note that all the proofs of sufficient conditions for convergence given in \cite{Conrad2014} apply as stated to the parallel version of Algorithm \ref{alg:algOverviewGeneral} because they proceed by proving the following critical steps:
\begin{enumerate}
\item Due to minorization conditions (e.g., the second condition of Assumption \ref{DefSomeAssumptions}), for any $\epsilon > 0$ the set $\mathcal{S}_{t}$ will be an $\epsilon$-cover of $\Theta$ for all $t$ sufficiently large.
\item The distance $\Vert \, K_\infty(\theta_t, \cdot) - K_{\mathcal{S}_t}(\theta_t, \cdot ) \, \Vert_{\text{TV}}$ between a single `step' of Algorithm \ref{alg:algOverviewGeneral} and the step that would be made by the true transition kernel $K_\infty$ can be made arbitrarily small by making $\mathcal{S}_t$ an $\epsilon$-cover of $\Theta$ for $\epsilon$ sufficiently small.

\end{enumerate}

In particular, under \change{Assumption \ref{DefSomeAssumptions}}, adding points to $\mathcal{S}_{t}$ cannot hurt the convergence of $\{ X_{t}^{(i)} \}_{t \geq 0}$ very much, because adding points to an $\epsilon$-cover always results in a set that is still an $\epsilon$-cover. For a sufficiently \change{broader} class of Metropolis-Hastings chains, however, it is {not} true that $K_{\mathcal{S}}$ is close to $K_{\infty}$ whenever $\mathcal{S}$ is an $\epsilon$-cover, and in particular it is possible to {add} points to $\mathcal{S}$ while simultaneously making an approximation worse. This possibility of mal-adaption is what makes adaptive algorithms difficult to study, and prevents us from making the stronger claim that the parallel algorithm is convergent under \emph{every} possible condition where the single-chain algorithm is.
\end{remark}



\section{Local approximations and approximating the proposal}
\label{sec:proposal}

We now show how the transition kernel of our approximate MCMC scheme can use the current approximation not only to evaluate the acceptance probability, but also to construct a proposal distribution. This development enables a much wider range of Metropolis-Hastings proposals to be used with expensive models, and in particular allows gradient- and Hessian-driven proposals to be used in a setting where derivatives of $\mathbf{f}$ cannot be directly evaluated. 
%
We proceed by recalling the Metropolis-adjusted Langevin algorithm (MALA) algorithm and explaining how to adapt local approximations to this proposal scheme. Next, we prove a general result that our modified algorithm is still convergent as long as the good properties of the approximation are transferred into good approximation of the overall kernel.  We conclude by showing that the result applies in the representative case of manifold MALA.

 \subsection{Simplified manifold Metropolis-adjusted Langevin algorithm (mMALA)}
 
The simplified manifold Metropolis-adjusted Langevin algorithm (mMALA) \cite{Girolami2011} is a recent method for constructing proposals adapted to the local geometry of the target distribution. This method is also closely related to the preconditioning performed in the stochastic Newton method \cite{Martin2012}.  The mMALA proposal is derived by explicitly discretizing a Langevin diffusion with stationary distribution $p(\theta | \mathbf{d})$, leading to
%
\be
q(\theta,\theta' | \mathbf{f}) = \mathcal{N}\left(\theta'; \theta + \frac{\epsilon}{2} M(\theta) \nabla_\theta \log{ \left (\ell(\theta | \mathbf{d}, \mathbf{f}) p(\theta) \right )}, \epsilon  M(\theta)\right),
\label{eqn:malaProposal}
\ee 
for integration step size $\epsilon$ and position-dependent symmetric positive definite (SPD) mass matrix $M(\theta)$, which we may treat as a preconditioner. \change{``Preconditioning'' in this context amounts to rescaling the parameter space, e.g., to make the distribution (locally) more isotropic.}
We use the notation $q(\theta,\theta' \vert \mathbf{f})$ to emphasize the dependence of the proposal on the forward model. The corresponding acceptance ratio is
\be
\alpha(\theta,\theta') = \min \left(1, \frac{\ell(\theta' | \mathbf{d}, \mathbf{f}) p(\theta') q(\theta',\theta \vert \mathbf{f})}{\ell(\theta | \mathbf{d}, \mathbf{f}) p(\theta) q(\theta,\theta' \vert \mathbf{f})} \right).
\ee

We are relatively unconstrained in our choice of preconditioner, as long as it is SPD. Standard MALA corresponds to choosing the identity matrix, $M(\theta) = I$. Simplified manifold MALA (mMALA) \cite{Girolami2011}, on the other hand, chooses the mass matrix to reflect a Riemannian metric induced by the posterior distribution:
\begin{eqnarray*}
M(\theta) &=&  \left[- \mathbb{E}_{\mathbf{d}|\theta} \left( \nabla^2_\theta \log \ell(\theta | \mathbf{d}, \mathbf{f}) \right)  -  \nabla^2_\theta \log p(\theta) \right]^{-1}.
\end{eqnarray*}
The inverse of this matrix is the expected Fisher information plus the negative Hessian of the log-prior density. In general, computing the expected Fisher information is not trivial, but it is relatively simple for Gaussian likelihoods, e.g., 
\begin{eqnarray*}
\ell(\theta | \mathbf{d}, \mathbf{f})  &=& \mathcal{N}(\mathbf{d}; \mathbf{f}(\theta), \Sigma_\ell), 
\end{eqnarray*}
with some \change{prescribed} covariance matrix $\Sigma_\ell \in \mathbb{R}^{n \times n}$. If we also have a Gaussian prior, $p(\theta) = \mathcal{N}(\theta; \mu, \Sigma_p)$, with covariance $\Sigma_p \in \mathbb{R}^{d \times d}$ and mean vector $\mu \in \mathbb{R}^d$, then
\be
M^{-1}(\theta) = J(\theta)^\top \Sigma^{-1}_\ell J(\theta) + \Sigma_p^{-1},
\ee
where $J(\theta) := \nabla_\theta \mathbf{f}(\theta) \in \mathbb{R}^{n \times d}$. Girolami \etal \cite{Girolami2011} observe that choosing the preconditioner in this manner can dramatically improve the performance of MALA. Yet even standard MALA can be difficult to apply in practice because the necessary derivatives must be computable and inexpensive; the manifold variant uses Jacobians of the forward model, which are typically even more challenging to obtain. Adapting mMALA and similar proposals to use local approximation is therefore particularly interesting, as approximations can cheaply provide these derivatives. 

\subsection{Modifying the algorithm}
The key challenge in extending Algorithm \ref{alg:algSketch} to mMALA (and similar proposals) is to allow simultaneous use of the approximation within the proposal and the acceptance probability. Algorithm \ref{alg:algSketchMala} shows the three required changes.  
Two modifications are trivial: we restore the proposal distribution to its usual place in the acceptance probability, to account for the non-symmetric proposal; and we provide the proposal with the approximate forward model $\tilde{\mathbf{f}}$. 

The third step is more subtle, introducing a coupling construction to allow model refinement to proceed safely. Note that in Algorithm \ref{alg:algSketch}, refinement only recomputes the acceptance probability; the proposed point is held fixed. Hence, exactly one proposal is made per step, even though an inaccurate approximation might cause the algorithm to seek further information before deciding whether that proposal can be accepted. Allowing a new proposal to be generated upon refinement would bias the chain away from regions with inaccurate approximations (equivalently, towards regions where the approximation appears accurate), which is clearly undesirable.

This difficulty can be resolved by coupling the approximate kernel $K_t$ to the kernel associated with the true model, $K_\infty$. We accomplish this coupling by fixing the realization of the random variable used to generate the proposal, but allowing the proposal to be recomputed if the model is refined. (See \cite{propp1996exact,fill2000randomness} for other algorithms that re-use randomness to avoid bias, and \cite{steinsaltz1999locally} for a typical use of this idea in a theoretical paper.) Specifically, construct a deterministic function $r(\theta, \mathbf{z}, \mathbf{f})$ such that drawing a random vector $\mathbf{z} \sim \mathcal{N}(0, I)$\footnote{We choose a vector of independent standard Gaussians for convenience and without loss of generality, but in practice other distributions for $\mathbf{z}$ may be more convenient.} and then computing $\theta'=r(\theta, \mathbf{z}, \mathbf{f})$ is equivalent to drawing $\theta' \sim q(\theta, \cdot | \mathbf{f})$. The modified algorithm holds $\mathbf{z}$ fixed under refinement, recomputing $\theta^+$ as needed. 
In the case of standard Metropolis-Hastings proposals, this coupling strategy reduces to our original approach.  
This coupling construction ensures that the magnitude of any perturbation to the proposed point $\theta^+$ induced by refinement vanishes as $t \to \infty$.

In the case of simplified manifold MALA, the proposal will be a Gaussian distribution, $q(\theta, \theta' \vert \mathbf{f}) = \mathcal{N}\left (\mu_q(\theta, \mathbf{f}), \Sigma_q(\theta, \mathbf{f}) \right )$, for some position- and model-dependent \change{mean $\mu_q$ and covariance $\Sigma_q$}, and hence $r = \mu_q(\theta, \mathbf{f}) + \Sigma_q^{1/2}(\theta, \mathbf{f}) \mathbf{z}$. The rest of the algorithm is updated naturally, including the inclusion of the proposal into the cross-validation criterion. The resulting approach is summarized in Algorithm~\ref{alg:algSketchMala}. For brevity, we defer precise pseudocode to Algorithm~\ref{alg:algOverviewGeneral} in Appendix \ref{apx:code}. 

\algrenewcomment[1]{\hfill\makebox[0.33\linewidth][l]{\(\triangleright\) #1}}

\begin{algorithm}
\caption{Sketch of approximate Metropolis-Hastings algorithm with general proposals}
\label{alg:algSketchMala}
\begin{algorithmic}[1]
\Procedure{$K_t$}{$\theta^-, \mathcal{S},  \ell, \mathbf{d}, p, \mathbf{f}, r, q$}
\State Draw $\mathbf{z}_t \sim \mathcal{N}(0,I)$ 
\State Construct  $\tilde{\mathbf{f}}^-$ \label{alg:algSketchMala:construct}
\State Compute $\theta^+ = r(\theta^-, \mathbf{z}_t, \tilde{\mathbf{f}}^-)$ 
\State Construct  $\tilde{\mathbf{f}}^+$ 

\State Compute acceptance probability \label{alg:algSketchMala:alpha}
 $\alpha \gets \min \left(1,\frac{\ell(  \theta^+ | \mathbf{d} ,\tilde{\mathbf{f}}^+)p(\theta^+)q(\theta^+, \theta^- | \tilde{\mathbf{f}}^+)}{\ell(\theta^-  | \mathbf{d}  ,\tilde
{\mathbf{f}}^-)p(\theta^-)q(\theta^-, \theta^+ | \tilde{\mathbf{f}}^-)} \right)$ 
\If{approximation needs refinement near $\theta^-$ or $\theta^+$}
\State Select new point $\theta^\ast$ and grow $\mathcal{S} \gets \mathcal{S} \cup (\theta^\ast, \mathbf{f}(\theta^\ast))$. Repeat from Line \ref{alg:algSketchMala:construct}. 
\Else
\State Draw $u \sim \text{Uniform}(0,1)$. If $u < \alpha$, \textbf{return} $(\theta^+, \mathcal{S})$, else \textbf{return} $(\theta^-, \mathcal{S})$.
\EndIf
\EndProcedure

\end{algorithmic}
\end{algorithm}

\subsection{Convergence analysis}

We now provide a convergence result for Algorithm \ref{alg:algOverviewGeneral}. Some technical definitions and the proofs from this section may be found in Appendix \ref{apx:proofs}. 

The general idea is to show that as the sample set $\mathcal{S}_t$ becomes dense, the approximate kernel $K_\mathcal{S}$ converges to the kernel using the true model, $K_\infty$, and that MCMC converges as a result. 
We begin by stating our assumptions precisely. Below $W_2$ denotes the 2-Wasserstein metric, defined in Appendix \ref{apx:proofs}.

\smallskip

\begin{defn}[Convergence Assumptions] \label{MalaAssumptions}
Assume that:
\begin{enumerate}
\item For any $\delta > 0$, there is an $\epsilon = \epsilon(\delta) > 0 $ so that any $\epsilon$-cover $\mathcal{S}$ satisfies
\be \label{EqEpsDeltCoverDef}
\sup_{\theta \in \Theta} W_{2} (K_{\mathcal{S}}(\theta,\cdot), K_{\infty}(\theta,\cdot)) < \delta. 
\ee 
\item There exist constants $0 <  \eta_{0} < \infty$ and $1 < C < \infty$ such that for any $0 < \eta < \eta_{0}$,
\be \label{IneqContAssump}
\sup_{\theta,\theta' \in \Theta, \, \| \theta - \theta' \| < \eta} W_{2} ( K_{\infty}(\theta,\cdot),  K_{\infty}(\theta',\cdot)) < C \eta.
\ee 
\item For any $\varphi_{0} < \infty$ and $\delta > 0$, there exists $\epsilon >0$ so that any $\epsilon$-covers $\mathcal{S}$, $\mathcal{S}'$ satisfy
\be  \label{IneqOtherContAssump}
\sup_{\| z \| \leq \varphi_{0}} \sup_{\theta \in \Theta} \| r_{\mathcal{S}}(\theta, z) - r_{\mathcal{S}'}(\theta, z) \| \leq \delta.
\ee 
\item Assumptions \ref{DefSomeAssumptions} hold. 
\end{enumerate}
\end{defn}
\medskip

The following theorem states that these assumptions, which we will have to check, are sufficient for convergence of the approximate Markov chain. 

\smallskip 

\begin{thm} [Convergence of Algorithm \ref{alg:algOverviewGeneral} on compact state space] \label{ThmMalaConv}
Let Assumptions \ref{MalaAssumptions} hold and let $\{X_{t}\}_{t \geq 0}$ be the sequence drawn from a run of Algorithm \ref{alg:algOverviewGeneral}. Then 
\be \label{IneqConcWassConvMala}
\lim_{t \rightarrow \infty} W_{2}(\mathcal{L}(X_{t}), \pi) = 0.
\ee 
\end{thm}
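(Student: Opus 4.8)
The plan is to establish convergence by controlling the one-step discrepancy between the adaptive kernel $K_{\mathcal{S}_t}$ and the exact kernel $K_\infty$, then invoking a standard adaptive-MCMC convergence theorem. The core of the argument has two pieces: (i) a \emph{containment}-type statement showing that the adaptive sample set $\mathcal{S}_t$ eventually becomes an $\epsilon$-cover of $\Theta$ for every $\epsilon>0$, so that by Assumption~1 the Wasserstein distance $\sup_\theta W_2(K_{\mathcal{S}_t}(\theta,\cdot),K_\infty(\theta,\cdot))$ becomes small; and (ii) a \emph{diminishing adaptation}-type statement, which here must be phrased in $W_2$ rather than total variation because the mMALA proposal's mean and covariance depend continuously on the approximation. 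The novelty relative to the compact-state-space result (Theorem~3.4 of \cite{Conrad2014}) is precisely that the proposal itself is approximation-dependent, so TV-based arguments no longer suffice and the coupling encoded in the map $r(\theta,\mathbf{z},\mathbf{f})$ must be exploited.

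\medskip

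First I would show that $\mathcal{S}_t$ becomes an arbitrarily fine $\epsilon$-cover almost surely. Because Assumptions~\ref{DefSomeAssumptions} hold (via Assumption~4), the minorization condition from the second item of Definition~\ref{DefSomeAssumptions}---that $q$ and the target both have densities bounded away from zero---guarantees that the chain visits every region of the compact space infinitely often, and the fixed refinement probability $\beta_t\equiv\beta>0$ forces infinitely many refinements in any open set. This is exactly the first critical step recalled in the Remark following the parallel-chains theorem, and the poisedness check in item~6 of Definition~\ref{DefSomeAssumptions} ensures the added points keep the local least-squares systems well-conditioned. Hence for any $\epsilon>0$, there is (almost surely) a finite random time after which $\mathcal{S}_t$ is an $\epsilon$-cover, and remains one since adding points preserves the cover property.

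\medskip

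Next I would combine the three Wasserstein assumptions to bound the per-step error in the metric in which convergence is measured. Given $\delta>0$, Assumption~1 supplies $\epsilon(\delta)$ so that on the event that $\mathcal{S}_t$ is an $\epsilon$-cover, $\sup_\theta W_2(K_{\mathcal{S}_t}(\theta,\cdot),K_\infty(\theta,\cdot))<\delta$. Assumption~3 controls how much the recomputed proposal $\theta^+=r_{\mathcal{S}}(\theta^-,\mathbf{z})$ can move when refinement enlarges the sample set mid-step---this is where the fixed-$\mathbf{z}$ coupling construction of Algorithm~\ref{alg:algSketchMala} is essential, since it bounds $\|r_{\mathcal{S}}(\theta,z)-r_{\mathcal{S}'}(\theta,z)\|$ uniformly for $\|z\|\le\varphi_0$ and lets me truncate the Gaussian tail of $\mathbf{z}$ at negligible cost. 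Assumption~2 (Lipschitz continuity of $\theta\mapsto K_\infty(\theta,\cdot)$ in $W_2$) then propagates this small perturbation of the proposed point into a small $W_2$ change of the resulting transition law, so the total one-step $W_2$ discrepancy between the realized adaptive step and an exact $K_\infty$ step is at most of order $\delta$. Feeding this diminishing-adaptation estimate together with the containment from the previous paragraph into a $W_2$ analogue of the Roberts--Rosenthal adaptive-MCMC framework yields $\lim_{t\to\infty}W_2(\mathcal{L}(X_t),\pi)=0$.

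\medskip

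\textbf{The hard part} will be the passage from a small \emph{one-step} $W_2$ discrepancy to convergence of the \emph{marginal law} $\mathcal{L}(X_t)$ in $W_2$, because the standard adaptive-MCMC machinery is built for total variation and relies on geometric ergodicity of the frozen kernels. I expect to need a Wasserstein contraction (or at least a drift/uniform-ergodicity statement) for $K_\infty$ on the compact space, which should follow from the uniform lower bound on the proposal density in Assumption~\ref{DefSomeAssumptions}; this gives a uniform minorization and hence a uniform $W_2$ (or TV-then-interpolation) contraction rate that is independent of the frozen sample set. The remaining delicacy is bookkeeping the telescoping error: writing $W_2(\mathcal{L}(X_{t+1}),\pi)$ in terms of $W_2(\mathcal{L}(X_t),\pi)$ plus the per-step adaptation error, and checking that the contraction dominates the (eventually arbitrarily small) adaptation error so the iteration converges. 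These estimates, together with the technical $W_2$ definitions, are naturally relegated to Appendix~\ref{apx:proofs}, which is exactly where the paper places them.
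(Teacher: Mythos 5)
Your first two ingredients match the paper's proof exactly: the paper also begins by showing $\P[\tau_{\epsilon}<\infty]=1$ for the $\epsilon$-cover time (by re-running Lemma B.4 of \cite{Conrad2014}), and it also assembles a per-step $W_{2}$ coupling bound (inequality \eqref{IneqMalaWassCoupDef}) from exactly the sources you cite: Assumption 1 for kernel closeness on the event $\{T\geq\tau_{\epsilon}\}$, Assumption 3 plus truncation of the Gaussian tail $\P[\|\mathbf{z}\|>\varphi_{0}]$ for the fixed-$\mathbf{z}$ recomputation of the proposal, and Assumption 2 to absorb the drift between the two current states. So the scaffolding is right, and you correctly identified where the real difficulty lies.

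The gap is in your proposed resolution of that difficulty. You want a uniform $W_{2}$ contraction for $K_{\infty}$, derived from the minorization in Definition \ref{DefSomeAssumptions}, so that you can telescope $W_{2}(\mathcal{L}(X_{t+1}),\pi)\leq\rho\,W_{2}(\mathcal{L}(X_{t}),\pi)+\delta_{t}$ with $\rho<1$. No such contraction is available under the stated hypotheses: minorization gives contraction in \emph{total variation}, not in $W_{2}$ (the TV-coupling only yields the uniform bound $W_{2}(\mu K_{\infty},\nu K_{\infty})\leq(1-\beta)D_{\Omega}$, which does not improve on a small $W_{2}(\mu,\nu)$), and Assumption 2 of Definition \ref{MalaAssumptions} explicitly allows the $W_{2}$-Lipschitz constant to satisfy $1<C<\infty$, i.e.\ one exact step can \emph{expand} Wasserstein distances. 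Nor can you telescope in TV instead: the diminishing-adaptation estimate is available only in $W_{2}$ (that is the whole point of Assumptions \ref{MalaAssumptions}), and two kernels can be arbitrarily $W_{2}$-close while at TV distance one, so $\sup_{\theta}\|K_{\mathcal{S}_{t}}(\theta,\cdot)-K_{\infty}(\theta,\cdot)\|_{\mathrm{TV}}$ need not be small. The paper circumvents this with a genuinely different device: a three-process coupling over a \emph{finite block} of length $S$, in which $\{Y_{t}\}$ (exact chain started at $X_{T}$) is coupled to $\{Z_{t}\}$ (stationary exact chain) through TV mixing, giving $D_{\Omega}2^{-\lfloor S/\tau_{\mathrm{mix}}\rfloor}$, while $\{X_{t}\}$ is coupled to $\{Y_{t}\}$ stepwise in $W_{2}$, with an error that \emph{compounds geometrically} as $(C+D_{\Omega}/\eta_{0})^{S+1}$ rather than summing. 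Convergence is then rescued by letting the block length $S(\alpha)$ grow only logarithmically while $\delta(\alpha)=\alpha^{-2}$ shrinks polynomially, so the compounded adaptation error $(2\delta+D_{\Omega}\P[\|\mathbf{z}\|>\varphi_{0}])(C+D_{\Omega}/\eta_{0})^{S+1}$ still vanishes. Without this block-length bookkeeping (or some substitute for the nonexistent contraction), your final step does not go through.
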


\begin{remark}
\ The proof proceeds by coupling each step of the output of Algorithm \ref{alg:algOverviewGeneral}. Our coupling construction gives us the important estimate \eqref{IneqMalaWassCoupDef}, which would not hold if the randomness at each step were resampled upon model refinement. 
In most cases, including our application to mMALA, this proof can be extended to give convergence in total variation distance by using a `one-shot' coupling (see \cite{roberts2002one}).
\end{remark}

\medskip
Finally, we observe that mMALA often satisfies Assumptions \ref{MalaAssumptions}. Although our convergence results only apply to some uses of mMALA, we believe they are representative of the more general case, and suggest the feasibility of analytically transferring the good properties of the approximation onto the kernel.

\smallskip

\begin{thm} [Convergence of approximate mMALA] \label{thm:mala}
We consider running Algorithm \ref{alg:algOverviewGeneral} with proposal kernel $q$ \change{(equivalently $r$)} given by the mMALA algorithm. Assume that:
\begin{itemize}
\item The state space $\Omega$ is the $d$-dimensional \change{hypercube} $[0,1]^{d}$ for some $d \in \mathbb{N}$. 
\item The mass matrix $M(\theta)$ and likelihood $\ell(\theta | \mathbf{d}, \mathbf{f})$ are both $C^{\infty}$ functions on $\Omega$. Furthermore, the \change{smallest singular value of $M(\theta)$ is uniformly bounded} away from zero by some $c > 0$. 
\item The posterior density $p(\cdot | \mathbf{d})$ is $C^{\infty}$ and bounded away from zero uniformly on $\Omega$.
\item  Items 3 through 6 of Assumption \ref{DefSomeAssumptions} hold.
\end{itemize}
Then  the output $\{ X_{t} \}_{t \geq 0}$ of Algorithm \ref{alg:algOverviewGeneral} satisfies 
\be 
\lim_{t \rightarrow \infty} \| \mathcal{L}(X_{t}) - \pi \|_{TV} = 0.
\ee  
\end{thm}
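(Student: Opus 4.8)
The plan is to reduce the theorem to the abstract convergence result Theorem~\ref{ThmMalaConv} by verifying its four hypotheses, Assumptions~\ref{MalaAssumptions}, for the mMALA kernel under the stated smoothness and nondegeneracy conditions, and then to upgrade the resulting $W_2$-convergence to total-variation convergence. The structural hypotheses are handled quickly. Item~4 of Assumptions~\ref{MalaAssumptions} asks for the algorithmic conditions of Assumptions~\ref{DefSomeAssumptions}: compactness ($\Theta=[0,1]^d$) and Items~3--6 are furnished directly by hypothesis, while the density regularity that is actually needed --- a $C^\infty$ target bounded away from zero, together with the mMALA proposal \eqref{eqn:malaProposal} being a Gaussian whose covariance $\epsilon M(\theta)$ is uniformly bounded above (by $C^\infty$-smoothness on the compact cube) and below (by the singular-value hypothesis $M \succeq cI$), so that its density restricted to $[0,1]^d$ is bounded away from zero --- is supplied. (The $\mathbf{f}$-independence clause in Item~2 of Assumptions~\ref{DefSomeAssumptions} is precisely the restriction that the $W_2$ route of Theorem~\ref{ThmMalaConv} is designed to lift, and is not invoked here.)

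The substance lies in Items~1--3 of Assumptions~\ref{MalaAssumptions}. The common engine is that, under the $\lambda$-poisedness enforced by Item~6 of Assumptions~\ref{DefSomeAssumptions}, the local quadratic fit $\log p_{\mathcal{S}}$ converges to $\log p$ \emph{together with its gradient and Hessian}, with error $\OO{\epsilon}$ whenever $\mathcal{S}$ is an $\epsilon$-cover, by the standard error estimate for $\lambda$-poised quadratic interpolation of a $C^\infty$ function \cite{Conn2009}. Since the mMALA mean $\mu_q$ and covariance $\Sigma_q = \epsilon M$ are assembled from $\nabla \log p$ and (through $M$) from $\ell$ and second derivatives, the smooth and uniformly-SPD dependence of these ingredients propagates to a uniform bound $\sup_{\|z\|\le\varphi_0}\sup_{\theta}\|r_{\mathcal{S}}(\theta,z)-r_{\infty}(\theta,z)\| = \OO{\epsilon}$; here I use that the matrix square root $A\mapsto A^{1/2}$ is Lipschitz on SPD matrices with singular values in $[\epsilon c,\epsilon C']$, a set the covariances provably lie in. The same estimate with a second $\epsilon$-cover $\mathcal{S}'$ in place of the exact model yields Item~3, \eqref{IneqOtherContAssump}. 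For Item~1, \eqref{EqEpsDeltCoverDef}, I couple $K_{\mathcal{S}}(\theta,\cdot)$ and $K_{\infty}(\theta,\cdot)$ using a common Gaussian seed $z$ and a common uniform $u$ in the accept/reject step: the two proposals differ by $\OO{\epsilon}$ by the bound just described, and the two acceptance probabilities differ by $\OO{\epsilon}$ because the log-acceptance-ratio is a smooth function of the converging approximate densities and proposal densities. Hence the two chains make the same decision outside an event of probability $\OO{\epsilon}$ and, when they agree, land $\OO{\epsilon}$ apart, giving $W_2(K_{\mathcal{S}}(\theta,\cdot),K_{\infty}(\theta,\cdot))=\OO{\epsilon}$ uniformly in $\theta$.

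Item~2, \eqref{IneqContAssump}, is an a~priori regularity statement about the \emph{true} kernel and does not involve the approximation. Because $\mu_q(\cdot)$ and $\Sigma_q^{1/2}(\cdot)$ are Lipschitz on the compact cube --- smoothness of $M$, $\ell$, $p$ with all derivatives bounded, plus Lipschitzness of the SPD square root away from degeneracy --- the map $\theta\mapsto q(\theta,\cdot\,|\,\mathbf{f})$ is $W_2$-Lipschitz via the coupling $r_\infty(\theta,z)\mapsto r_\infty(\theta',z)$ under shared $z$; a shared-$u$ argument as above then shows $\theta\mapsto K_\infty(\theta,\cdot)$ is $W_2$-Lipschitz, which is exactly \eqref{IneqContAssump}. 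With all of Assumptions~\ref{MalaAssumptions} in hand, Theorem~\ref{ThmMalaConv} gives $W_2(\mathcal{L}(X_t),\pi)\to 0$. To conclude in total variation I follow the Remark and apply a one-shot coupling \cite{roberts2002one}: two copies started $\delta$-apart can, in a single mMALA step, be coupled to coincide with probability tending to $1$ as $\delta\to 0$, since the proposal is an everywhere-positive Gaussian with smoothly varying, uniformly nondegenerate parameters, so two nearby proposal laws have overlap $1-\OO{\delta}$; combining this with the $W_2$-convergence yields $\lim_{t\to\infty}\|\mathcal{L}(X_t)-\pi\|_{TV}=0$.

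I expect the main obstacle to be Item~1: translating the pointwise local-approximation error on $\log p$ and its derivatives into a uniform-in-$\theta$ $W_2$ bound on the \emph{whole} transition kernel, since this must simultaneously control the perturbation of the Gaussian proposal --- including the matrix square root of $\Sigma_q$ near the nondegeneracy floor $\epsilon c$ --- and the perturbation of the Metropolis acceptance ratio, and must be made uniform over the compact cube including its boundary.
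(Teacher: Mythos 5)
Your proposal is correct and follows the same overall route as the paper: verify Assumptions~\ref{MalaAssumptions} and invoke Theorem~\ref{ThmMalaConv}, with the $\lambda$-poised quadratic interpolation error bounds of \cite{Conn2009} as the engine giving uniform $\OO{\epsilon}$ convergence of the approximate density, gradient, and mass matrix on an $\epsilon$-cover. The differences are in execution. For Item~1 the paper does not couple by hand: it splits $\sup_\theta W_2(K_{\mathcal{S}}(\theta,\cdot),K_\infty(\theta,\cdot))$ into $\sup_\theta W_2(q_{\mathcal{S}}(\theta,\cdot),q_\infty(\theta,\cdot))$ plus $D_\Omega$ times the sup-norm density error, and controls the Gaussian proposal term by citing a closed-form Wasserstein bound for Gaussians \cite[Prop.~7]{Givens1984}; for Item~2 it passes through total variation (using $W_2 \le D_\Omega\,\mathrm{TV}$ on the compact space), bounding the Gaussian TV differences by a standard mean-shift estimate together with \cite[Lem.~4.8]{Klartag2007} for the covariance perturbation, plus a Lipschitz estimate on the acceptance ratio. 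Your shared-seed/shared-uniform couplings accomplish the same estimates more self-containedly, at the cost of tracking the acceptance-ratio perturbation explicitly; both are valid. The most notable divergence is at the end: you explicitly carry out the upgrade from Wasserstein to total-variation convergence via a one-shot coupling, whereas the paper's own proof stops after checking the hypotheses of Theorem~\ref{ThmMalaConv} --- which only yields the Wasserstein statement \eqref{IneqConcWassConvMala} --- and delegates the TV upgrade to the remark following that theorem without writing it out. In that respect your write-up is more complete than the paper's, and your identification of Item~1 (uniform control of the whole kernel, including the matrix square root near the nondegeneracy floor) as the main technical burden matches exactly where the paper spends its effort; you also handle the $\mathbf{f}$-dependence of the mMALA proposal the same way the paper implicitly does, by retaining only the density-regularity content of Item~2 of Assumptions~\ref{DefSomeAssumptions}.
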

The proof of Theorem~\ref{thm:mala}, given in Appendix \ref{apx:proofs}, merely checks Assumptions \ref{MalaAssumptions}. Essentially, these assumptions hold because mMALA uses approximations of the derivatives of $\mathbf{f}$ to construct a Gaussian proposal; the derivative approximations improve as $\mathcal{S}$ grows and the Gaussian proposal is not too sensitive to errors in these approximations, and hence the entire kernel converges in the necessary sense.

\section{Numerical experiments}
\label{sec:numerical}

\change{We present three numerical examples to explore the algorithmic ideas developed in the preceding sections. First, we use a simple example to demonstrate how the improved mixing properties of MALA can successfully be paired with our local approximation scheme. Then, we turn to two more computationally intensive inference problems, with forward models drawn from realistic applications. The first of these, a groundwater tracer transport problem, is the focus of our parallel MCMC explorations. Though posterior evaluations are quite expensive in this problem, we can still compare results with standard MCMC chains that employ no approximation, and thus verify the accuracy of posterior expectations. The second application example is even more expensive---such that MCMC is essentially intractable without the use of approximations. Here, our goal is simply to show that with a \textit{particular instantiation} of parallel local approximation MCMC, fully Bayesian inference that previously would not have been feasible (given reasonable computational resources) is now feasible.}



\subsection{Quartic example}
\change{
\label{sec:quartic}

Consider a target distribution with the following log-quartic density: 
  \begin{equation}
    \log{\pi(x_1, x_2)} = - x_1^4 - \frac{(2 x_2-x_1^2)^2}{2},
    \label{eq:quartic_density}
  \end{equation}
also illustrated in Figure \ref{fig:quartic_density}.  We simulate from this target distribution in four ways: using (i) adaptive Metropolis (AM) \cite{Haario2001} and (ii) mMALA, each paired with either (a) evaluations of the exact target density or (b) our local approximation scheme. In other words, the combinations (a+i) and (a+ii) are standard MCMC algorithms with two different proposal schemes, and the combinations (b+i) and (b+ii) pair local approximation MCMC with the same proposal schemes. We call these simulation approaches `exact+AM,' `exact+mMALA,' `LA+AM,' and `LA+mMALA,' respectively.

 \begin{figure}[h!]
    \centering
    \includegraphics[width=0.75\textwidth]{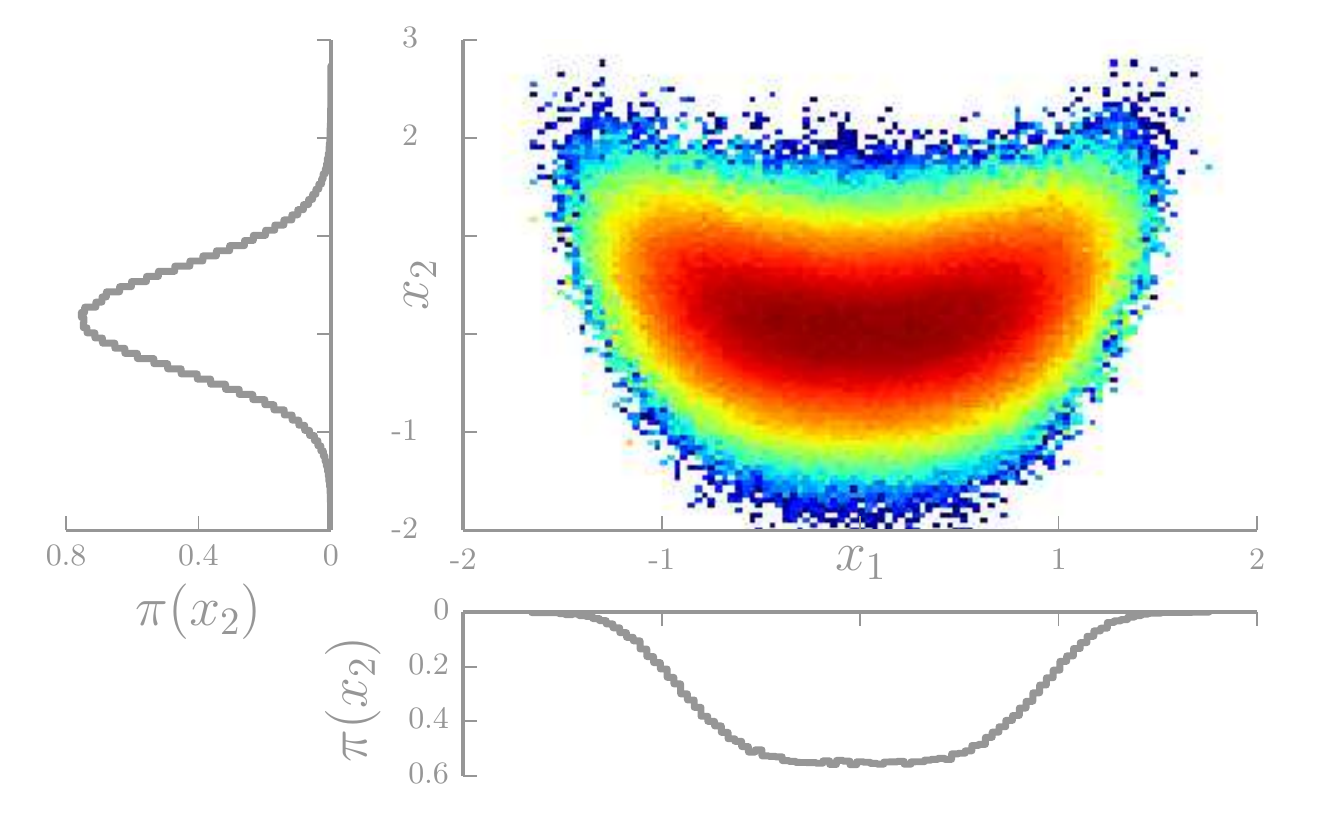}
    \caption{\change{Joint and marginal densities for the quartic target \eqref{eq:quartic_density}. We characterize this density with both exact density evaluations and local approximations, paired with adaptive Metropolis and mMALA.}}
    \label{fig:quartic_density}
  \end{figure}

For each of the simulation approaches defined above, we run 20 independent chains, each of length $6 \times 10^5$ steps. (No parallelism is employed in this example.) We then evaluate an \textit{expected squared relative error} $\bar{\varepsilon}^2$, as a function of the number of target density evaluations, for each approach. The quantity $\bar{\varepsilon}^2$ is defined as follows. Before computing expectations with respect to the target density, we discard the first $10^4$ samples of each chain as burn-in. Then we obtain a reference estimate $C_0$ for the target covariance matrix by pooling post-burn-in samples from all $40$  chains that employ exact target density evaluations. Next, for each independent chain (indexed by $i$) associated with a given simulation approach, we compute a running $t$-sample estimate of the target covariance $\hat{C}_t^{(i)}$ and define a relative squared error as
\begin{equation}
\label{eq:errorFrob}
\varepsilon_t^{2, (i)} \coloneqq \frac{ \Vert \hat{C}_t^{(i)} - C_0 \Vert_F^2}{\Vert C_0 \Vert_F^2} ,
\end{equation}
where $\Vert \cdot \Vert_F$ denotes the Frobenius norm. Then we average over the $20$ independent chains to obtain $\bar{\varepsilon}_t^2 = \frac{1}{20} \sum_{i=1}^{20} \varepsilon_t^{2, (i)}$. Figure~\ref{fig:quartic_error} plots $\bar{\varepsilon}_t^2$ versus the number of target \emph{density evaluations}, for each simulation approach. {For the exact+mMALA chains, which require direct evaluation of the gradients of $\log \pi$, we count each gradient evaluation as an additional density evaluation. For large-scale models, gradient evaluations (e.g., via an adjoint solve) might be more expensive than density evaluations, so this accounting is a conservative estimate of computational cost.}

Several trends are apparent in this figure. First, comparing the exact and local approximation chains, we see that the same level of accuracy is achieved with significantly fewer density evaluations when using approximations.  When target density evaluations are expensive, this translates to computational savings. We also note that the exact chains show a squared error decaying at roughly the standard Monte Carlo rate of $1/n$, where $n$ is the number of density evaluations. But the error decays more quickly when using local approximation MCMC.  This is because MCMC steps that do not require refinement of $\mathcal{S}_t$ can still reduce estimator variance---and thus the overall error---without using a target density evaluation.  Since we expect the refinement frequency to decay as the chain progresses, we also expect the error decay rate, in terms of the number of target density evaluations, to accelerate.

  \begin{figure}[h!]
    \centering
    \includegraphics[width=0.75\textwidth]{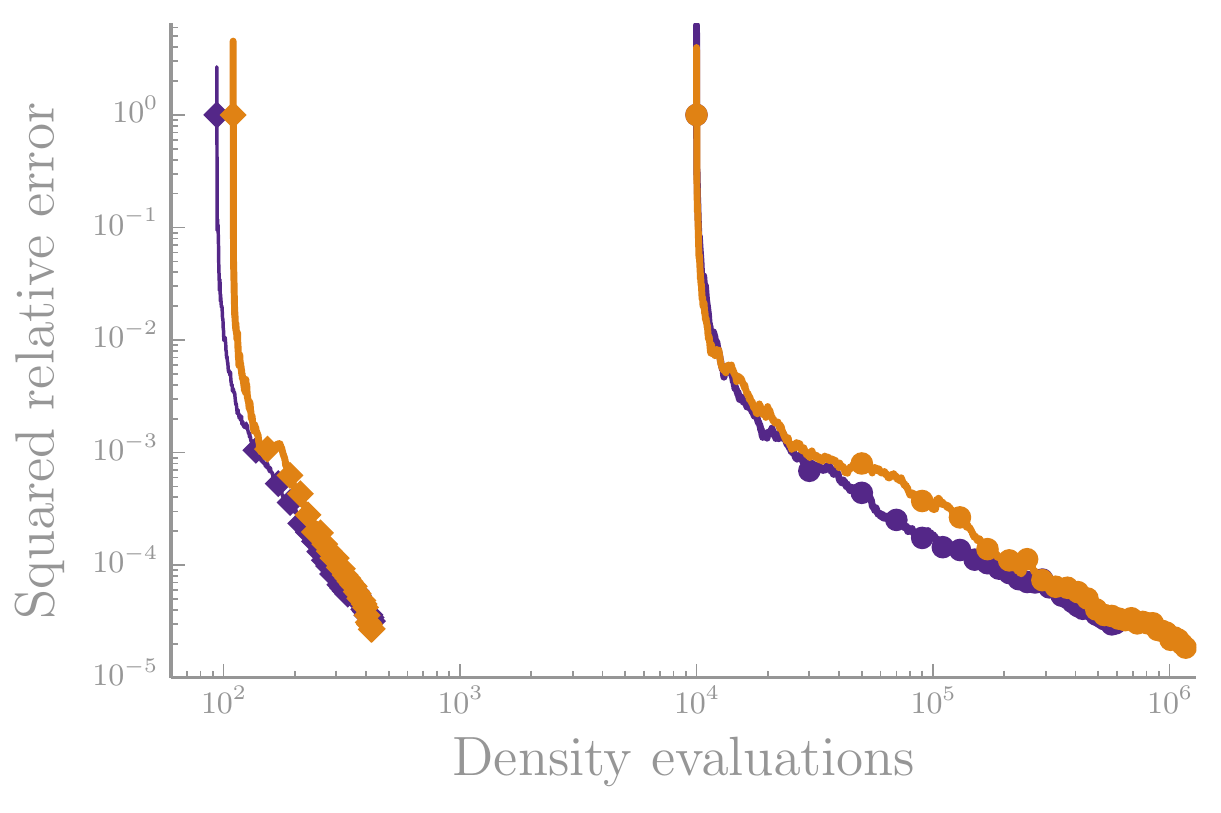}
    \caption{\change{Quartic example of Section~\ref{sec:quartic}: expected squared relative error $\bar{\varepsilon}_t^2$ as a function of the number of target density evaluations.  Purple lines correspond to  AM chains while  gold lines correspond to mMALA. The circles mark chains that employ exact evaluations of the target density, while diamonds mark chains using local approximation. In the exact case, mMALA requires evaluations of the target density and its gradient.  We assume gradient evaluations are comparable in cost to density evaluations and, therefore, count them as density evaluations.  Errors are obtained by averaging over 20 independent chains from each simulation approach, each of length $6\times10^5$ steps.}}
    \label{fig:quartic_error}
  \end{figure}

Another useful measure of sample quality is the effective sample size (ESS) of each chain, which we compute from each chain's integrated autocorrelation time \cite{Wolff04}. ESS is a measure of how many ``effectively independent'' samples have been generated from the target distribution. In Figure~\ref{fig:quartic_ess}, we plot the ESS for each independently realized chain, using each of the four simulation approaches. In general, the mMALA chains have larger ESS than the AM chains, reflecting their improved mixing for this target distribution. Also, the local approximation chains achieve nearly the same ESS as their exact counterparts, but with nearly three orders of magnitude fewer density evaluations. ESS of course varies from realization to realization; the dark symbols in the middle of the scatter plots illustrate the \textit{average} ESS and cost of each set of 20 chains. In general, we do not expect that introducing an approximation will improve mixing, and in this example ESS with exact evaluations (exact+AM or exact+mMALA) provides an upper bound on sampling performance. Indeed, Figure \ref{fig:quartic_ess} shows that the ESS is very slightly lower using local approximations; this is apparent in the MALA cases. Nonetheless, the local approximation chains achieve nearly the same ESS as their exact counterparts, but with nearly three orders of magnitude fewer density evaluations. Moreover, the improved mixing of mMALA in the exact case is preserved when using local approximations.

  \begin{figure}[h!]
    \centering
    \includegraphics[width=0.70\textwidth]{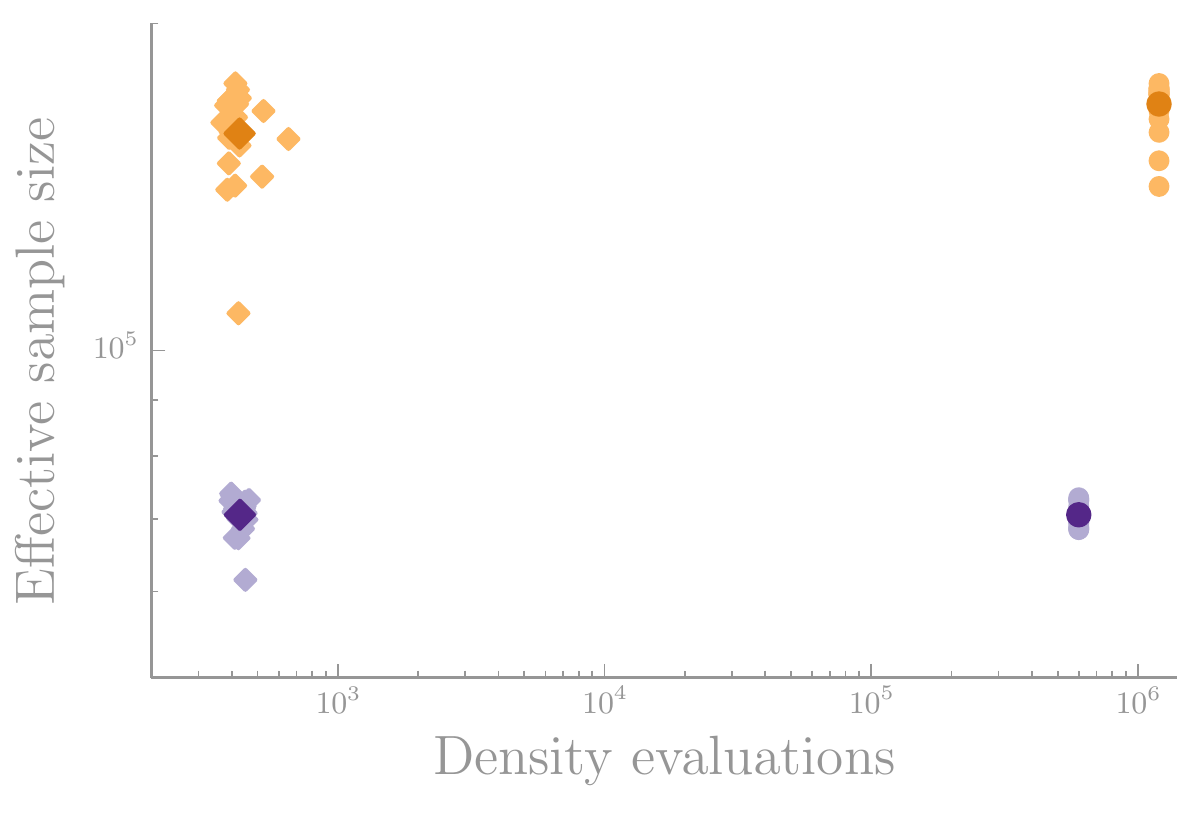}
    \caption{\change{Quartic example of Section~\ref{sec:quartic}: effective sample size for independent MCMC chains, each of length $6\times10^5$. As in Figure~\ref{fig:quartic_error},  purple symbols correspond to AM chains and gold symbols correspond to mMALA chains. Circles indicate chains using exact target density evaluations, while diamonds indicate the use of local approximations. The darker dot in each cluster is its expected value.  In the mMALA case with exact evaluations, we count target gradient evaluations as density evaluations.}}
    \label{fig:quartic_ess}
  \end{figure}
}

\subsection{Tracer transport problem}

Predicting the evolution of groundwater contaminant concentrations over time is vital to many monitoring and remediation efforts \cite{Matott2012}. A contaminant is typically modeled as a non-reactive tracer that diffuses and is advected by groundwater flow. Here we construct an inverse problem that simulates a monitoring configuration: the tracer concentration is observed at a small number of wells over a short period of time, and the subsurface conductivity field must be inferred given these data.

The conductivity field is assumed to be piecewise constant in six irregularly-shaped areas, reflecting different subsurface features (e.g., sand, clay, gravel) each with constant but unknown conductivities.  We consider a problem domain with two horizontal coordinates $x,y \in [0,1]^2$.  The true log-conductivity is depicted in Figure \ref{fig:tracer_kappa}. The conductivity is parameterized as 
$$
\kappa(x,y) = \exp{\theta_{\underline{j}(x,y)}},
$$
where $\underline{j}(x,y) \in \{1, 2, \ldots, 6\}$ is the smallest integer $j$ such that  $x_0^j \leq x \leq x_1^j$ and $y_0^j \leq y \leq y_1^j$, where the bounds $(x_0^j, x_1^j, y_0^j, y_1^j)$ are given in Table \ref{tab:rectDefs}. \change{The parameters $\theta_i$ are endowed with uniform priors; the upper and lower bounds for each prior are also given in Table \ref{tab:rectDefs}.}


\change{
  \begin{figure}
\centering
\includegraphics[width=0.85\textwidth]{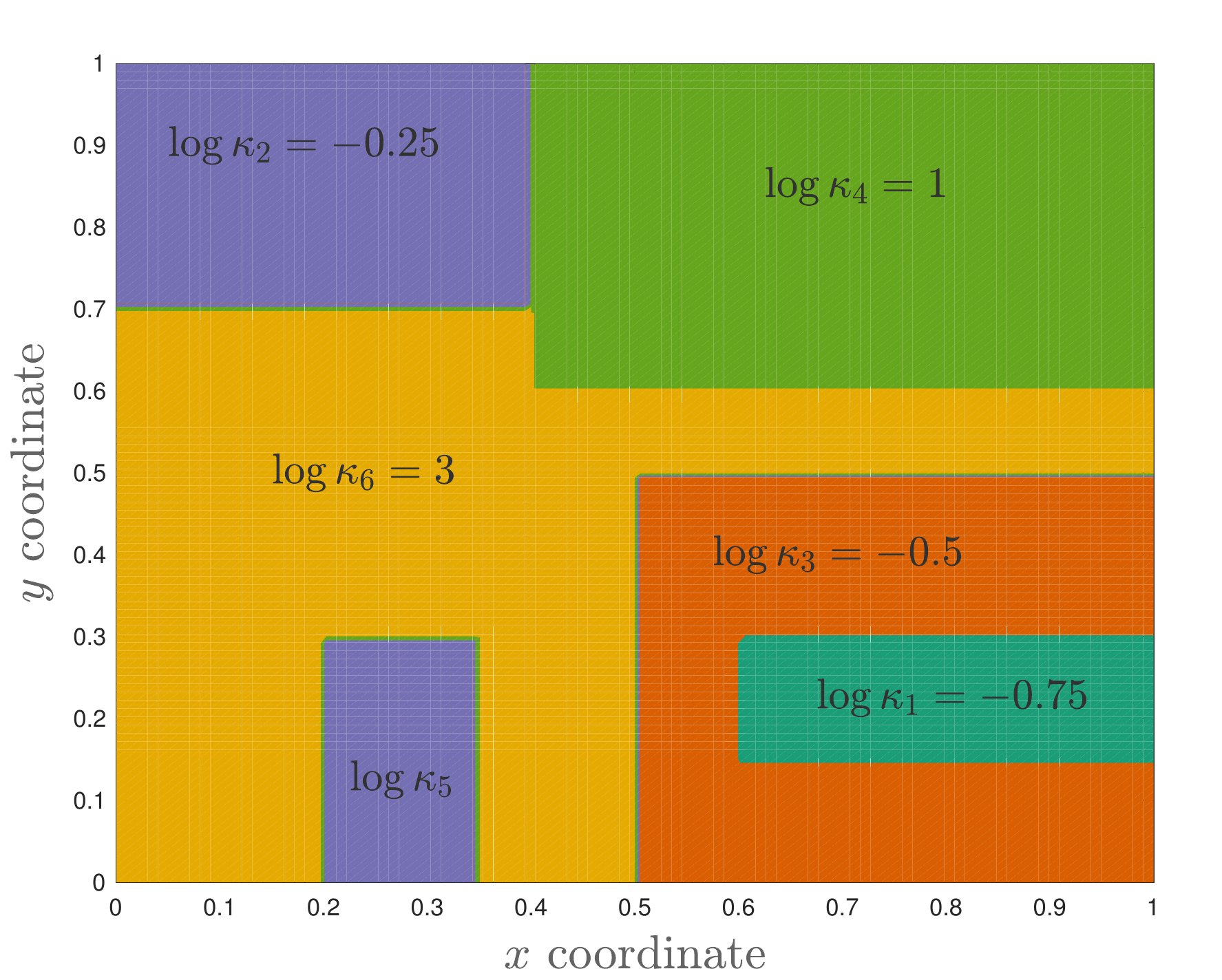}
\caption{The ``true'' log-conductivity field.}
\label{fig:tracer_kappa}
\end{figure}
 }


\begin{table}
  \change{
  \caption{\label{tab:rectDefs} True values of the parameters for the tracer problem. The log-conductivity at location $(x,y)$ is $\theta_{\underline{j}(x,y)}$, where  $\underline{j}(x,y)$ is the smallest integer $j$ such that $x_0^j \leq x \leq x_1^j$ and $y_0^j \leq y \leq y_1^j$; each parameter value $\theta_{\underline{j}}$ corresponds to $\log \kappa_{\underline{j}}$ in Figure \ref{fig:tracer_kappa}.}
  }
\ra{1.1}
\centering
\begin{tabular}{c|ccccc|cc}
\toprule
Parameter & $x_0^j$ & $x_1^j$ & $y_0^j$ & $y_1^j$ &  True value & Prior lower & Prior upper \\
\midrule
$\theta_1$ & 0.6 & 1 & 0.15 & 0.3 & -0.75 & -1 & 0 \\
$\theta_2$ & 0 & 0.4 & 0.7 & 0.1 & -0.25 & -1 & 1 \\
$\theta_3$ & 0.5 & 1 & 0 & 0.5 & -0.5 & -1 & 0 \\
$\theta_4$ & 0.4 & 1 & 0.6 & 1 & 1 & 0 & 2 \\
$\theta_5$ & 0.2 & 0.25 & 0 & 0.3 & -0.25 & -1 & 0 \\
$\theta_6$ & 0 & 1 & 0 & 1 & 3 & 2 & 5 \\
\bottomrule
\end{tabular}
\end{table}

Modeling tracer evolution requires first computing the hydraulic head, which determines the groundwater velocity.  Under the Dupuit approximation \cite{Dupuit1863}, the hydraulic head $h$ obeys the elliptic equation,
\begin{equation}
  \nabla \cdot \left( \kappa h \nabla h \right) = - f_h,
  \label{eq:hydraulicHead}
\end{equation}
where $\kappa(x,y)$ is the conductivity field and $f_h(x,y)$ is the hydraulic head forcing. In our problem setup, the forcing is created by pumping at four well locations, $(a_i, b_i) \in \{(0.15, 0.15),\, (0.85, 0.15),\, (0.85, 0.85),\, (0.15, 0.85)\}$, such that

\change{
  \be
  f_h(x,y) = \sum_{i=1}^4 p_i \exp{ \left(\frac{ (a_i - x)^2 + (b_i - y)^2}{0.02}\right)}, 
  \ee
%
where $p_i \in (10,\, 50,\, 150,\, 50)$.}  The model \eqref{eq:hydraulicHead} assumes homogeneous Dirichlet boundary conditions at $y=0$ and $y=1$ and homogeneous Neumann conditions at $x=0$ and $x=1$.  The Darcy velocity is determined by the hydraulic head gradient
\begin{equation}
  \left[ \begin{array}{c}
    u \\
    v
    \end{array} \right] = - h \kappa \nabla h.
  \label{eq:hydrologyVelocity}
\end{equation}
The time-dependent tracer concentration $c(x,y,t)$ then evolves given a flow-dependent dispersion tensor, via
\begin{equation}
  \frac{\partial c}{\partial t} + \nabla \cdot \left( \left( d_m \mathbf{I} + d_l \left[ \begin{array}{cc}
      u^2 & u v \\
      u v & v^2
    \end{array} \right] \right) \nabla c \right) - \left[ \begin{array}{c}
      u \\
      v 
      \end{array} \right] \cdot \nabla c = -f_t,
\end{equation}
where \change{$d_m=2.5\times 10^{-3}$ and $d_l=2.5\times 10^{-3}$} are dispersion coefficients and $f_t(x,y)$ is the tracer forcing.  \change{The tracer is forced by injection at each well location.  The source term is similar to the one forcing the hydraulic head
  \be
  f_t(x,y) = \sum_{i=1}^4 r_i \exp{ \left(\frac{ (a_i - x)^2 + (b_i - y)^2}{0.005}\right)}.
  \ee
where $r_i \in (10,\, 5,\, 10,\, 5 )$.}
The tracer has initial condition $c(x,y,0)=0$, and homogeneous Neumann conditions are enforced at all spatial boundaries. Since the hydraulic head forcing, tracer forcing, and dispersion coefficients are known, the forward model simply maps the conductivity to a time-evolving concentration field.  
Tracer observations are taken at \change{25 well locations: $(x_i, y_j)$ such that $x_i = 0.1 + \frac{i-1}{5}$ and $y_j = 0.1 + \frac{j-1}{5}$ for $i,\, j \in \{1,\, \hdots,\, 5\}$ at successive times $t \in \{0.1,\, 0.2,\, 0.3,\, 0.4,\, 0.5\}$.}

\change{
  \begin{figure}
    \centering
    \includegraphics[width=0.85\textwidth]{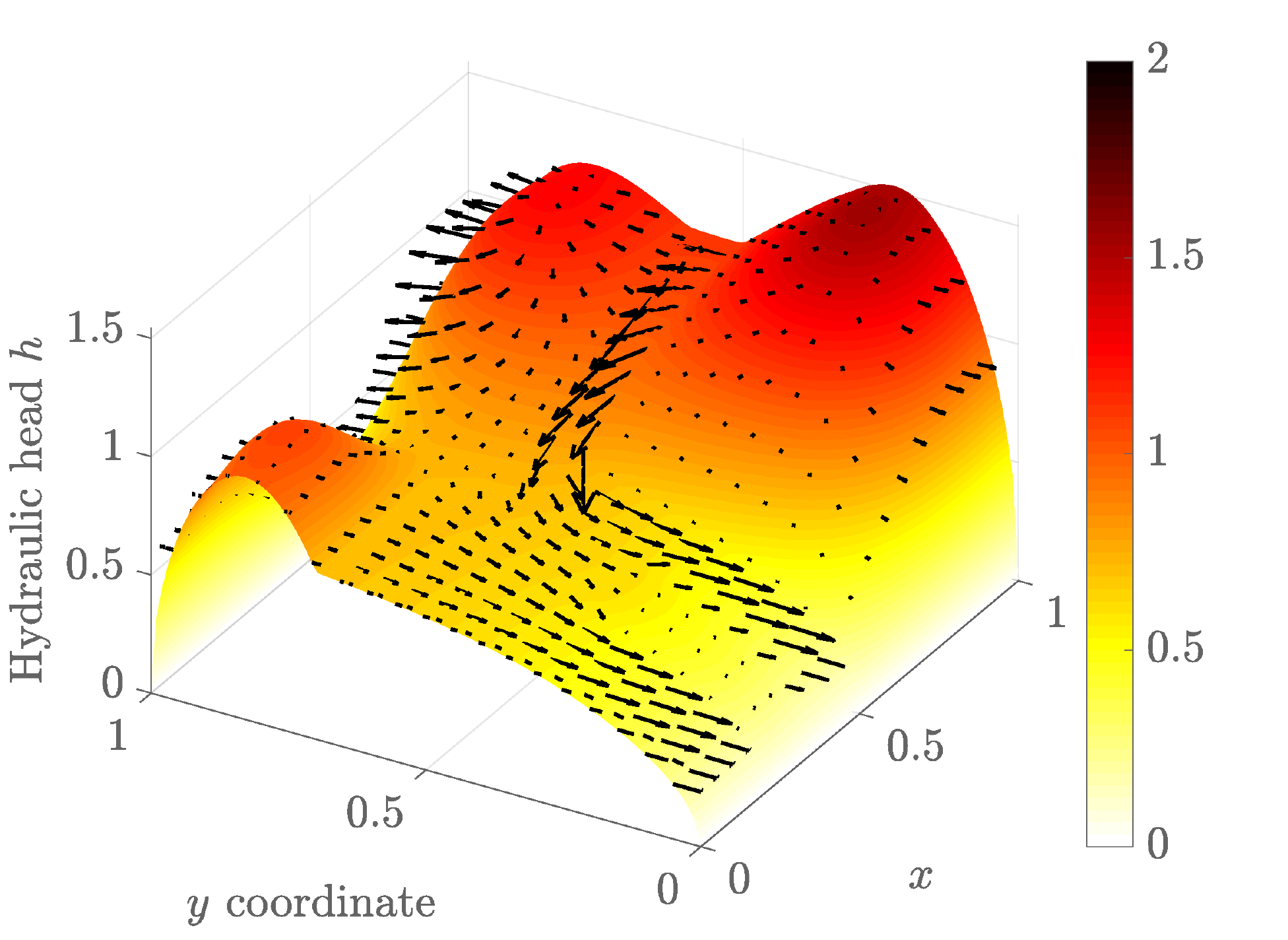}
    \caption{Hydraulic head (colormap) $h(x,y)$ computed via (\ref{eq:hydraulicHead}) and corresponding velocities (\ref{eq:hydrologyVelocity}) (arrows), given the conductivity field in Figure \ref{fig:tracer_kappa}.}
    \label{fig:tracer_pressure}
  \end{figure}
}


\change{
  \begin{figure}
    \centering
    \includegraphics[width=0.85\textwidth]{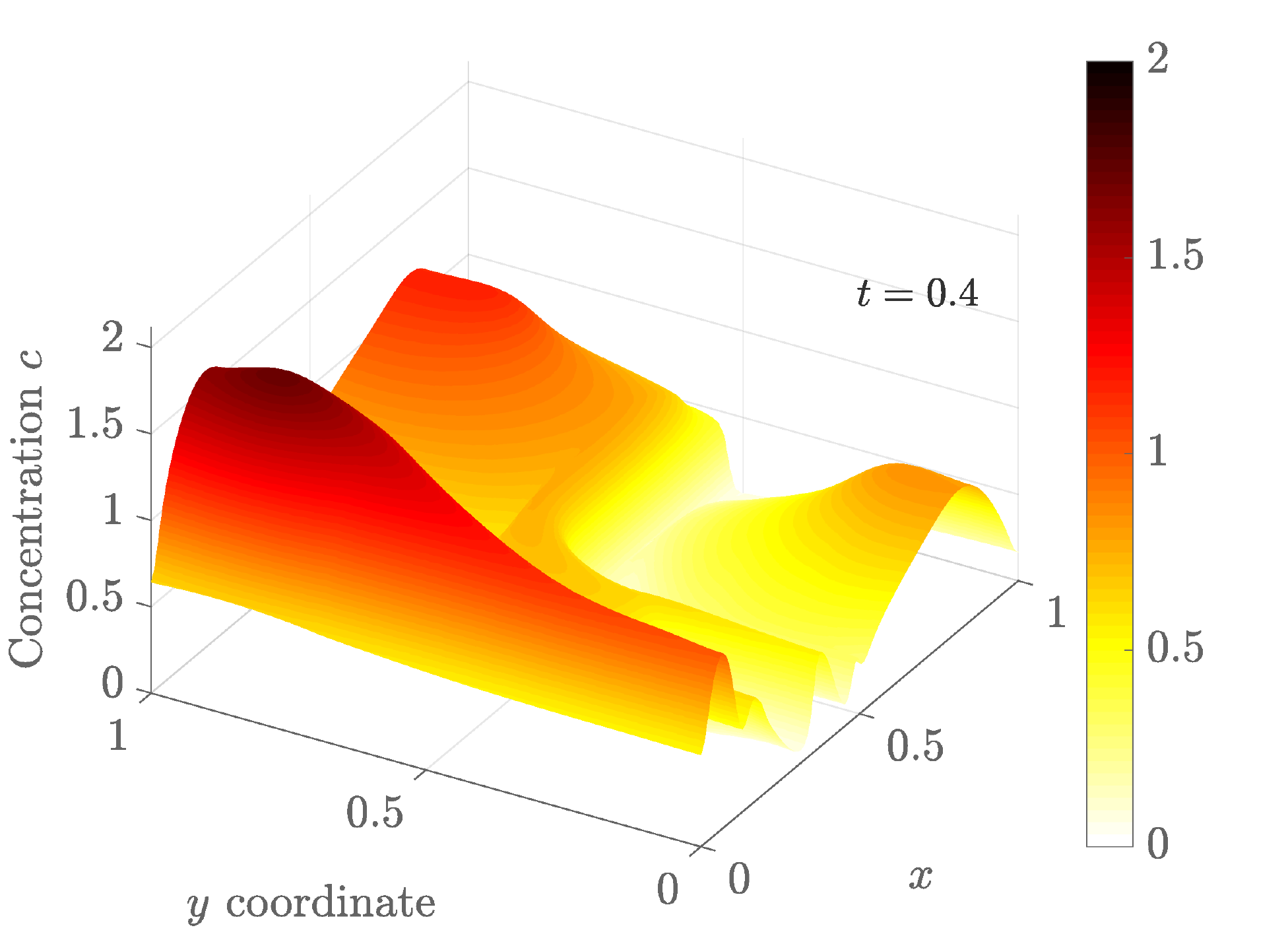}
    \caption{The tracer concentration $c(x,y,t=0.4)$, given the conductivity field in Figure \ref{fig:tracer_kappa}. The tracer is injected from a well in each corner.}
    \label{fig:tracer_observations}
  \end{figure}
}


The forward solver computes the steady state pressure and velocity fields, then simulates the tracer advection/diffusion. \change{Figure \ref{fig:tracer_pressure} shows the hydraulic head and velocity fields resulting from the true log-conductivity, and Figure \ref{fig:tracer_observations} shows the associated tracer concentration field at $t=0.4$. Overall, the parameter-to-observable map, from the log-conductivities to the time-dependent tracer concentrations, is strongly nonlinear and challenging to approximate.}
Data for inversion are generated using a standard finite element scheme on \change{a $200 \times 200$ mesh.} The solver used for inversion (i.e., to evaluate the posterior density at a candidate value of $\theta$) uses a coarser \change{$100 \times 100$ mesh.}  \change{In both cases (generating the data and within the inversion), time integration of the contaminant concentration field uses a Crank-Nicolson scheme.}  The likelihood assumes additive and i.i.d.\ errors for each observation of tracer concentration, Gaussian with mean zero and variance \change{$10^{-2}$}.  

\change{In a serial implementation, each evaluation of the forward model and hence the likelihood requires roughly 13 seconds of computation. Though we will mitigate this cost using local approximations, we also wish to compare our approach with chains that employ exact evaluations of the forward model. To make such comparisons feasible---and also to reflect computational practice for complex PDE models---we parallelize each forward model evaluation. We use four processors, which reduces the forward model's runtime to roughly 4 seconds of computation. Thus our parallel MCMC scheme actually employs two levels of parallelism: an outer level involving parallel chains, as described in Section~\ref{sec:parallel}, and an inner level within each forward model evaluation.}


\change{
  \begin{figure}[htb]
    \centering
    \subfloat[exact+AM]{
      \def\svgwidth{0.4\textwidth}
      \includegraphics[width=0.45\textwidth]{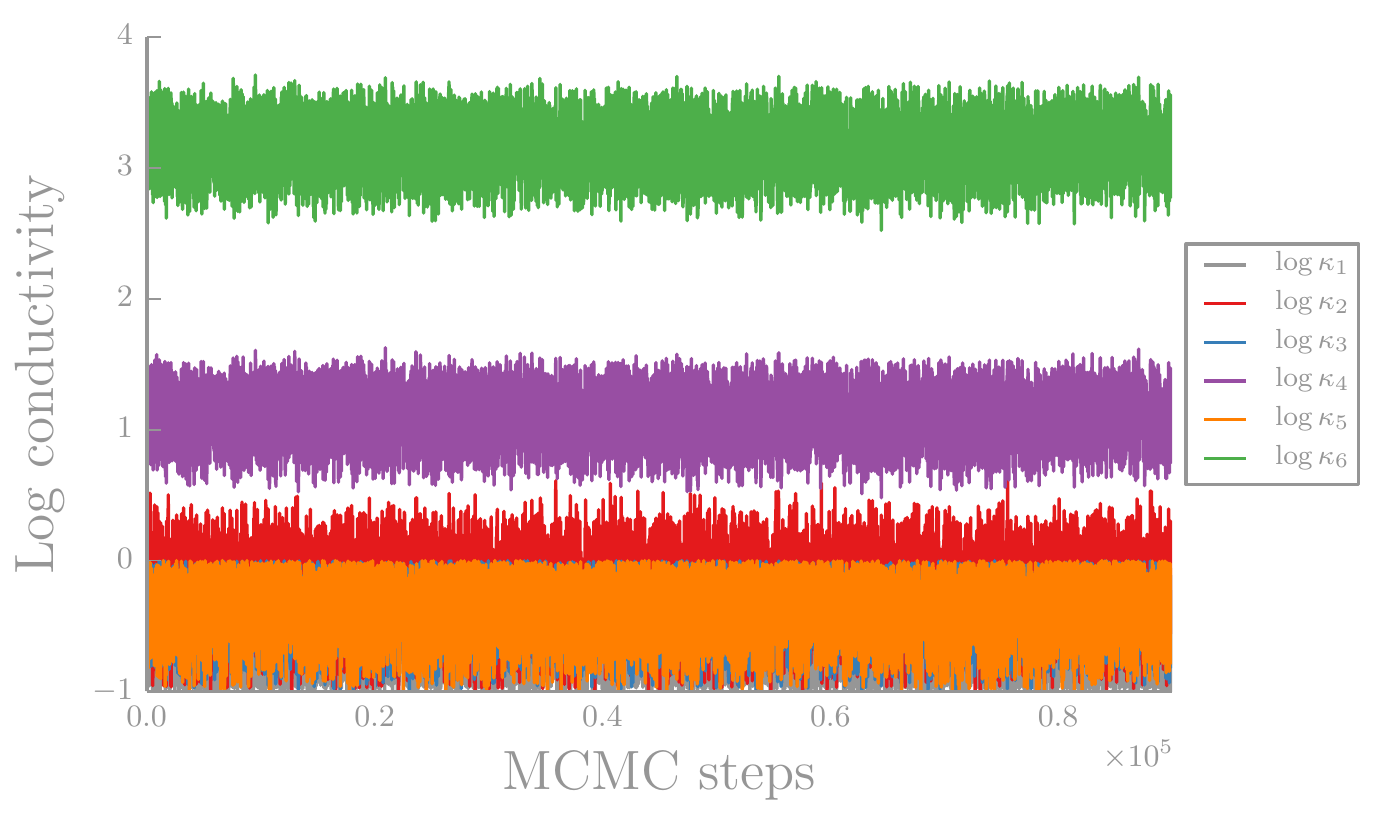}
    \label{fig:tracer_mixing_exact_AM}}
    \subfloat[LA+AM]{
      \def\svgwidth{0.4\textwidth}
      \includegraphics[width=0.45\textwidth]{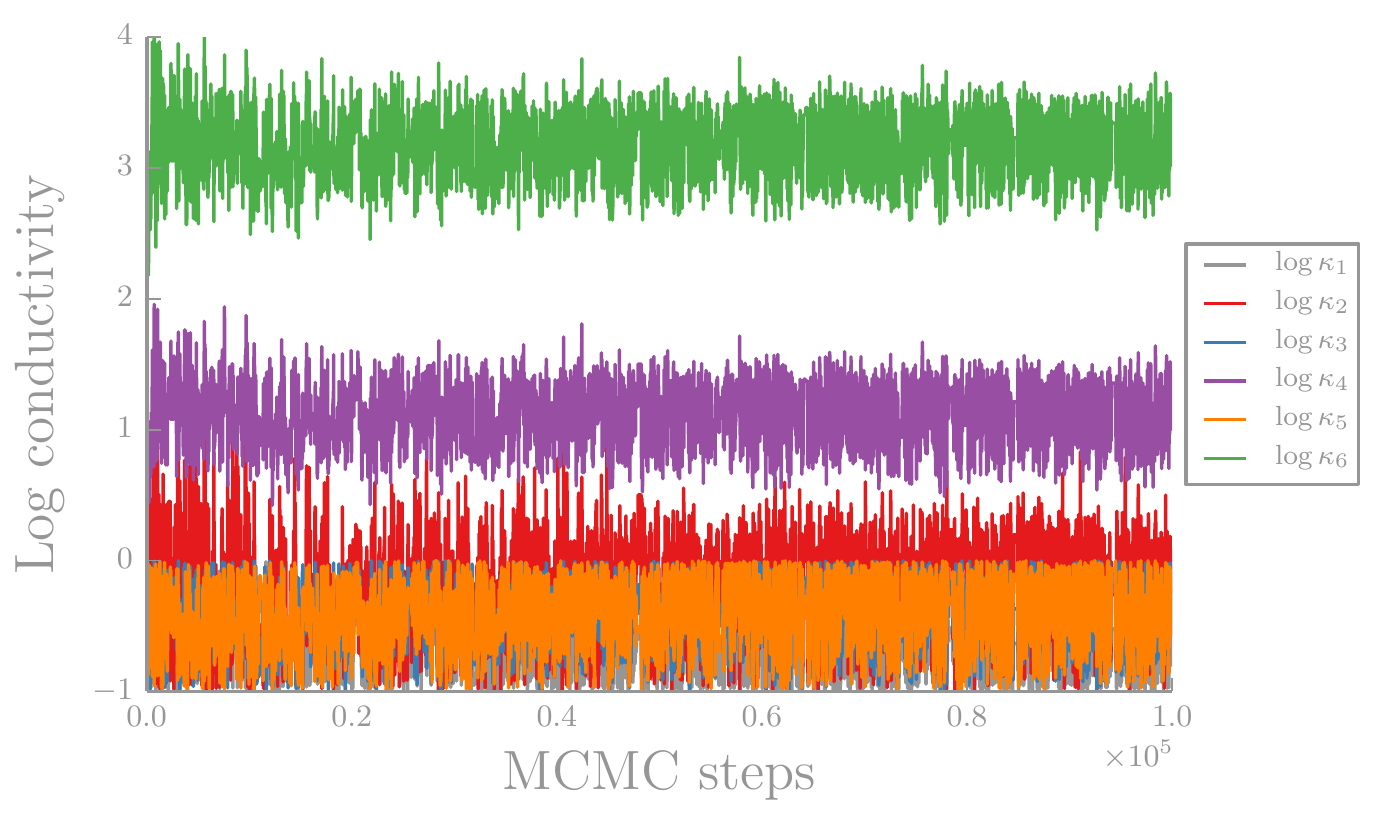}
      \label{fig:tracer_mixing_LA_AM}}\\
    \subfloat[exact+MALA]{
      \def\svgwidth{0.4\textwidth}
      \includegraphics[width=0.45\textwidth]{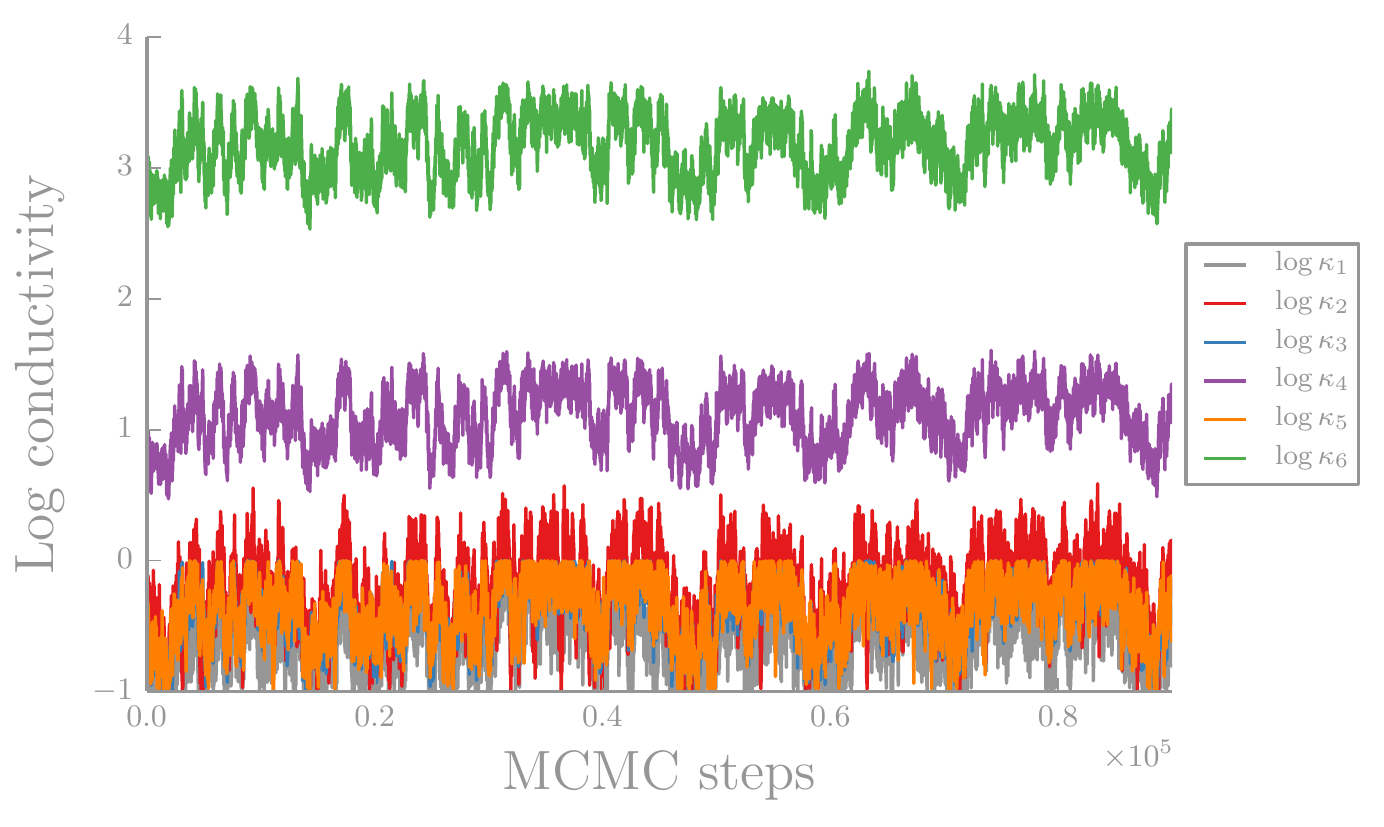}
      \label{fig:tracer_mixing_exact_MALA}}
    \subfloat[LA+MALA]{
      \def\svgwidth{0.4\textwidth}
      \includegraphics[width=0.45\textwidth]{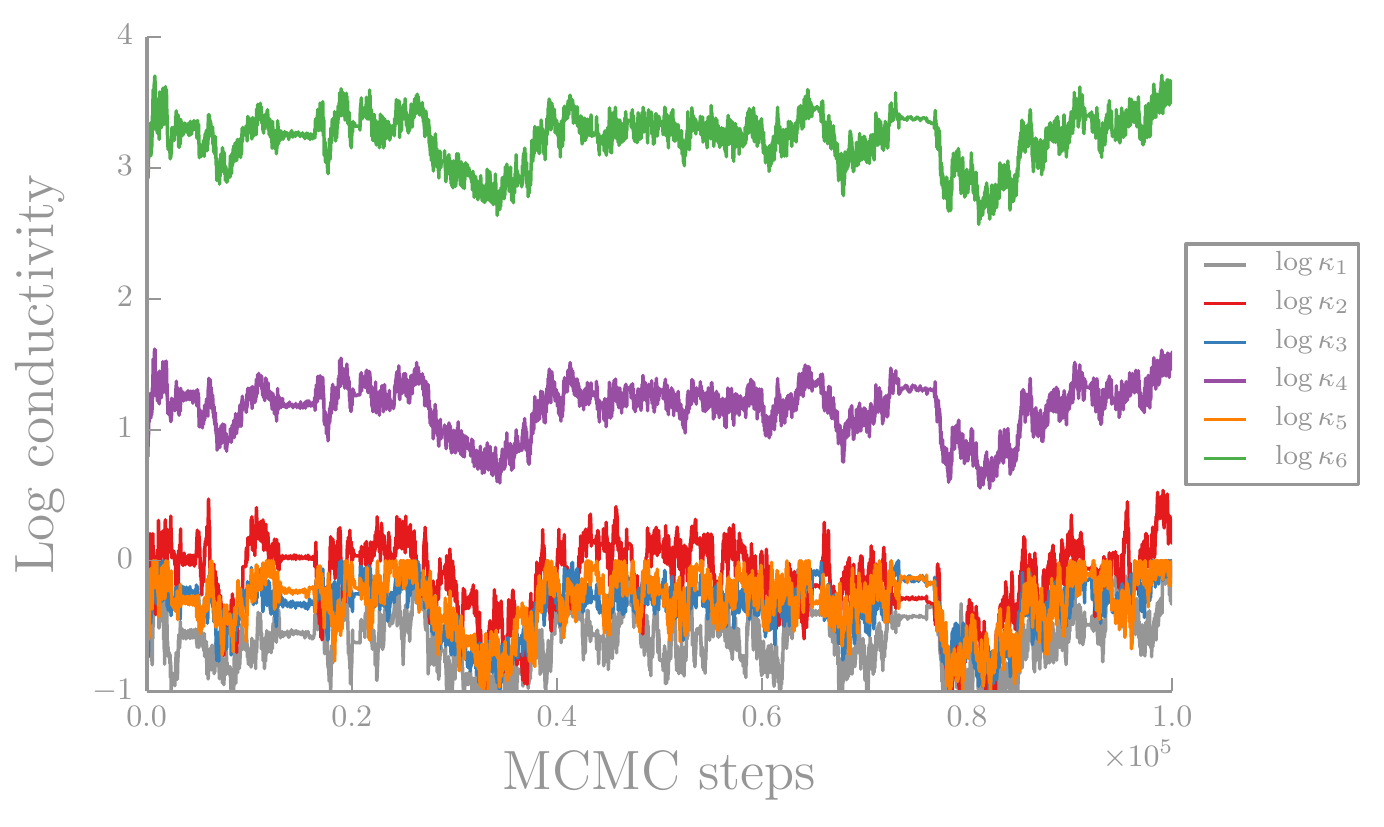}
      \label{fig:tracer_mixing_LA_MALA}}
    \caption{\change{Tracer transport problem: trace plots for a single MCMC chain (state versus MCMC iteration) either using exact density evaluations or employing a local approximation (LA), paired with either an AM or mMALA proposal.}}
    \label{fig:tracer_mixing}
  \end{figure}
}

The posterior distribution in this problem has no standard analytical form. To establish a baseline for accuracy comparisons, we instead run \change{31} independent exact+AM chains. Each chain is $10^5$ steps long, which requires several days (per chain) of computation. After discarding the first $10^4$ samples of each chain as burn-in, the remaining samples are pooled and used to characterize the posterior distribution. \change{Figure \ref{fig:tracer_mixing_exact_AM} shows a trace plot of one such exact+AM chain, for all six components of the state.  Visually, the transient behavior of the chain appears exhausted well before ${10^4}$ steps, justifying our choice of burn-in.}   One- and two-dimensional marginals of the posterior distribution, computed using the pooled exact+AM chains, are shown in Figure \ref{fig:tracer_density}. The distribution has distinctly non-Gaussian structures, \change{and the regions of high posterior probability seem to concentrate around the ``true'' parameters given in Table \ref{tab:rectDefs}.}

\change{While the AM chains appear to mix well for this problem, mMALA proves far less effective. Figure \ref{fig:tracer_mixing_exact_MALA} shows trace plots of an exact+mMALA chain targeting the same posterior. This calculation is rather laborious (over $415$ hours), as direct evaluations of the gradient of the forward model are not available; instead we compute the gradients using finite differences. This simulation is not intended as a practical approach, but rather to assess the performance of mMALA in the absence of local approximations. We find that the chain mixes quite poorly; the ESS after $10^5$ MCMC steps is only 80. Based on the results of Section~\ref{sec:quartic}, we do not expect mMALA paired with local approximations to fare any better and, indeed, Figure \ref{fig:tracer_mixing_LA_MALA} shows that mixing is poor for an LA+mMALA chain. Given these results, we focus the rest of this section on AM chains, with a goal of exploring the performance of parallel LA schemes. 
More broadly, we note that there is no guarantee that MALA schemes should improve over adaptive Metropolis (or even simple random-walk Metropolis) in low-dimensional problems such as those considered here. The potential for such improvements is problem-dependent and sometimes rather delicate, as was recognized almost immediately when MALA was introduced \cite{roberts1998optimal}. 

We first examine the convergence of estimates produced by \textit{single} LA+AM chains. Algorithm settings are given in Appendix~\ref{apx:code}, and code for this example is provided in the Supplementary Material. We run 51 independent chains, again discarding the first $10^4$ samples of each chain as burn-in. For consistency, we simply choose the same burn-in period for the exact chains and LA chains. If anything, this choice is less favorable to LA---though asymptotically it is immaterial. The mixing of a single LA+AM chain is visualized by the trace plot in Figure \ref{fig:tracer_mixing_LA_AM}. Initially, the chain does not mix as quickly as in the exact+AM case, but mixing improves as the approximation is refined, and overall the chain appears to explore the posterior quite efficiently. We also emphasize that the horizontal axis in Figure \ref{fig:tracer_mixing_LA_AM} does not reflect computational cost, since the latter is dominated by target density evaluations rather than MCMC steps. 

To assess error versus computational cost,  Figure \ref{fig:tracer_accuracyTime_single} shows, for each individual chain, the squared relative error in a running posterior covariance estimate versus wall clock time. The squared relative error $\varepsilon_t^{2, (i)}$ is defined in \eqref{eq:errorFrob}, where the reference value  $C_0$ of the posterior covariance is computed by pooling all $2.79\times 10^6$ available exact+AM samples. For comparison, we also plot error versus run time for 31 exact+AM chains. When reporting wall clock times here and below, we include the computational cost of the entire chain, including the cost of portions discarded as burn-in.
Error in the LA chains decreases steadily and reaches an accuracy comparable to the exact chains, but with significantly shorter run times. We also notice that decay rate of the expected error (bold red line in Figure \ref{fig:tracer_accuracyTime_single}) in the LA case seems to accelerate. As noted in the quartic example (where longer chains accentuated this trend), this acceleration is due to the fact that refinements happen less frequently as the chain progresses, while additional MCMC steps continue to reduce the error.  
}

\begin{figure}
\centering
\includegraphics[width=.8\textwidth]{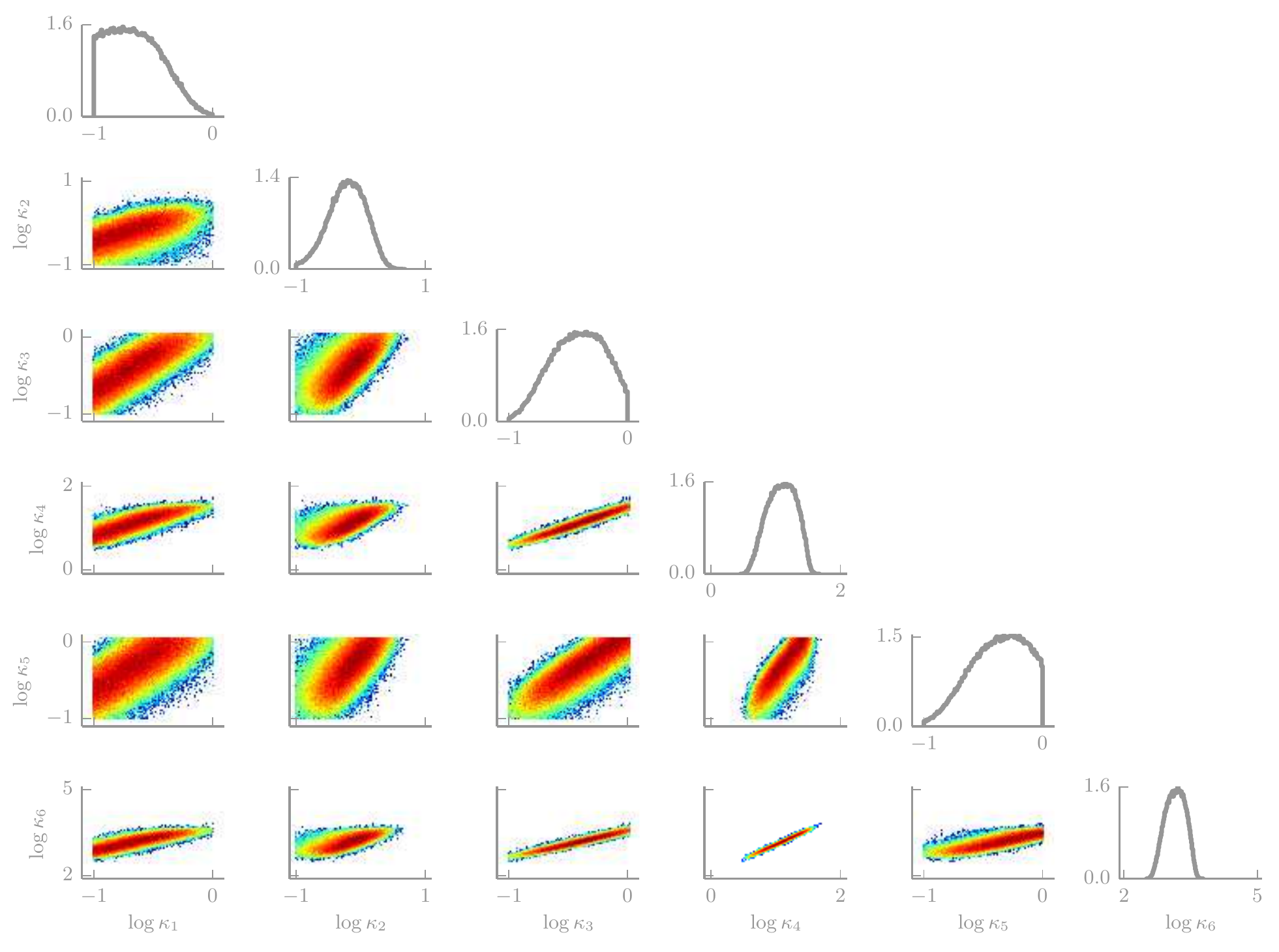}
\caption{One- and two-dimensional posterior marginals of the parameters in the hydrologic tracer transport problem.  \change{Bounds on each subplot axis are the upper and lower bounds for the uniform prior on the corresponding parameter (Table \ref{tab:rectDefs}).}}
\label{fig:tracer_density}
\end{figure}


\begin{figure}
\centering
\includegraphics[width=.8\textwidth]{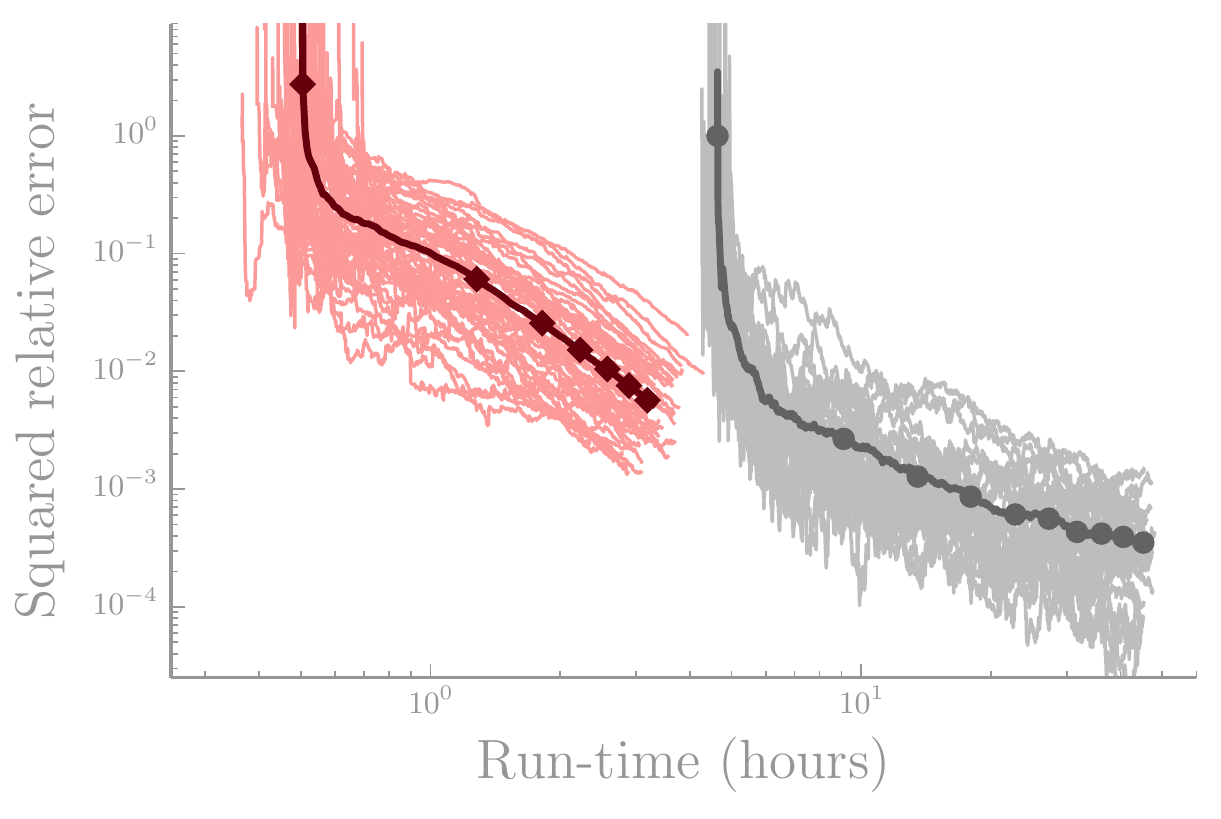}
\caption{\change{Tracer transport problem: relative squared error in the posterior covariance estimates produced by independent single (i.e., not parallel) AM chains, versus run time. The light gray lines correspond to 31 independent exact+AM chains, each of length $10^5$.  The dark gray line shows the expected error for this exact case.  The light red lines correspond to 51 single LA+AM chains, each of length $2 \times 10^5$. The dark red line shows the expected error in the approximate case.}}
\label{fig:tracer_accuracyTime_single}
\end{figure}

\change{
The local approximation sampler becomes even more effective in a parallel chain setting, where concurrent chains are allowed to {share} posterior density evaluations by building a common $\mathcal{S}_t$. The colored lines in Figure \ref{fig:tracer_accuracyTime_Parallel} show error versus run time for increasing levels of parallelism $k$, from 1 to 30 chains. To assess the variability of the error, each $k$-chain simulation is repeated several times; each such realization is shown on the figure. Each individual LA+AM chain (within a group of $k$) has a fixed length of $10^5$ steps and, as before, the first $10^4$ samples of each chain are discarded as burn-in. The error plotted on the vertical axis is again the squared relative error in the posterior covariance. Two trends are visible in the colored lines. First, as the number of chains increases, the error decreases. In and of itself, this is not surprising: summing across the chains, we accumulate more MCMC samples and, along the way, seek more model evaluations to refine the local approximations (this will be quantified precisely in subsequent figures). But the colored lines \textit{also} move to the left as the number of parallel chains increases; in other words, both the error \textit{and} the run time are reduced. This trend contrasts with that obtained by simply running exact+AM chains in parallel, an exercise depicted by the gray lines in Figure \ref{fig:tracer_accuracyTime_Parallel}. Using this na\"{i}ve parallelization, adding more chains decreases the sampling error but does not affect the run time. Moreover, the run times of LA+AM are one to two orders of magnitude smaller for comparable errors. 

}

\begin{figure}
\centering
\includegraphics[width=.8\textwidth]{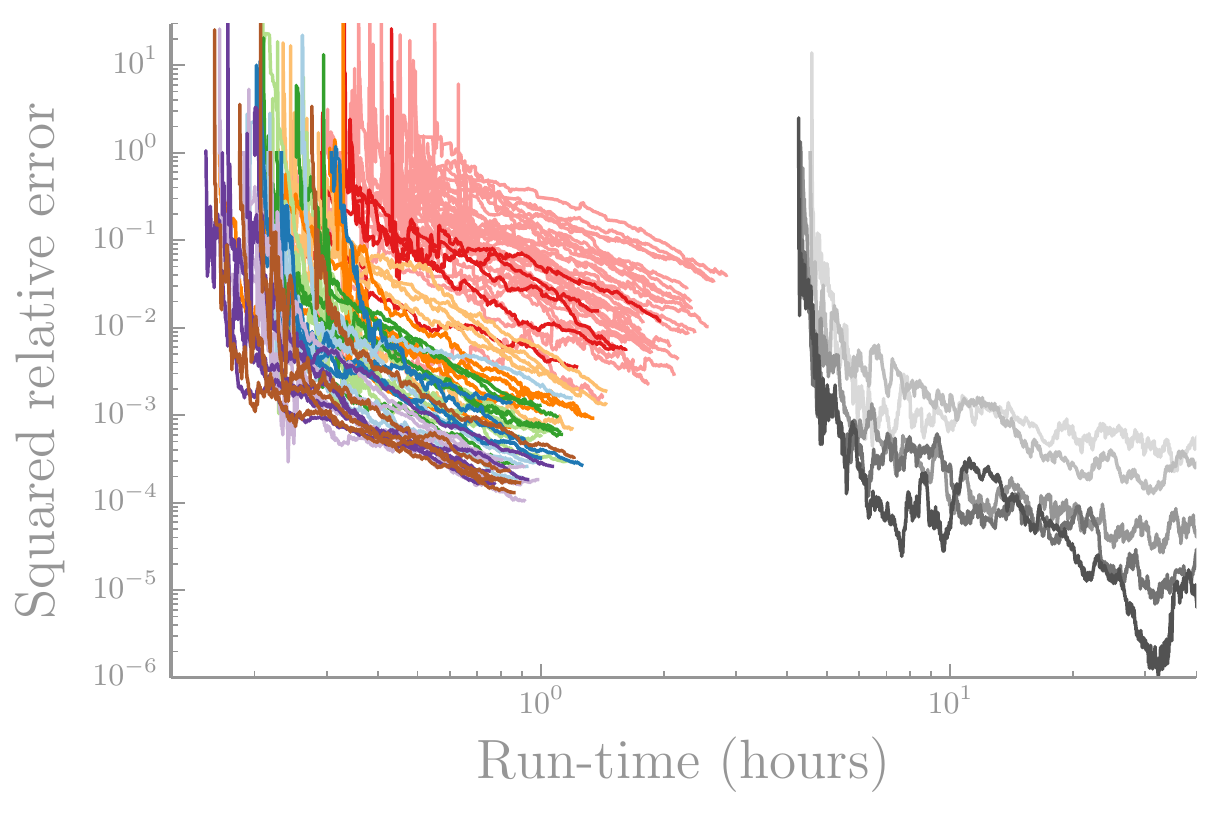}
\caption{\change{Tracer transport problem: relative squared error in the posterior covariance estimates obtained from {parallel} MCMC chains. The gray lines are computed using exact target density evaluations for $k \in \{1,\,2,\,4,\,8, \mbox{ or } 16\}$ chains.  Darker shades correspond to simulations with more parallel chains. The colored lines are computed using local approximation MCMC.  We use $k \in \{1,\,2,\,4,\,6,\,8,\,10,\,13,\,16,\,20,\,25,\mbox{or }30\}$ chains corresponding to light red, red, light orange, orange, light green, green, light blue, blue, light purple, purple, and brown, respectively. The error is that of a running covariance estimate obtained by pooling samples from the $k$ concurrent chains.  Sharing posterior density evaluations shortens the runtime \emph{and} reduces the error.}}
\label{fig:tracer_accuracyTime_Parallel}
\end{figure}



\change{We can also characterize the behavior of parallel local approximations by evaluating ESS as a function of computational effort. Figure \ref{fig:tracer_essTime} shows ESS as a function of wall clock time. First, as a baseline, consider again running exact+AM chains of length $10^5$ in parallel, depicted by gray and black circles. We certainly expect parallel chains to yield a larger ESS once their samples are pooled, and indeed the circles jump upwards as we increase the number of concurrent chains from 1 to 30. Increasing the number of chains in the exact case does not, however, change the time it takes to simulate each chain; thus the gray and black dots are vertically aligned at the same run times. In the parallel LA+AM cases, depicted by colored diamonds, the story is more interesting. As the number of parallel chains increases, the symbols move upwards \emph{and} to the left, reflecting decreased run times. Several independent realizations of each parallel case are presented, since the simulations are not deterministic. Note that the ESS of a single LA+AM chain (light red) is lower than that of an exact+AM chain of the same length; this is expected, given the mixing comparison at the top of Figure \ref{fig:tracer_mixing}. Similarly, $30$ parallel exact+AM chains have a higher combined ESS than $30$ parallel LA+AM chains (the brown diamonds of Figure \ref{fig:tracer_essTime}). But the latter entail a vastly smaller computational effort. Because of the collaboration among chains, we can compute a larger number of independent samples in less time.

Our second comparison uses a more stringent measure of parallel efficiency: ESS per chain--hour, i.e., the total ESS divided by the number of chains and the wall clock time.  This measure removes the intrinsic advantage of having multiple chains.  A na\"{i}ve MCMC parallelization yields no improvement in efficiency according to this metric: the number of independent samples might grow linearly with the number of chains, but this growth is normalized away. Figure \ref{fig:tracer_essTimeRel} shows this behavior for exact+AM chains using gray circles. In contrast, the results of parallel local approximation, depicted by colored diamonds, show steady gains in ESS/(\change{chain--hour}) with additional parallel chains.} This gain is the result of collaboration among the chains in the most computationally expensive element of the inference problem---evaluating the posterior density---by sharing evaluations from which we construct a shared surrogate model.  We note that the total number of model evaluations performed during the parallel experiments is still higher than in a single-chain case, but since the additional evaluations are parallelized, the run time is shorter.

\begin{figure}
\centering
\includegraphics[width=.8\textwidth]{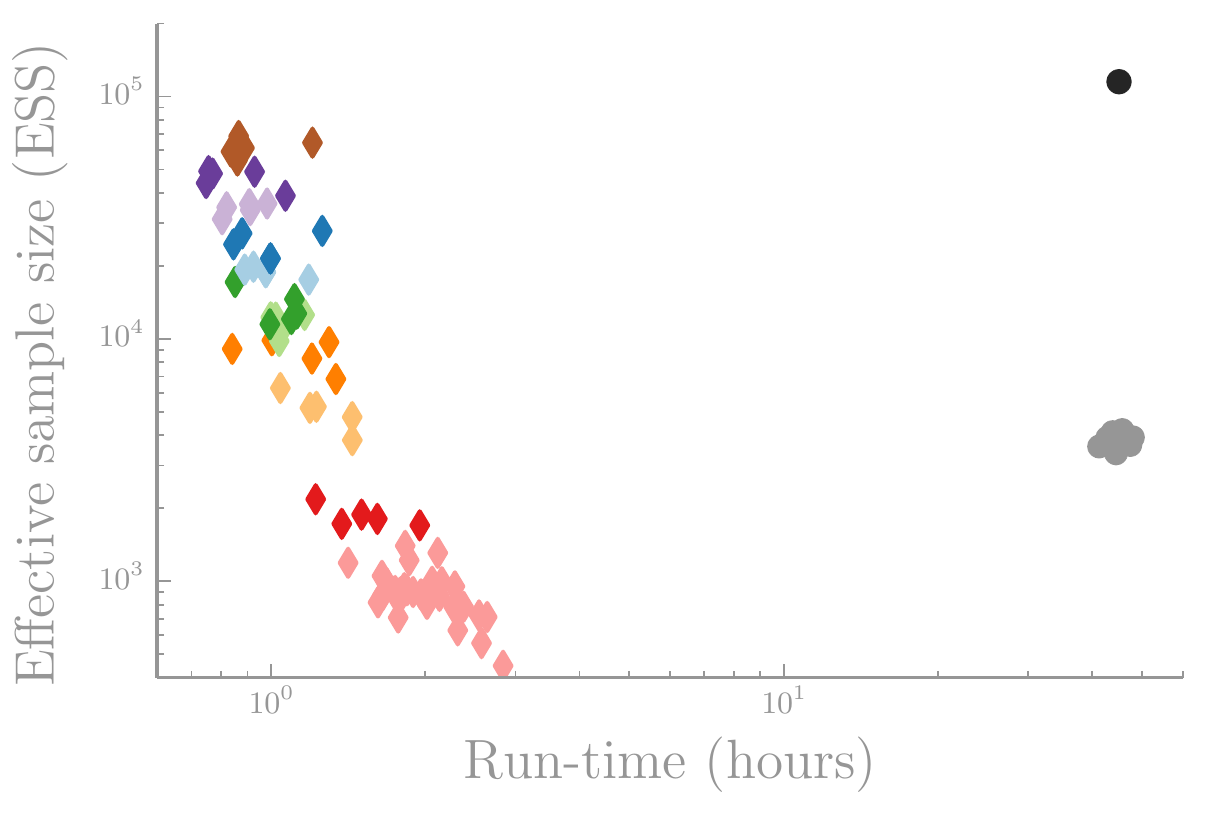}
\caption{\change{Results of the parallel efficiency study on the tracer transport problem, comparing run time to the total ESS across parallel chains. Each symbol represents one (parallel) experiment.  Light gray circles and light red diamonds correspond to single chains of length $10^5$ (with $10^4$ burn-in) using exact evaluations and local approximation, respectively.  Each colored diamond represents a different number $k$ of parallel LA+AM chains, $k \in \{1,\,2,\,4,\,6,\,8,\,10,\,13,\,16,\,20,\,25,\mbox{or }30\}$, and the colors are as in Figure~\ref{fig:tracer_accuracyTime_Parallel}.  The black circle corresponds to 30 parallel exact+AM chains. Using local approximations, running more chains increases ESS \emph{and} decreases run time.}}
\label{fig:tracer_essTime}
\end{figure}



\begin{figure}
\centering
\includegraphics[width=.8\textwidth]{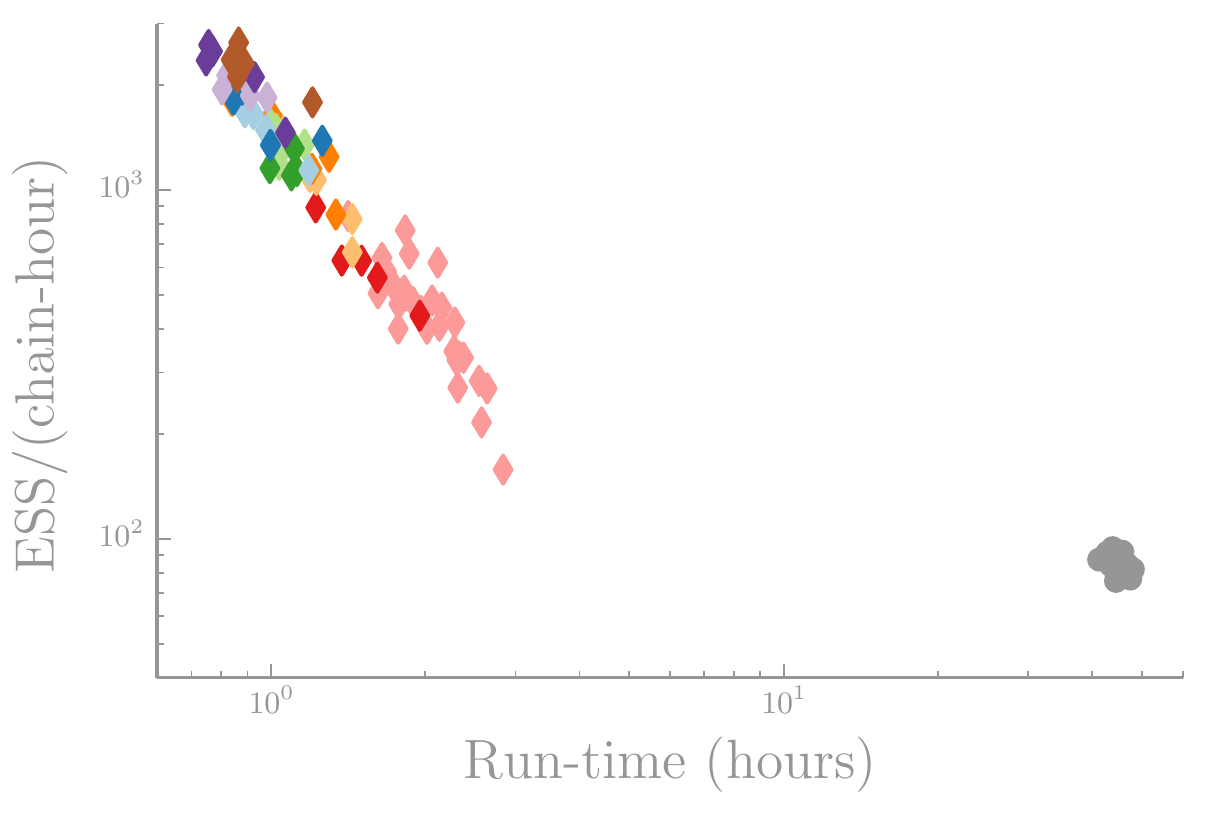}
\caption{Results of the parallel efficiency study on the tracer transport problem, comparing run time to the effective number of samples produced per chain--hour. Each symbol represents one (parallel) experiment. Again, colors from light red to brown correspond to more parallel chains, $k \in \{1,\,2,\,4,\,6,\,8,\,10,\,13,\,16,\,20,\,25,\mbox{or }30\}$.  Using parallel local approximations, ESS per chain--hour increases with the number of chains.}
\label{fig:tracer_essTimeRel}
\end{figure}



\subsection{Shallow-shelf ice stream model}

Continental ice sheets are divided into basins that are drained by fast-flowing river-like ice streams. These ice streams regulate the discharge of ice mass into the ocean, and hence play a key role determining the overall behavior of the ice sheet. The IPCC has identified the Antarctic contribution to sea-level rise as an important source of uncertainty in climate projections, and ice streams have become a widespread topic of study \cite{IPCC2014Synthesis, IPCC2014Chapter13}. 

Ice stream dynamics are not completely understood, nor are the factors governing their dynamics. Although satellite data provide plentiful observations of topology and surface velocities \cite{bedmap2, IceBridge, ICESat}, basal properties, such as the friction between the base of the ice and the underlying ground---the \emph{basal friction}---are difficult or impossible to observe directly. The basal friction varies widely, and may be higher if the ice is scraping directly against rough bedrock or lower if the ice rests on till, a mixture of mud and rock that lubricates the interface. The basal friction also parameterizes basal lubrication caused by melting basal ice (possibly due to geothermal or frictional heating).  Previous work infers basal friction given surface velocity observations \cite{MacAyeal1993, Petraetal2012}; quantifying uncertainty in the basal friction, however, requires considerable computational expense and/or posterior approximations \cite{Petraetal2014}.  In this example, we explore the problem of inferring the basal friction from surface velocities, employing local approximations to reduce the computational cost of MCMC.

Ice is often modeled as a highly, viscous non-Newtonian, and incompressible fluid. In particular, the shallow-shelf approximation \cite{MacAyeal1989, MacAyeal1993, MacAyeal1997} describes ice stream velocity assuming that \textit{(i)} the horizontal extent ($\mathcal{O}(100 \, \mbox{km})$) is much larger than the vertical extent ($\mathcal{O}(1 \, \mbox{km})$); and \textit{(ii)} the vertical velocity is zero.  The nondimensionalized shallow shelf equations for a two-dimensional horizontal domain $[0,1]^2 \ni (x,y)$ are
\begin{eqnarray*}
  \frac{\partial}{\partial x}\left( 2 \nu h \left( 2 \frac{\partial u}{\partial x} + \frac{\partial v}{\partial y} \right) \right) + \frac{\partial}{\partial y}\left( \nu h \left( \frac{\partial u}{\partial y} + \frac{\partial v}{\partial x} \right) \right) - \beta \left|u\right|^{m-1} u = h \frac{\partial s}{\partial x} \\
  \frac{\partial}{\partial x}\left( 2 \nu h \left( 2 \frac{\partial v}{\partial y} + \frac{\partial u}{\partial x} \right) \right) + \frac{\partial}{\partial y}\left( \nu h \left( \frac{\partial v}{\partial x} + \frac{\partial u}{\partial y} \right) \right) - \beta \left|v\right|^{m-1} v = h \frac{\partial s}{\partial y},
\end{eqnarray*}
with boundary conditions 
\begin{gather*}
  \begin{array}{ccccccc}
    u = 0 & \mbox{and} & v = -1 & \mbox{at} & x = 0 & \mbox{and} & x = 1,
  \end{array} \\
  \begin{array}{ccccccccc}
    \frac{\partial u}{\partial y} + \frac{\partial v}{\partial x} = 0 & \mbox{at} & y=0 & \mbox{and} & y=1, \mbox{and} & 2 \frac{\partial v}{\partial y} + \frac{\partial u}{\partial x} = 0 & \mbox{at} & y=1,
  \end{array}
\end{gather*}
where
\begin{equation*}
  \nu = \frac{1}{2}\left(\left(\frac{\partial u}{\partial x}\right)^2 + \left(\frac{\partial v}{\partial y}\right)^2 + \frac{1}{4} \left( \frac{\partial u}{\partial y} + \frac{\partial v}{\partial x} \right)^2 + \frac{\partial u}{\partial x} \frac{\partial v}{\partial y} \right)^{-\frac{n-1}{2n}}
\end{equation*}
is the velocity-dependent viscosity \cite{MacAyeal1989, MacAyeal1993, MacAyeal1997}.  Assuming that the surface elevation $s(x,y)$ and ice thickness $h(x,y)$ are known and that $n=\frac{1}{m} = 3$, the forward model maps realizations of the basal friction $\beta(x,y)$ to the horizontal velocities $u(x,y)$ and $v(x,y)$.

To define our Bayesian inference problem, we endow the log-basal friction field $\log \beta(x,y)$ with a Gaussian process prior, using an isotropic squared-exponential covariance kernel,
\be
C\left ( (x_1,y_1),(x_2,y_2) \right ) = \sigma^2 \exp \left( - \frac{(x_1-x_2)^2 + (y_1 - y_2)^2}{2l^2} \right),
\ee
with correlation length $l=0.1$ and variance $\sigma^2=25$. This field is easily parameterized with a Karhunen-Lo\`{e}ve (K-L) expansion \cite{Adler1981}:
\begin{equation*}
\beta (x, y; \theta) \approx \exp \left( \sum_{i=1}^d \theta_i \sqrt{\lambda_i} \varphi_i(x,y) \right),
\end{equation*}
where $\lambda_i$ and $\varphi_i(x,y)$ are the eigenvalues and eigenfunctions, respectively, of the integral operator on $[0,1]^2$ defined by the kernel $C$, and the parameters $\theta_i$ inherit independent standard normal priors, $\theta_i \sim \mathcal{N}(0,1)$. We truncate the Karhunen-Lo\`{e}ve expansion at $d=12$ modes and infer the weights $(\theta_1, \ldots, \theta_{12})$ from data. The true basal diffusivity field is shown in Figure \ref{fig:macayeal_friction}.

\begin{figure}
\centering
\includegraphics[scale=.85]{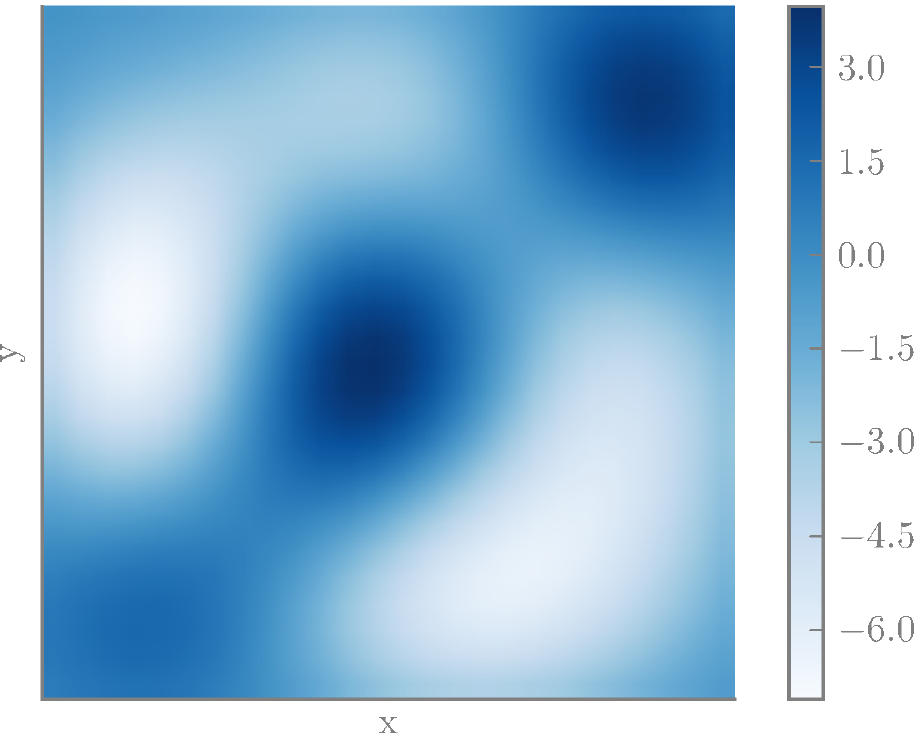}
\caption{Ice stream inference problem: the true log-basal friction field, $\log \beta(x,y)$.}
\label{fig:macayeal_friction}
\end{figure}

Data arise from observations of the velocity field on a uniform $10 \times 10$ grid covering the unit square, $(x_i, y_i) \in \{(.05, .05), \ldots, (.95, .95)\}$, as depicted in Figure \ref{fig:macayeal_obs}. Both the $u$ and $v$ components of velocity are observed, and observational errors are taken to be independent, additive, and identically Gaussian, $\mathcal{N}(0, 0.01^2)$. To avoid an \change{``inverse crime'' \cite{Kaipio2005}}, data are generated with a $25 \times 25$ mesh but inference uses a coarser $15 \times 15$ mesh.

\begin{figure}
\centering
\includegraphics[scale=.85]{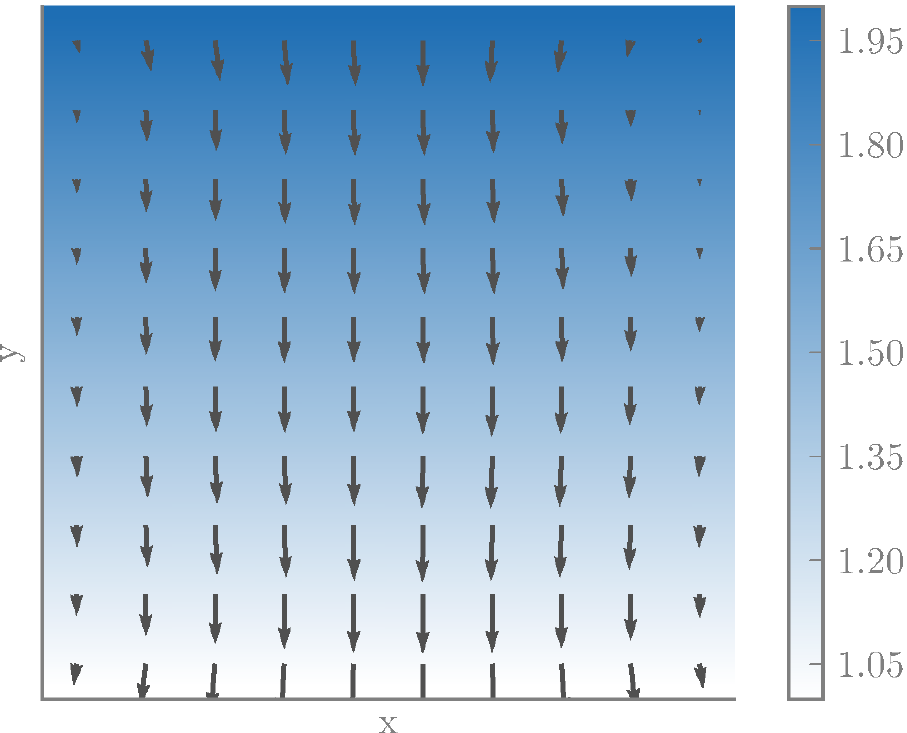}
\caption{Ice stream inference problem: the assumed ice height field $h(x,y)$ (shading) superimposed on the observed velocity field (vectors), given the basal friction in Figure \ref{fig:macayeal_friction}. Note the left--right asymmetry in the velocity field at the top of the domain, induced by the high friction region at the top right.}
\label{fig:macayeal_obs}
\end{figure}

The posterior distribution of the basal friction field  is quite challenging to sample, as the forward model requires, on average, 26 seconds per evaluation. Using a direct MCMC approach, a numerical simulation comprising 10 parallel chains of 200,000 steps each would therefore take nearly two months to run. \change{Using LA+AM on 10 parallel chains,} we complete exactly the same simulation in just over one day, a nearly 60-fold improvement in the run time. Representative one- and two-dimensional marginals of the posterior (focussing on only the first 6 of 12 dimensions) are shown in Figure \ref{fig:macayeal_density}. Note that several parameters are strongly correlated, and that many marginal distributions appear skewed and non-Gaussian. These two million samples were produced using only about 35,000 runs of the forward model. 

\begin{figure}
\centering
\includegraphics[width=.8\textwidth]{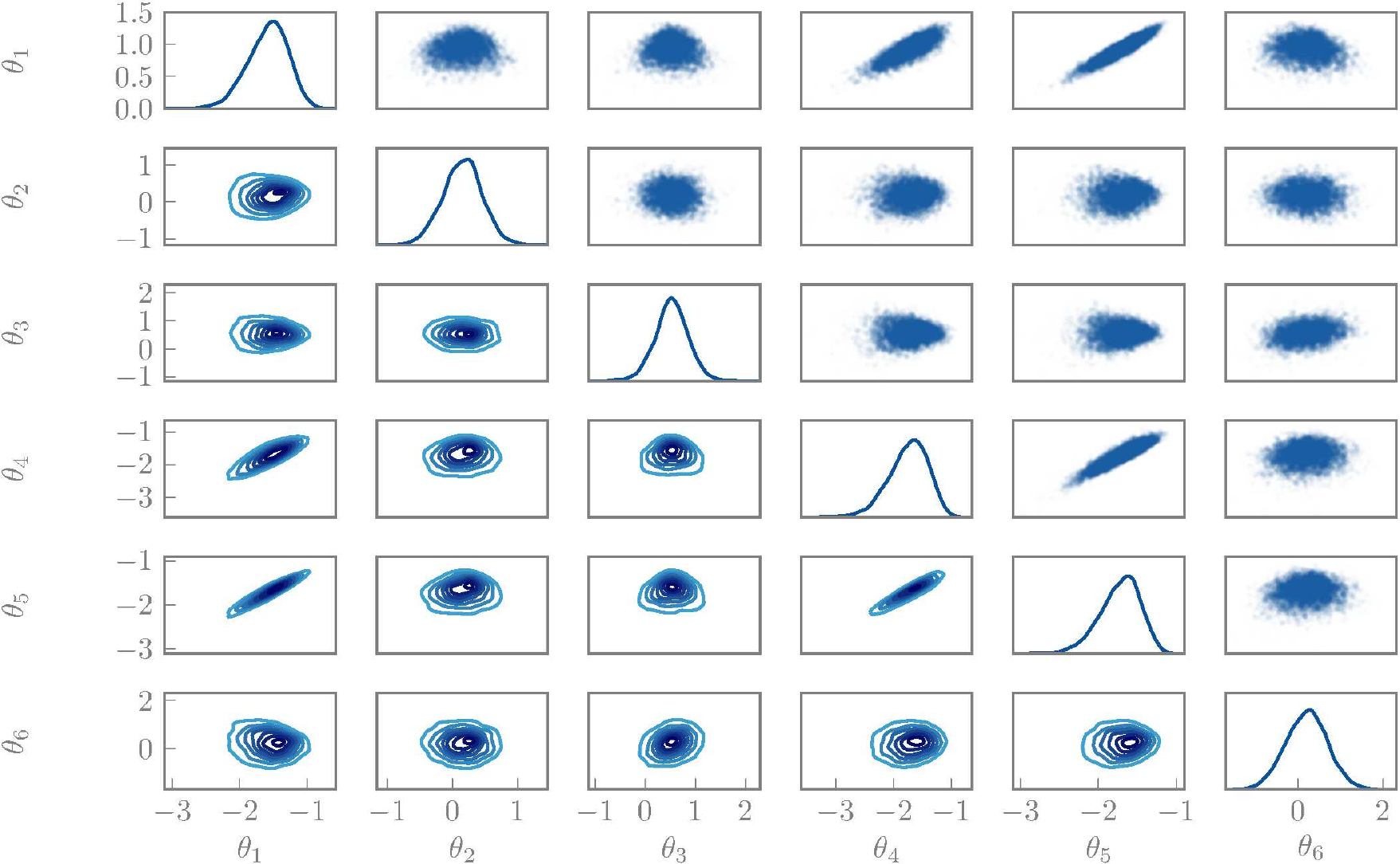}
\caption{One- and two-dimensional posterior marginals of the first six parameters in the ice stream inference problem.}
\label{fig:macayeal_density}
\end{figure}

\section{Conclusions}

This work has extended our previous development of asymptotically exact MCMC algorithms that employ local approximations of expensive models.
We lifted restrictive assumptions on the type of MCMC kernel that could be used---in particular, allowing the proposal distribution to extract derivatives, and hence geometric information, from the approximation. Doing so enables a wide variety of more sophisticated proposal distributions, such as manifold MALA, to be applied in settings where they would otherwise be intractable (e.g., when forward model derivatives cannot be directly evaluated) or unaffordable.
Additionally, we showed that using approximations allows the most computational intensive element of many MCMC simulations---the forward model or likelihood evaluations---to be directly parallelized, through the shared and online construction of a posterior-adapted set of samples. Sharing this set of model evaluations among multiple MCMC chains drives the construction of local approximations on each chain, providing a novel and effective means of reducing the run time of MCMC simulations. \change{Our shared local approximation scheme can readily be paired with other MCMC parallelization schemes, e.g., methods that use the presence of multiple chains to improve mixing; this a natural avenue for future work.}

To demonstrate the practical utility of these developments, we presented two challenging inference problems that we believe reflect scientifically interesting settings where forward models are necessarily expensive. Using parallel computing resources, we demonstrated a nearly two-order-of-magnitude improvement in the run time of a groundwater hydrology inference problem, and a roughly 60-fold reduction in the run time of an ice stream inference problem. These results suggest that our approach may help make a range of challenging Bayesian inference problems feasible. A reusable and open source implementation of this algorithm is available as part of the MIT Uncertainty Quantification (MUQ) library, \url{http://muq.mit.edu}.

\section*{Acknowledgments}

This work was supported in part by the Scientific Discovery through Advanced Computing (SciDAC) program of the US Department of Energy, Office of Science, Advanced Scientific Computing Research under award number DE-SC0007099 (P.\ Conrad, A.\ Davis, and Y.\ Marzouk), by the National Sciences and Engineering Research Council of Canada (A.\ Smith), and by the Office of Naval Research (N.\ Pillai).

\appendix

\section{Complete algorithm description}
\label{apx:code}
This appendix provides a complete description of the local approximation MCMC algorithm from \cite{Conrad2014}, here extended to MCMC proposals that also employ the approximation $\tilde{\mathbf{f}}$. We replicate necessary subroutines from \cite{Conrad2014}; for a full discussion and derivation of these methods, please see that paper. The sketch given in Algorithm \ref{alg:algSketchMala} of Section~\ref{sec:review} is here expanded into Algorithm \ref{alg:algOverviewGeneral}, which takes additional parameters $\beta_t$ and $\gamma_t$ that determine when refinement is performed according to random or cross validation criteria, respectively. The choice of $\gamma_t$ is arbitrary, but $\sum_t \beta_t$ must diverge; based on the parameter study in \cite{Conrad2014}, the numerical experiments in Section~\ref{sec:numerical} are performed with $\beta_t = 0.01t^{-0.2}$ and $\gamma_t = 0.1t^{-0.1}$. These numerical experiments employ local quadratic approximations, as described below. Code used to run the examples, in conjunction with MUQ, is provided in the Supplementary Material.


\algrenewcomment[1]{\hfill\makebox[0.33\linewidth][l]{\(\triangleright\) #1}}

\begin{algorithm}
\caption{Metropolis-Hastings with local approximations and general proposals}
\label{alg:algOverviewGeneral}
\begin{algorithmic}[1]

\Procedure{\textsc{RunChain}}{$\mathbf{f}, r, q, \theta_{1}, \mathcal{S}_{1}, \ell, \mathbf{d}, p, T, \{ \beta_t \}_{t=1}^{T}, \{ \gamma_t \}_{t=1}^{T}$}
\For{$t = 1 \ldots T$}
	\State  $(\theta_{t+1}, \mathcal{S}_{t+1}) \gets K_t(\theta_t, \mathcal{S}_t, \ell, \mathbf{d}, p, \mathbf{f}, r, q, \beta_t, \gamma_t)$
\EndFor
\EndProcedure

\Statex

\Procedure{$K_t$}{$\theta^-, \mathcal{S},  \ell, \mathbf{d}, p, \mathbf{f}, r, q, \beta_t, \gamma_t$}
	\State Draw proposal $\mathbf{z}_t \sim \mathcal{N}(0,I)$ 
	\State $\tilde{\mathbf{f}}^- \gets \textsc{LocApprox}(\theta^-, \mathcal{S}, \emptyset)$ \label{alg:algOverview:fplus} 

\State  $\theta^+ \gets r(\theta^-, \mathbf{z}_t, \tilde{\mathbf{f}}^-)$ 

	\State $\tilde{\mathbf{f}}^+ \gets \textsc{LocApprox}(\theta^+, \mathcal{S}, \emptyset)$  

	\State $\alpha \gets \min \left(1,
\frac{\ell(  \theta^+ | \mathbf{d} ,\tilde{\mathbf{f}}^+)p(\theta^+)q(\theta^+, \theta^- | \tilde{\mathbf{f}}^+)}{\ell(\theta^-  | \mathbf{d}  ,\tilde
{\mathbf{f}}^-)p(\theta^-)q(\theta^-, \theta^+ | \tilde{\mathbf{f}}^-)} \right)$ \Comment{Compute nominal acceptance ratio} \label{alg:algOverview:errorStart} 
\State Compute $\epsilon^+$ and $\epsilon^-$ as in \eqref{eq:cvStart}--\eqref{eq:cvEnd}.

\If{$u \sim \text{Uniform}(0,1) < \beta_m$} \Comment{Refine with probability $\beta_m$}
	\State Randomly, $\mathcal{S} \gets \textsc{RefineNear}(\theta^+, \mathcal{S})$ or $\mathcal{S} \gets \textsc{RefineNear}(\theta^-, \mathcal{S})$
\ElsIf{$\epsilon^+ \geq \epsilon^-$ and $ \epsilon^+ \geq \gamma_m$} \Comment{If needed, refine near the larger error}  \label{alg:algOverview:refineStart}
\State $\mathcal{S} \gets \textsc{RefineNear}(\theta^+, \mathcal{S})$
\ElsIf{$\epsilon^- > \epsilon^+$ and $ \epsilon^- \geq \gamma_m$}
\State $\mathcal{S} \gets \textsc{RefineNear}(\theta^-, \mathcal{S})$

	\EndIf
	\If{refinement occurred} repeat from Line \ref{alg:algOverview:fplus}.
\Else \Comment{Evolve chain using approximations}
\State Draw $u \sim \text{Uniform}(0,1)$. If $u < \alpha$, \textbf{return} $(\theta^+, \mathcal{S})$, else \textbf{return} $(\theta^-, \mathcal{S})$.
\EndIf
\EndProcedure

\end{algorithmic}
\end{algorithm}

Algorithm \ref{alg:supporting} provides several subroutines. The first, \textsc{LocApprox}, gathers the $N$ nearest neighbors from $\mathcal{S}_t$ to use in constructing the approximation; for quadratics, $N = \frac{\sqrt{d}(d+1)(d+2)}{2}$. The operator $\mathcal{A}^{\sim j}_{\mathcal{B}(\theta, R)}$ constructs the local approximation; in this work, it fits a quadratic (a degree-two polynomial) with least squares. The input $j$ facilitates cross validation and unless $j=\emptyset$, designates that the $j$th neighbor should be omitted. The second routine, \textsc{RefineNear}, solves a local optimization problem to choose a new point $\theta^\ast$ that is near $\theta$ but space-filling  overall; this point is used to enrich $\mathcal{S}_t$.

\algrenewcomment[1]{\hfill\makebox[0.4\linewidth][l]{\(\triangleright\) #1}}

\begin{algorithm}
\caption{Supporting algorithms}
\label{alg:supporting}
\begin{algorithmic}[1]

\Procedure{\textsc{LocApprox}}{$\theta, \mathcal{S}, j$}
\State Select $R$ so that $|\mathcal{B}(\theta,R)| = N$, where 
\Statex $\mathcal{B}(\theta,R) :=  \{(\theta_i,\mathbf{f}(\theta_i)) \in \mathcal{S} : \| \theta_i - \theta\|_2 \leq R\}$ \Comment{Select ball of points}
\State $\tilde{\mathbf{f}} \gets \mathcal{A}_{\mathcal{B}(\theta,R)}^{\sim j} $ \label{alg:algOverview:regress}\Comment{Fit local approximation}
\State \textbf{return} $\tilde{\mathbf{f}}$
\EndProcedure

\Statex
\Procedure{\textsc{RefineNear}}{$\theta, \mathcal{S}$}
\State Select $R$ so that $|\mathcal{B}(\theta,R)| = N$ \Comment{Select ball of points}
\State $\theta^\ast \gets \argmax_{\|\theta^\prime - \theta\| \leq R} \min_{\theta_i \in \mathcal{S}} \| \theta^\prime - \theta_i\| $\Comment{Optimize near $\theta$}
\State $\mathcal{S} \gets \mathcal{S} \cup \{\theta^\ast, \mathbf{f}(\theta^\ast) \}$ \Comment{Grow the sample set}
\State \textbf{return} $\mathcal{S}$
\EndProcedure

\end{algorithmic}
\end{algorithm}

Cross validation is used to estimate the error in the acceptance probability evaluated using the approximations. Define the nominal and leave-one-out variants of the approximations, for $j=1,\ldots, N$, as
\begin{eqnarray*}
\tilde{\mathbf{f}}^+ &= \textsc{LocApprox}(\theta^+, \mathcal{S}, \emptyset)  \qquad  \tilde{\mathbf{f}}^+_{\sim_j}  &=  \textsc{LocApprox}(\theta^+, \mathcal{S}, j)\\
\tilde{\mathbf{f}}^- &= \textsc{LocApprox}(\theta^-, \mathcal{S}, \emptyset)  \qquad  \tilde{\mathbf{f}}^-_{\sim_j}  &=  \textsc{LocApprox}(\theta^-, \mathcal{S}, j).
\end{eqnarray*}
Then compute the approximate posterior ratio and all the leave-one-out variants (here slightly modified from our original work to include the proposal densities), 
\begin{eqnarray*}
\zeta &:=& \frac{\ell(\theta^+|\mathbf{d}, \tilde{\mathbf{f}}^+)p(\theta^+)q(\theta^+, \theta^- | \tilde{\mathbf{f}}^+)}
{\ell(\theta^-|\mathbf{d}, \tilde{\mathbf{f}}^-)p(\theta^-)q(\theta^-, \theta^+ | \tilde{\mathbf{f}}^-)} \\
\zeta^{+,\sim j} &:=& \frac{\ell(\theta^+|\mathbf{d}, \tilde{\mathbf{f}}^+_{\sim_j})p(\theta^+)q(\theta^+, \theta^- | \tilde{\mathbf{f}}^+_{\sim_j})}
{\ell(\theta^-|\mathbf{d}, \tilde{\mathbf{f}}^-)p(\theta^-)q(\theta^-, \theta^+ | \tilde{\mathbf{f}}^-)}\\
\zeta^{-,\sim j} &:=& \frac{\ell(\theta^+|\mathbf{d}, \tilde{\mathbf{f}}^+)p(\theta^+)q(\theta^+, \theta^- | \tilde{\mathbf{f}}^+)}
{\ell(\theta^-|\mathbf{d}, \tilde{\mathbf{f}}^-_{\sim_j} )p(\theta^-)q(\theta^-, \theta^+ | \tilde{\mathbf{f}}^-_{\sim_j})}
\end{eqnarray*}
Finally, find the maximum difference between the $\alpha$ values computed using $\zeta$ and those computed using the leave-one-out variants $\zeta^{+,\sim j}$ and $\zeta^{-,\sim j}$, averaging over the forward and reverse directions. These are the error indicators:
\begin{align}
\epsilon^+ &:=& \underset{j}{\max} \left( \bigg|\min \left(1, \zeta \right) - \min \left(1,\zeta^{+,\sim j}\right)\bigg| + \left|\min \left(1, \frac{1}{\zeta}\right) - \min \left(1,\frac{1}{\zeta^{+,\sim j}}\right)\right| \right) \label{eq:cvStart}\\
\epsilon^- &:=& \underset{j}{\max} \left( \bigg|\min \left(1, \zeta \right) - \min \left(1,\zeta^{-,\sim j}\right)\bigg| + \left|\min \left(1, \frac{1}{\zeta}\right) - \min \left(1,\frac{1}{\zeta^{-,\sim j}}\right)\right| \right) \label{eq:cvEnd}.
\end{align}

\section{Proofs of the main results} \label{apx:proofs}
Throughout this section, we use the notation $f(x) = O(g(x))$ to mean that there exists some constant $0 < C < \infty$ so that $f(x) \leq C g(x)$. If the constant $C$ depends on an important parameter, we sometimes use that parameter as a subscript for emphasis; for example, $\frac{x^{2}}{p} = O_{p}(x^{4})$ for all fixed $p > 0$, but there is no constant $C< \infty$ so that $\frac{x^{2}}{p} \leq C x^{4}$ uniformly in $p > 0$.

For any pair of measures $\mu,\nu$ on a metric space $(\mathcal{X},d)$, denote by $\Pi(\mu,\nu)$ the collection of all pairs of random variables $(X,Y) \in \mathcal{X}^{2}$ that have marginal distributions $\mathcal{L}(X)= \mu$, $\mathcal{L}(Y) = \nu$. Recall that the \textit{Wasserstein metric} on measures on a metric space $(\mathcal{X}, d)$ is given by
\be 
W_{d}(\mu,\nu) = \inf_{(X,Y) \in \Pi(\mu,\nu)} \E[d(X,Y)].
\ee 
We also use the shorthand $W_{p} \equiv W_{\| \cdot \|_{p}}$ when $1 \leq p \leq \infty$. The \textit{total variation} distance between two probability measures $\mu,\nu$ is given by $\| \mu - \nu \|_{\mathrm{TV}} = W_{\rho}(\mu,\nu)$, where $\rho(x,y) \equiv \textbf{1}_{x  \neq y}$. The  \textit{mixing time} of a Markov chain $\{ Z_{t} \}_{t \geq 0}$ with stationary distribution $\pi$ on state space $\Omega$ is 
\be 
\tmix = \inf \{t \, : \, \sup_{Z_{0} = z \in \Omega} \| \mathcal{L}(Z_{t}) - \pi \|_{\mathrm{TV}} < \frac{1}{4} \}.
\ee

\paragraph{Proof of Theorem \ref{ThmMalaConv}}
Denote the diameter of $\Omega$ by $D_{\Omega}$ and the mixing time of $K_{\infty}$ by $\tau_{\mathrm{mix}}$; by parts 1 and 2 of Assumption \ref{DefSomeAssumptions}, respectively, $D_{\Omega}, \tau_{\mathrm{mix}} < \infty$. For $\epsilon > 0$,  let $\tau_{\epsilon} = \inf \{ t > 0 \, : \, \mathcal{S}_{t} \text{ is an } \epsilon-\text{cover of } \Theta \}$. By substituting $\tau_{\epsilon}$ for $\tau$ everywhere that it is used, the proof of Lemma B.4 of \cite{Conrad2014} shows that
\be \label{IneqCoveringTimeMala}
\P[ \tau_{\epsilon} < \infty] = 1
\ee 
for all $\epsilon > 0$.

Next, fix $S,T \in \mathbb{N}$ and $\psi, \delta, \varphi_{0} > 0$, and let $\epsilon = \epsilon(\delta)$ be the smaller of the values of $\epsilon(\delta)$ from inequalities \eqref{EqEpsDeltCoverDef}, \eqref{IneqOtherContAssump}. Let $\mathcal{F}_{T}$ be the $\sigma$-algebra $\sigma( \{ X_{t}, \mathcal{S}_{t} \}_{0 \leq t \leq T} )$. We will let $\{ Y_{t} \}_{t \geq T}$ be a Markov chain with transition kernel $K_{\infty}$ started at $Y_{T} = X_{T}$ and we will let $\{ Z_{t} \}_{t \geq T}$ be a Markov chain with transition kernel $K_{\infty}$ started at the distribution $\mathcal{L}(Z_{T}) = \pi$. We now describe a coupling of the three stochastic processes $\{ X_{t} \}_{T \leq t \leq T+S}$, $\{ Y_{t} \}_{T \leq t \leq T+S}$, and $\{ Z_{t} \}_{T \leq t \leq T+S}$. We couple $\{ Y_{t} \}_{T \leq t \leq T+S}$, $\{ Z_{t} \}_{T \leq t \leq T + S}$ so that 
\be \label{IneqMalaTvCoupDef}
\P[Y_{T+S} = Z_{T+S} | Y_{T}, Z_{T}] = \| \mathcal{L}(Y_{T+S} | Y_{T}) - \mathcal{L}(Z_{T+S} | Z_{T}) \|_{\mathrm{TV}}.
\ee 
At least one coupling with this property exists by the definition of the total variation distance; choose one such coupling arbitrarily. We then couple $\{ X_{t} \}_{T \leq t \leq T+S}$ to $\{ Y_{t} \}_{T \leq t \leq T+S}$ iteratively in $t$. Denote by $\tilde{X}$ the value that would be returned in the $(t)$th iteration of Algorithm \ref{alg:algOverviewGeneral} if Step 21 were ignored, and let $\mathbf{z}$ be the value obtained in Step 7. Then, $(X_{t+1}, Y_{t+1})$ can be coupled conditional on $(X_{t}, Y_{t}, \mathcal{S}_{t})$ so that 
\be \label{IneqMalaWassCoupDef}
\begin{aligned}
\E[ \| X_{t+1} - Y_{t+1} \|_{2}] \leq & \E[ \|X_{t+1} - \tilde{X} \|] + \E[ \|\tilde{X} - Y_{t+1} \|] \leq \delta + D_{\Omega} \P[\| \mathbf{z} \| > \varphi_{0}] + \frac{\psi}{S+1} \\ & +  \sup_{\theta, \theta' \in \Theta, \, \| \theta - \theta' \| < \| X_{t} - Y_{t} \|} W_{2} (K_{\mathcal{S}_{t}}(\theta, \cdot), K_{\infty}(\theta', \cdot)). 
\end{aligned}
\ee 
Such a coupling exists by inequality \eqref{IneqOtherContAssump} and the definition of the Wasserstein distance. By the `gluing' lemma (Chapter 1 of \cite{villani2008optimal}), it is possible to combine the couplings of $\{ X_{t}, Y_{t} \}_{T \leq t \leq T+S}$ and $\{ Y_{t}, Z_{t} \}_{T \leq t \leq T+S}$ into a single coupling $\{ X_{t}, Y_{t}, Z_{t} \}_{T \leq t \leq T+S}$ that satisfies both inequality \eqref{IneqMalaTvCoupDef} and also inequality \eqref{IneqMalaWassCoupDef} for all $T \leq t < T+S$. Under this coupling,
\be \label{IneqMiddleWassMala1}
W_{2}(Y_{T+S}, Z_{T+S}) &\leq D_{\Omega} \P[Z_{T+S} \neq Y_{T+S}] \\
&\leq D_{\Omega} 2^{-\lfloor \frac{S}{\tau_{\mathrm{mix}}} \rfloor}.
\ee 
Let $\eta_{0}$ be as in the requirements for \eqref{IneqContAssump}. By inequalities \eqref{EqEpsDeltCoverDef} and \eqref{IneqMalaWassCoupDef}, we have for $T \leq t < T+S$ that
\be 
\E[ \| X_{t+1} - &Y_{t+1} \|_{2} \, | \mathcal{F}_{T}] = \E[\| X_{t+1} - Y_{t+1} \|_{2}  \textbf{1}_{T \geq \tau_{\epsilon}} | \mathcal{F}_{T}] + \E[\| X_{t+1} - Y_{t+1} \|_{2}  \textbf{1}_{T < \tau_{\epsilon}} | \mathcal{F}_{T}] \\
&\leq \delta + D_{\Omega} \P[\| \mathbf{z} \| > \varphi_{0}] + \frac{\psi}{S+1} + \E[ \sup_{\theta, \theta' \in \Theta, \, \| \theta - \theta' \| < \| X_{t} - Y_{t} \|} W_{2} (K_{\infty}(\theta, \cdot), K_{\infty}(\theta', \cdot)) | \mathcal{F}_{T}] \\ 
&+ \E[ \sup_{\theta \in \Theta}  W_{2}(K_{\mathcal{S}_{t}}(\theta, \cdot), K_{\infty}(\theta,\cdot)) \textbf{1}_{T \geq \tau_{\epsilon}} | \mathcal{F}_{T}]+ D_{\Omega} \textbf{1}_{T < \tau_{\epsilon}} \\
&\leq \delta + D_{\Omega} \P[\| \mathbf{z} \| > \varphi_{0}] + \frac{\psi}{S+1}   + C \E[ \| X_{t} - Y_{t} \|_{2} \, | \mathcal{F}_{T}] +  D_{\Omega} \P[\| X_{t} - Y_{t} \| \geq \eta_{0} | \mathcal{F}_{T}] \\
&+ \delta  + D_{\Omega} \textbf{1}_{T < \tau_{\epsilon}} \\
&\leq D_{\Omega} \P[\| \mathbf{z} \| > \varphi_{0}] + \frac{\psi}{S+1} + 2\delta + (C  + \frac{D_{\Omega}}{\eta_{0}})  \E[ \| X_{t} - Y_{t} \|_{2} \, | \mathcal{F}_{T}] + D_{\Omega} \textbf{1}_{T < \tau_{\epsilon}}.
\ee 
Iterating this inequality over $T \leq t < T+S$ and recalling that $\|X_{T} - Y_{T} \|_{2} = 0$, 
\be \label{IneqMiddleWassMala2}
\E[ \| X_{T+S} - Y_{T+S} \|_{2} \, | \mathcal{F}_{T}] &\leq (2\delta + \frac{\psi}{S+1} + D_{\Omega} \P[\| \mathbf{z} \| > \varphi_{0}]) (C + \frac{D_{\Omega}}{\eta_{0}})^{S+1} \\
&+  D_{\Omega} \textbf{1}_{T < \tau_{\epsilon}}.
\ee

Combining inequalities \eqref{IneqMiddleWassMala1} and \eqref{IneqMiddleWassMala2}, 
\be 
W_{2}(X_{T+S},\pi) &\leq \E[ \| X_{T+S} - Z_{T+S} \|_{2} ]  \\
&\leq  D_{\Omega} 2^{-\lfloor \frac{S}{\tau_{\mathrm{mix}}} \rfloor} +  (2 \delta + \frac{\psi}{S+1} + D_{\Omega} \P[\| \mathbf{z} \| > \varphi_{0}]) (C + \frac{D_{\Omega}}{\eta_{0}})^{S+1} +  D_{\Omega} \P[ T < \tau_{\epsilon}].
\ee 
Letting $\psi$ go to 0, 
\be \label{IneqPenultimate}
W_{2}(X_{T+S},\pi)  & \leq D_{\Omega} 2^{-\lfloor \frac{S}{\tau_{\mathrm{mix}}} \rfloor} + (2\delta + D_{\Omega} \P[\| \mathbf{z} \| > \varphi_{0}]) (C + \frac{D_{\Omega}}{\eta_{0}})^{S+1} \\
&+  D_{\Omega} \P[ T < \tau_{\epsilon}].
\ee 
For $\alpha \in \mathbb{N}$, define $\delta(\alpha) = \frac{1}{\alpha^{2}}$, $\varphi_{0}(\alpha) = \inf \{ \varphi \, : \, \P[\| \mathbf{z} \| > \varphi] \leq \alpha^{-2} \}$,  $S(\alpha) = \lfloor \frac{-\log(\alpha)}{\log(C + \frac{D_{\Omega}}{\eta_{0}})} \rfloor - 1$, and $T(\alpha)' =  \inf \{ t \, : \, \P[t < \tau_{\epsilon(\delta(\alpha))}] \leq \frac{1}{\alpha} \}$. It is easy to check that $\lim_{\alpha \rightarrow \infty} S(\alpha) = \lim_{\alpha \rightarrow \infty} T(\alpha)' = \infty$, and so for any sequence $T(\alpha) > T(\alpha)'$ inequality \eqref{IneqPenultimate} implies
\be 
\lim_{\alpha \rightarrow \infty} W_{2}(X_{T(\alpha) + S(\alpha)}, \pi)  \leq \lim_{\alpha \rightarrow \infty} \big( D_{\Omega}2^{-\lfloor \frac{S(\alpha)}{\tau_{\mathrm{mix}}} \rfloor} + \frac{4 D_{\Omega}}{\alpha} \big) = 0.
\ee 
Since this holds for \textit{any} sequence $T(\alpha) > T(\alpha)'$,  inequality \eqref{IneqConcWassConvMala} follows.\footnote{Since the convergence to stationarity under the Wasserstein distance may not be monotone, this flexibility in the choice of $T(\alpha)$ is necessary to obtain the desired convergence result.} {\color{header1}\rule{1.5ex}{1.5ex}}


\medskip

\paragraph{Proof of Theorem \ref{thm:mala}}
It is enough to check that the conditions of Theorem \ref{ThmMalaConv} hold. Going through the elements of Definition~\ref{MalaAssumptions} in order:
\begin{enumerate}
\item To check that inequality \eqref{EqEpsDeltCoverDef} holds, fix $\delta > 0$. By results in \cite{Conn2009},\footnote{The required result is a combination of Theorems 3.14 and 3.16, as discussed in the text after the proof of Theorem 3.16 of \cite{Conn2009}.} there exists a constant $\epsilon_{1} = \epsilon_{1}(\delta, \lambda) > 0$ so that for all $\epsilon < \epsilon_{1}$
\be \label{MaleConfirmingAssumptionsContinuity1}
\sup_{\theta \in \Theta}|p_{\mathcal{S}}(\theta) - p(\theta | \mathbf{d}) | < \frac{\delta}{2 D_{\Omega}}
\ee
if $\mathcal{S}$ is  an $\epsilon$-cover and the points $\mathcal{B}(\theta, R)$ chosen in Step 2 of Algorithm \ref{alg:supporting} are $\lambda$-poised. The same discussion in \cite{Conn2009} implies that there exists a constant $\epsilon_{2} = \epsilon_{2}(\delta, \lambda) > 0$ so that for all $\epsilon < \epsilon_{2}$, 
\be \label{IneqToBeUsedForNextCondition}
\sup_{\theta \in \Theta} | M_{\mathcal{S}}(\theta) - M_{\infty}(\theta) |, \, |\nabla_\theta(\pi_{\mathcal{S}} ) - \nabla_\theta(\pi_{\infty} ) | < \delta
\ee 
if $\mathcal{S}$ is  an $\epsilon$-cover and the points $\mathcal{B}(\theta, R)$ chosen in Step 2 of Algorithm \ref{alg:supporting} are $\lambda$-poised.
Since the smallest singular value of $M(\theta)$ is bounded below uniformly in $\theta$, this implies that there exists a constant $\epsilon_{3} = \epsilon_{3}(\delta, \lambda) > 0$ so that for all $\epsilon < \epsilon_{3}$ (see \cite[Prop.~7]{Givens1984}),
%
%
\be \label{MaleConfirmingAssumptionsContinuity2}
\sup_{\theta \in \Theta} W_{2}(q_{\mathcal{S}}(\theta,\cdot), q_{\infty}(\theta,\cdot)) < \frac{\delta}{2 }
\ee  
as long as $\mathcal{S}$ is  an $\epsilon$-cover and the points $\mathcal{B}(\theta, R)$ chosen in Step 2 of Algorithm \ref{alg:supporting} are $\lambda$-poised.

Combining inequalities \eqref{MaleConfirmingAssumptionsContinuity1} and \eqref{MaleConfirmingAssumptionsContinuity2}, we have for all $0 < \epsilon < \min(\epsilon_{1},\epsilon_{3})$  that 
\be 
\sup_{\theta \in \Theta} W_{2} (K_{\mathcal{S}}(\theta,\cdot), K_{\infty}(\theta,\cdot)) &\leq \sup_{\theta \in \Theta} W_{2}(q_{\mathcal{S}}(\theta,\cdot), q_{\infty}(\theta,\cdot)) + D_{\Omega}  \sup_{\theta \in \Theta}|p_{\mathcal{S}}(\theta) - p(\theta | \mathbf{d}) |  \\
&\leq \frac{\delta}{2} + \frac{\delta}{2} = \delta. 
\ee 
This completes the proof of inequality \eqref{EqEpsDeltCoverDef}.
\item 
By the assumption that the mass matrix $M(\theta)$ and likelihood $\ell(\theta | \mathbf{d}, \mathbf{f})$ are both $C^{\infty}$ functions on $\Omega$, and that the smallest singular value of $M$ and the likelihood $\ell$ are both uniformly bounded away from zero, we have 
\be \label{ContCalc1}
\| q_{\infty}(\theta,\cdot) &- q_{\infty}(\theta',\cdot) \|_{\mathrm{TV}} = \left \Vert \, \mathcal{N}\left(\theta + \frac{\epsilon}{2} M(\theta) \nabla_\theta \log{ \left (\ell(\theta | \mathbf{d}, \mathbf{f}) p(\theta) \right )}, \epsilon  M(\theta)\right) \right . \\ & \qquad -  \left . \mathcal{N}\left(\theta' + \frac{\epsilon}{2} M(\theta') \nabla_\theta \log{ \left (\ell(\theta' | \mathbf{d}, \mathbf{f}) p(\theta') \right )}, \epsilon  M(\theta')\right)  \right \Vert_{\mathrm{TV}} \\
&\leq \left \Vert \, \mathcal{N}\left(\theta + \frac{\epsilon}{2} M(\theta) \nabla_\theta \log{ \left (\ell(\theta | \mathbf{d}, \mathbf{f}) p(\theta) \right )}, \epsilon  M(\theta)\right)  \right . \\ & \qquad -  \left . \mathcal{N}\left(\theta' + \frac{\epsilon}{2} M(\theta') \nabla_\theta \log{ \left (\ell(\theta' | \mathbf{d}, \mathbf{f}) p(\theta') \right )}, \epsilon  M(\theta)\right)  \right \Vert_{\mathrm{TV}} \\ 
&+ \left \Vert \, \mathcal{N}\left(\theta' + \frac{\epsilon}{2} M(\theta') \nabla_\theta \log{ \left (\ell(\theta' | \mathbf{d}, \mathbf{f}) p(\theta') \right )}, \epsilon  M(\theta)\right) \right .  \\ & \qquad -  \left . \mathcal{N}\left(\theta' + \frac{\epsilon}{2} M(\theta') \nabla_\theta \log{ \left (\ell(\theta' | \mathbf{d}, \mathbf{f}) p(\theta') \right )}, \epsilon  M(\theta')\right)  \right \Vert_{\mathrm{TV}} \\
&= O_{c}( \Vert \theta - \theta' \Vert ),
\ee 
where the bound on the first term in the last line is standard, and the second term in the last line is bounded by an application of \cite[Lem.~4.8]{Klartag2007}.
By a similar calculation,
\be \label{ContCalc2}
| \alpha_{\infty}(\theta,z) - \alpha_{\infty}(\theta',z) | = O_{c}(\| \theta - \theta' \|).
\ee 
Inequalities \eqref{ContCalc1} and \eqref{ContCalc2} imply that
\be 
\sup_{\theta, \theta' \in \Theta, \, \| \theta - \theta' \| < \eta} W_{2} & ( K_{\infty}(\theta,\cdot),  K_{\infty}(\theta',\cdot)) \leq D_{\Omega} \sup_{\theta, \theta' \in \Theta, \, \| \theta - \theta' \| < \eta}\| K_{\infty}(\theta,\cdot) -  K_{\infty}(\theta',\cdot) \|_{\mathrm{TV}} \\
&\leq D_{\Omega} (\sup_{\theta,\theta' \in \Theta, \, \| \theta - \theta' \| < \eta} \| q_{\infty}(\theta,\cdot) - q_{\infty}(\theta',\cdot) \|_{\mathrm{TV}} + \sup_{\theta, \theta',z \in \Theta, \, \| \theta - \theta' \| < \eta} | \alpha_{\infty}(\theta,z) - \alpha_{\infty}(\theta',z) | \big) \\
&= O( \| \theta - \theta' \|).
\ee
This completes the proof of inequality \eqref{IneqContAssump}.
\item Inequality \eqref{IneqOtherContAssump} follows immediately from \eqref{MaleConfirmingAssumptionsContinuity1} and \eqref{IneqToBeUsedForNextCondition}.
\item The first item in Assumption \ref{DefSomeAssumptions} holds by our assumption that $\Theta$ is the $d$-dimensional \change{hypercube}.
\item The second item in Assumption \ref{DefSomeAssumptions} has two parts. The first part, that $p(\cdot | \mathbf{d})$ has a $C^{\infty}$ density that is bounded away from zero uniformly in $\theta,$ is an assumption of our theorem. The second part, that $q(\theta, \cdot | \mathbf{f})$ has $C^{\infty}$ density that is bounded away from zero uniformly in $\theta, \mathbf{f}$, follows from the form of the mMALA proposal and the fact that the state space is compact. 
\item Items 3 through 6 in Assumption \ref{DefSomeAssumptions} are assumed in the statement of the theorem. 
\end{enumerate}
This completes the proof of the theorem.
{\color{header1}\rule{1.5ex}{1.5ex}}

\bibliographystyle{siam} 
\bibliography{WorksCited,library,library_manual} 

\begin{thebibliography}{10}

\bibitem{Adler1981}
{\sc R.~J. Adler}, {\em The geometry of random fields}, {SIAM}, 1981.

\bibitem{IPCC2014Synthesis}
{\sc Myles~R Allen, Vicente~R Barros, John Broome, Wolfgang Cramer, Renate
  Christ, John~A Church, Leon Clarke, Qin Dahe, Purnamita Dasgupta, Navroz~K
  Dubash, et~al.}, {\em {IPCC} fifth assessment synthesis report-climate change
  2014 synthesis report},  (2014).

\bibitem{Bliznyuk2012}
{\sc Nikolay Bliznyuk, David Ruppert, and Christine~A. Shoemaker}, {\em {Local
  Derivative-Free Approximation of Computationally Expensive Posterior
  Densities}}, Journal of Computational and Graphical Statistics, 21 (2012),
  pp.~476--495.

\bibitem{Brooks1998}
{\sc Stephen~P Brooks and Gareth~O Roberts}, {\em {Assessing Convergence of
  Markov Chain Monte Carlo Algorithms}}, Statistics and Computing, 8 (1998),
  pp.~319--335.

\bibitem{Calderhead2014}
{\sc B.~Calderhead}, {\em A general construction for parallelizing
  {M}etropolis-{H}astings algorithms}, Proceedings of the National Academy of
  Sciences, 111 (2014), pp.~17408--17413.

\bibitem{Cappe2004}
{\sc Olivier Cappe, Arnaud Guillin, Jean-Michel Marin, and Christian~P.
  Robert}, {\em {Population Monte Carlo}}, Journal of Computational and
  Graphical Statistics, 13 (2004), pp.~907--929.

\bibitem{Chen2015}
{\sc P.~Chen and Ch. Schwab}, {\em Sparse-grid, reduced-basis {B}ayesian
  inversion}, Computer Methods in Applied Mechanics and Engineering, 297
  (2015), pp.~84--115.

\bibitem{Christen2005}
{\sc J.~Andr\'{e}s Christen and Colin Fox}, {\em {Markov chain Monte Carlo
  Using an Approximation}}, Journal of Computational and Graphical Statistics,
  14 (2005), pp.~795--810.

\bibitem{Conn2009}
{\sc Andrew~R. Conn, Katya Scheinberg, and Lu\'{\i}s~N. Vicente}, {\em
  {Introduction to Derivative-Free Optimization}}, SIAM, 2009.

\bibitem{Conrad2014}
{\sc Patrick~R Conrad, Youssef~M Marzouk, Natesh~S Pillai, and Aaron Smith},
  {\em Accelerating asymptotically exact {MCMC} for computationally intensive
  models via local approximations}, Journal of the American Statistical
  Association, 111 (2016), pp.~1591--1607.

\bibitem{cotter2010}
{\sc S.~L. Cotter, M.~Dashti, and A.~M. Stuart}, {\em {Approximation of
  Bayesian Inverse Problems}}, SIAM Journal of Numerical Analysis, 48 (2010),
  pp.~322--345.

\bibitem{Cowles1996}
{\sc Mary~Kathryn Cowles and Bradley~P Carlin}, {\em {Markov Chain Monte Carlo
  Convergence Diagnostics : A Comparative Review}}, Journal of the American
  Statistical Association, 91 (1996), pp.~883--904.

\bibitem{Craiu2009}
{\sc Radu~V. Craiu, Jeffrey Rosenthal, and Chao Yang}, {\em {Learn From Thy
  Neighbor : Parallel-Chain and Regional Adaptive MCMC}}, Journal of the
  American Statistical Association, 104 (2009), pp.~1454--1466.

\bibitem{cui2011wrr}
{\sc T.~Cui, C.~Fox, and M.~J. O'Sullivan}, {\em Bayesian calibration of a
  large-scale geothermal reservoir model by a new adaptive delayed acceptance
  {M}etropolis {H}astings algorithm}, Water Resources Research, 47 (2011),
  p.~W10521.

\bibitem{CuiWM2016}
{\sc T.~Cui, Y.~M. Marzouk, and K.~Willcox}, {\em Scalable posterior
  approximations for large-scale {B}ayesian inverse problems via
  likelihood-informed parameter and state reduction}, Journal of Computational
  Physics, 315 (2016), pp.~363--387.

\bibitem{Dashti2011}
{\sc M~Dashti and AM~Stuart}, {\em {Uncertainty Quantification and Weak
  Approximation of an Elliptic Inverse Problem}}, SIAM Journal of Numerical
  Analysis, 49 (2011), pp.~2524--2542.

\bibitem{Dupuit1863}
{\sc Jules Dupuit}, {\em {Etudes theoriques et pratiques sur le mouvement des
  eaux dans les canaux decouverts et a travers les terrains permeables}},
  Dunod, 1863.

\bibitem{fill2000randomness}
{\sc James~Allen Fill and Mark Huber}, {\em The randomness recycler: a new
  technique for perfect sampling}, in Foundations of Computer Science, 2000.
  Proceedings. 41st Annual Symposium on, IEEE, 2000, pp.~503--511.

\bibitem{bedmap2}
{\sc P~Fretwell, Hamish~D Pritchard, David~G Vaughan, JL~Bamber, NE~Barrand,
  R~Bell, C~Bianchi, RG~Bingham, DD~Blankenship, G~Casassa, et~al.}, {\em
  {B}edmap2: improved ice bed, surface and thickness datasets for
  {A}ntarctica}, The Cryosphere, 7 (2013), pp.~375--393.

\bibitem{Girolami2011}
{\sc Mark Girolami and Ben Calderhead}, {\em Riemann manifold langevin and
  hamiltonian monte carlo methods}, Journal of the Royal Statistical Society:
  Series B (Statistical Methodology), 73 (2011), pp.~123--214.

\bibitem{Givens1984}
{\sc C.~R. Givens and R.~M. Shortt}, {\em A class of {W}asserstein metrics for
  probability distributions}, Michigan Mathematical Journal, 31 (1984),
  pp.~231--240.

\bibitem{green2015bayesian}
{\sc Peter~J Green, Krzysztof {\L}atuszy{\'n}ski, Marcelo Pereyra, and
  Christian~P Robert}, {\em Bayesian computation: a summary of the current
  state, and samples backwards and forwards}, Statistics and Computing, 25
  (2015), pp.~835--862.

\bibitem{Haario2001}
{\sc Heikki Haario, Eero Saksman, and Johanna Tamminen}, {\em {An adaptive
  Metropolis algorithm}}, Bernoulli, 7 (2001), pp.~223--242.

\bibitem{IPCC2014Chapter13}
{\sc {I}ntergovernmental {P}anel on~{C}limate {C}hange}, {\em Climate Change
  2013: The Physical Science Basis: Working Group I Contribution to the Fifth
  Assessment Report of the {I}ntergovernmental {P}anel on {C}limate {C}hange},
  {C}ambridge {U}niversity {P}ress, 2014.

\bibitem{Joseph2012}
{\sc V.~Roshan Joseph}, {\em {Bayesian Computation Using Design of
  Experiments-Based Interpolation Technique}}, Technometrics, 54 (2012),
  pp.~209--225.

\bibitem{Kaipio2005}
{\sc Jari~P. Kaipio and Erkki Somersalo}, {\em {Statistical and Computational
  Inverse Problems}}, Springer New York, 2005.

\bibitem{Kennedy2001}
{\sc M.C. Kennedy and A.~O'Hagan}, {\em {Bayesian calibration of computer
  models}}, Journal of the Royal Statistical Society: Series B (Statistical
  Methodology), 63 (2001), pp.~425--464.

\bibitem{Klartag2007}
{\sc B.~Klartag}, {\em A central limit theorem for convex sets}, Inventiones
  Mathematicae, 168 (2007), pp.~91--131.

\bibitem{Li2012}
{\sc Jinglai Li and Youssef~M. Marzouk}, {\em Adaptive construction of
  surrogates for the {B}ayesian solution of inverse problems}, {SIAM} Journal
  on Scientific Computing, 36 (2014), pp.~A1163--A1186.

\bibitem{IceBridge}
{\sc Jilu Li, John Paden, Carl Leuschen, Fernando Rodriguez-Morales, Richard~D
  Hale, Emily~J Arnold, Reid Crowe, Daniel Gomez-Garcia, and Prasad Gogineni},
  {\em High-altitude radar measurements of ice thickness over the {A}ntarctic
  and {G}reenland ice sheets as a part of {O}peration {I}ce{B}ridge}, {IEEE}
  Transactions on Geoscience and Remote Sensing, 51 (2013), pp.~742--754.

\bibitem{MacAyeal1989}
{\sc Douglas~R MacAyeal}, {\em Large-scale ice flow over a viscous basal
  sediment: Theory and application to ice stream {B}, {A}ntarctica}, Journal of
  Geophysical Research: Solid Earth (1978--2012), 94 (1989), pp.~4071--4087.

\bibitem{MacAyeal1993}
\leavevmode\vrule height 2pt depth -1.6pt width 23pt, {\em A tutorial on the
  use of control methods in ice-sheet modeling}, J. Glaciol, 39 (1993),
  pp.~91--98.

\bibitem{MacAyeal1997}
\leavevmode\vrule height 2pt depth -1.6pt width 23pt, {\em {EISMINT}: Lessons
  in ice-sheet modeling}, Department of Geophysical Sciences, University of
  Chicago, Chicago, IL, 1832 (1997), p.~1839.

\bibitem{Martin2012}
{\sc James Martin, Lucas~C Wilcox, Carsten Burstedde, and Omar Ghattas}, {\em
  {A stochastic Newton MCMC method for large-scale statistical inverse problems
  with application to seismic inversion}}, SIAM Journal on Scientific
  Computing, 34 (2012), pp.~1460--1487.

\bibitem{Marzouk2009a}
{\sc Youssef Marzouk and Dongbin Xiu}, {\em {A stochastic collocation approach
  to Bayesian inference in inverse problems}}, Communications in Computational
  Physics, 6 (2009), pp.~826--847.

\bibitem{Marzouk2007}
{\sc Youssef~M. Marzouk, Habib~N. Najm, and Larry~A. Rahn}, {\em {Stochastic
  spectral methods for efficient Bayesian solution of inverse problems}},
  Journal of Computational Physics, 224 (2007), pp.~560--586.

\bibitem{Matott2012}
{\sc L~Shawn Matott}, {\em {Screening-Level Sensitivity Analysis for the Design
  of Pump-and-Treat Systems}}, Ground Water Monitoring and Remediation, 32
  (2012), pp.~66--80.

\bibitem{Petraetal2014}
{\sc Noemi Petra, James Martin, Georg Stadler, and Omar Ghattas}, {\em A
  computational framework for infinite-dimensional {B}ayesian inverse problems,
  part {II}: stochastic {N}ewton {MCMC} with application to ice sheet flow
  inverse problems}, SIAM Journal on Scientific Computing, 36 (2014),
  pp.~A1525--A1555.

\bibitem{Petraetal2012}
{\sc Noemi Petra, Hongyu Zhu, Georg Stadler, Thomas~JR Hughes, and Omar
  Ghattas}, {\em An inexact {G}auss--{N}ewton method for inversion of basal
  sliding and rheology parameters in a nonlinear {S}tokes ice sheet model},
  Journal of Glaciology, 58 (2012), pp.~889--903.

\bibitem{propp1996exact}
{\sc James~Gary Propp and David~Bruce Wilson}, {\em Exact sampling with coupled
  {M}arkov chains and applications to statistical mechanics}, Random Structures
  and Algorithms, 9 (1996), pp.~223--252.

\bibitem{Rasmussen2003}
{\sc Carl~Edward Rasmussen}, {\em {Gaussian Processes to Speed up Hybrid Monte
  Carlo for Expensive Bayesian Integrals}}, in Bayesian Statistics 7, Oxford
  University Press, 2003, pp.~651--659.

\bibitem{roberts1998optimal}
{\sc Gareth~O Roberts and Jeffrey~S Rosenthal}, {\em Optimal scaling of
  discrete approximations to {L}angevin diffusions}, Journal of the Royal
  Statistical Society: Series B (Statistical Methodology), 60 (1998),
  pp.~255--268.

\bibitem{roberts2002one}
\leavevmode\vrule height 2pt depth -1.6pt width 23pt, {\em One-shot coupling
  for certain stochastic recursive sequences}, Stochastic processes and their
  applications, 99 (2002), pp.~195--208.

\bibitem{Rosenthal2000}
{\sc Jeffrey~S Rosenthal}, {\em {Parallel computing and Monte Carlo
  algorithms}}, Far east journal of theoretical statistics, 4 (2000),
  pp.~207--236.

\bibitem{Sacks1989}
{\sc Jerome Sacks, William~J. Welch, Toby~J. Mitchell, and Henry~P. Wynn}, {\em
  {Design and analysis of computer experiments}}, Statistical Science, 4
  (1989), pp.~409--423.

\bibitem{Santner2003}
{\sc Thomas~J. Santner, Brian~J. Williams, and William~I. Notz}, {\em {The
  Design and Analysis of Computer Experiments}}, Springer, New York, 2003.

\bibitem{ICESat}
{\sc B.E. Schutz, H.J. Zwally, C.A. Shuman, D.~Hancock, and J.P. DiMarzio},
  {\em Overview of the {ICES}at mission}, Geophysical Research Letters, 32
  (2005).

\bibitem{steinsaltz1999locally}
{\sc David Steinsaltz}, {\em Locally contractive iterated function systems},
  Annals of Probability,  (1999), pp.~1952--1979.

\bibitem{villani2008optimal}
{\sc C{\'e}dric Villani}, {\em Optimal transport: old and new}, vol.~338,
  Springer Science \& Business Media, 2008.

\bibitem{Wolff04}
{\sc U.~Wolff}, {\em Monte {C}arlo errors with less errors}, Comput. Phys.
  Commun., 156 (2004), pp.~143--153.

\end{thebibliography}

\end{document}